\documentclass[12pt, reqno]{amsart}
\usepackage{amsmath}
\usepackage{amssymb}
\usepackage{amsthm}

\usepackage{tabularx}
\usepackage{float}
\usepackage{color}

\usepackage[mathscr]{eucal}
\usepackage{mathrsfs}
\usepackage{psfrag}
\usepackage{fullpage}
\usepackage{color}
\usepackage{eucal}
\usepackage[dvips]{graphicx}
\usepackage{amsfonts}%
\setcounter{MaxMatrixCols}{30}
\providecommand{\U}[1]{\protect\rule{.1in}{.1in}}
\newtheorem{proposition}{Proposition}[section]
\newtheorem{theorem}[proposition]{Theorem}

\newtheorem{lemma}[proposition]{Lemma}

\newtheorem{remark}[proposition]{Remark}

\newtheorem{assumption}[proposition]{Assumption}

\newtheorem{con}[proposition]{Condition}

\numberwithin{equation}{section}
\numberwithin{proposition}{section}

\newcommand{\R}{\mathbb{R}}

\newcommand{\cC}{\mathcal{C}}
\newcommand{\Hh}{\mathcal{H}}

\newcommand{\N}{\mathbb{N}}

\newcommand{\nada}[1]{}
\newcommand{\imala}{ipMALA }
\newcommand{\diag}{\mathrm{diag}}

\numberwithin{equation}{section}
\numberwithin{proposition}{section}


\newcommand{\hf}{\hfill$\Box$}

\newcommand{\be}{\begin{equation}}
\newcommand{\ee}{\end{equation}}
\newcommand{\dd}{,{\dots},}
\newcommand{\norc}[1]{\| #1\|_{\mathcal{C}}}
\newcommand{\norcn}[1]{\| #1\|_{\mathcal{C}^N}}
\newcommand{\nor}[1]{\| #1\|}
\newcommand{\lanc}[2]{\langle #1, #2\rangle_{\mathcal{C}}}
\newcommand{\lancn}[2]{\langle #1, #2\rangle_{\mathcal{C}^N}}

\newcommand{\EE}{\mathbb{E}}
\newcommand{\lv}{\left\vert}
\newcommand{\rv}{\right\vert}
\newcommand{\ra}{\rightarrow}
\newcommand{\V}{\mathcal{V}}
\newcommand{\D}{\mathcal{D}}
\newcommand{\cN}{\mathcal{N}}
\newcommand{\ta}{\tilde{a}}
\newcommand{\tb}{\tilde{b}}

\newcommand{\tS}{\tilde{S}}

\newcommand{\bn}{^{(N)}}
\newcommand{\kn}{_k^N}
\newcommand{\Epin}{\mathbb{E}_{\pi^N}}
\newcommand{\less}{\lesssim}
\newcommand{\cH}{\mathcal{H}}
\newcommand{\cP}{\mathcal{P}}
\newcommand{\sub}{_{\mathbf{p}}}

\newcommand{\cCn}{(\mathcal{C}^N)}
\newcommand{\tSn}{(\tilde{S}^N)}
\newcommand{\id}{\mathrm{I}}

\begin{document}

\title{Optimal Scaling of the MALA algorithm with Irreversible Proposals for Gaussian targets}

\author{Michela Ottobre}
\address{Department of   Mathematics\\
Heriot Watt University,Edinburgh, EH14 4AS, Scotland }
 \email{ m.ottobre@hw.ac.uk}

\author{Natesh S. Pillai}
\address{Department of   Statistics\\
 Harvard University, Cambridge, MA, 02138-2901,USA }
 \email{ pillai@fas.harvard.edu}

\author{Konstantinos Spiliopoulos}
 \address{Department of  Mathematics and Statistics\\
 Boston University, Boston, MA, 02215, USA}
 \email{kspiliop@math.bu.edu}

\date{\today}

\begin{abstract}
It is well known in many settings that reversible Langevin diffusions in confining potentials converge to equilibrium exponentially fast.   Adding irreversible perturbations to the drift of  a Langevin diffusion that maintain the same invariant measure accelerates its convergence to stationarity.  Many existing works thus advocate the use of such non-reversible dynamics for sampling.
When implementing Markov Chain Monte Carlo algorithms (MCMC) using time discretisations of such Stochastic Differential Equations (SDEs), one can append the discretization with the usual Metropolis-Hastings accept-reject step and this is often done in practice because the accept--reject step eliminates bias. On the other hand, such a step  makes the resulting chain reversible. It is not known whether adding the accept-reject step preserves the faster mixing properties of the non-reversible dynamics. In this paper, we address this gap between theory and practice by analyzing the optimal scaling of MCMC algorithms constructed from proposal moves that are time-step Euler discretisations of an irreversible SDE, for high dimensional Gaussian target measures. We call the resulting algorithm the \imala, in comparison to the classical MALA algorithm (here {\em ip} is for irreversible proposal). In order to quantify how the cost of the algorithm scales with the dimension $N$, we prove invariance principles for the appropriately rescaled chain.   In contrast to the usual MALA algorithm, we show that there could be two regimes asymptotically: (i) a diffusive regime, as in the MALA algorithm and (ii) a ``fluid" regime where the limit is an ordinary differential equation. We provide concrete examples where the limit is a diffusion, as in the standard MALA, but with provably higher limiting acceptance probabilities. Numerical results are also given corroborating the theory.
 \end{abstract}

\maketitle

\textit{Keywords:} MALA Algorithm, Langevin Diffusions, Optimal Scaling, Non-reversible.

\section{Introduction}
In this paper, we analyze the scaling properties of high dimensional Markov Chain Monte Carlo (MCMC) algorithms constructed using non-reversible Langevin diffusions.
Consider a target measure
\be\label{tu}
\pi(dx)=\frac{1}{Z}e^{-{U(x)}}dx, \quad Z=\int_{\mathbb{R}^d} e^{-{U(x)}}dx.
\ee
It is known that, under mild assumptions on the potential $U(x)$,  the Langevin stochastic differential equation (SDE)
\begin{equation}\label{Eq:ReversibleLangevin}
dX_{t}=-\nabla U(X_{t})dt+\sqrt{2}dW_{t}
\end{equation}
has $\pi$ as its unique invariant measure and is $\pi$-ergodic. For appropriate test functions $f:\mathbb{R}^d \mapsto \mathbb{R}$, by the ergodic theorem, we have
\be \label{ergconv}
\frac{1}{t} \int_0^t f(X_s) ds \longrightarrow \int f(x) \pi(dx).
\ee
Thus the Langevin diffusion \eqref{tu} is a fundamental tool for sampling from the target measure $\pi$ or compute expectation of various functionals $\int f d\pi$, in view of \eqref{ergconv}. The Langevin SDE is time reversible -- its generator is a self-adjoint operator in $L_2(\pi)$. \par
The  drift of \eqref{Eq:ReversibleLangevin} can be modified without altering the invariant measure. Indeed, if  $\Gamma$ is a vector field such that  ${\rm div}( \Gamma e^{-U})=0$, then  diffusions of the form
\begin{equation}\label{Eq:IrreversibleLangevin}
dX_{t}=\left[-\nabla U(X_{t})+ \Gamma(X_{t})\right]dt+\sqrt{2}dW_{t},
\end{equation}
 also have $\pi$ as their invariant measure. The divergence free condition can be written as
\begin{equation} \label{eqn:divfree}
{\rm div} \Gamma =  \Gamma \nabla U.
\end{equation}
Observe that, for $\Gamma \neq 0$, the diffusions in \eqref{Eq:IrreversibleLangevin} are non-reversible. \par
\subsection{The problem}
It is known that adding a non-reversible component in a Langevin SDE could accelerate its convergence to stationarity. Indications of this phenomenon are the main results of \cite{HwangMaSheu2005,ReyBelletSpiliopoulos2014a,ReyBelletSpiliopoulos2014b}. The main result in \cite{HwangMaSheu2005} states that, among the family of non-reversible diffusions \eqref{Eq:IrreversibleLangevin}, the one with the smallest spectral gap corresponds to $\Gamma = 0$. Similar results were established from an asymptotic variance and large deviations point of view in \cite{ReyBelletSpiliopoulos2014a,ReyBelletSpiliopoulos2014b}. Broadly speaking, it is a well documented principle that non-reversible dynamics have better ergodic properties than their reversible counterparts. This observation has sparked a significant amount of research work in recent years and several papers have advocated the use of  non-reversible diffusions for sampling.

Numerical discretisations of \eqref{Eq:ReversibleLangevin} or \eqref{Eq:IrreversibleLangevin} do not necessarily inherit the ergodic properties of the continuous-time dynamics \cite{RT96}. In particular, discretised processes may not converge at all, or will have an invariant measure different from $\pi$.  To circumvent these  issues, in MCMC algorithms, practitioners often perform an additional Metropolis-Hastings  accept-reject step for proposals constructed from time discretisations of Langevin diffusions. For instance, the standard MALA algorithm is obtained when the discretised SDE is \eqref{Eq:ReversibleLangevin}  (see \textit{e.g.},  \cite{RobertsRosenthal1998} and \cite{Rossky}).
To the best of our knowledge, not much is known about whether making the chain reversible by adding the accept reject step preserves the faster mixing enjoyed by the non-reversible dynamics. In this article, we seek to address this important gap between theory and practice.

\subsection{Previous Work}
Many different approaches were pursued in recent works in order to exploit and analyse the beneficial effects of irreversibility in the algorithmic practice of MCMC methodology.
In particular: i)  Irreversible algorithms have been proposed and analyzed in
\cite{bierkens2015non,BouchardVollmerDoucet2016,diac:etal:2000, Ma, Krauth, Poncet} and references therein (on the matter see also \cite{Monmar,Ottirr} ); ii) algorithms that are obtained by discretising irreversible Markov processes in a way that the resulting Markov chain is still irreversible are studied in \cite{hor:91}, \cite{OPPS}; iii)  numerical algorithms that take advantage of the splitting of reversible-irreversible part of the equation are analyzed in \cite{LuSpiliopoulos2016, DuncanPavliotisZygalakis2016}. In addition, comparisons of MALA and of Langevin samplers without the accept-reject step have been performed in  \cite{DuncanLelievrePavliotis2016,DuncanPavliotisZygalakis2016}. In many cases of interest it has been observed that irreversible Langevin samplers have smaller mean square error when compared to the MALA algorithm, see \cite{DuncanPavliotisZygalakis2016}. The latter fact is related to the consideration that the variance reduction achieved by the irreversible Langevin sampler  can be  more significant than the the error due to the bias of the irreversible Langevin sampler.

\subsection{Our Contribution}
In this paper we take a different standpoint  and analyse the exact performance of a  Metropolis-Hastings algorithm, on certain Gaussian target densities, where the proposal is based on discretising (\ref{Eq:IrreversibleLangevin}) as opposed to (\ref{Eq:ReversibleLangevin}). We call this algorithm the {\em irreversible proposal} MALA or \imala\, for short. To quantify the efficiency of the \imala\, algorithm in high dimensions and compare it to the usual MALA algorithm, we study its optimal scaling properties and its limiting optimal acceptance probability \cite{RobertsRosenthal1998,RobertsRosenthal2001,PST}.
Optimal scaling aims to find the ``optimal size" of the local proposal variance as a function of the dimension.
The optimality criteria varies for different algorithms, but a natural choice  for algorithms that have a diffusion limit is the expected square jumping distance \cite{RobertsRosenthal2001}.

The basic mechanism for Metropolis-Hastings algorithms consists of employing a proposal transition density $q(x, y)$ in order to produce a reversible chain $\{x_k\}_{k=0}^{\infty}$ which has the target measure $\pi$ as invariant distribution. At step $k$ of the chain, a proposal move $y_{k+1}$ is generated by using $q(x,y)$, \textit{i.e.},
$y_{k+1} \sim q(x_k, \cdot)$. Then such a move is accepted with probability $\alpha(x_k, y_{k+1})$, where
\be\label{alpha1}
\alpha(x_k, y_{k+1})= \min\left\{1, \frac{\pi(y_{k+1}) q(y_{k+1},x_k)}{\pi(x_k) q(x_k,y_{k+1})}  \right\}\,.
\ee
The MALA algorithm is a Metropolis-Hastings algorithm with proposal move generated by a time-step discretisation of the Langevin equation \eqref{Eq:ReversibleLangevin}. The \imala\, algorithm we wish to analyze obtains proposals by discretising \eqref{Eq:IrreversibleLangevin}.  As explained before, any non-trivial $\Gamma$ satisfying the divergence free condition \eqref{eqn:divfree}
will preserve the invariant measure. A convenient choice is to pick $\Gamma$ such that
\be\label{meanconddd}
{\rm div} \Gamma =0 \,  \quad {\rm and} \quad  \Gamma \nabla U =0 \,.
\ee
A standard choice of $\Gamma(x)$ is
\be\label{antsymdr}
\Gamma(x)=S\nabla U(x),
\ee
 where $S$ is any antisymmetric matrix. A  more
elaborate discussion on other possible choices of $\Gamma(x)$ can be found in \cite{ReyBelletSpiliopoulos2014a}. The meaning of the conditions \eqref{meanconddd}
is straightforward: the flow generated by $\Gamma$ must preserve Lebesgue measure since it  is divergence-free; moreover,   the
micro-canonical measure  on the surfaces $\{U = z\}$ is preserved as well.

We make one further important assumption. For the rest of the paper,  we focus exclusively on Gaussian target measures. We believe most of our analysis should carry over to the important case in which the target measure has a Radon-Nikodym derivative with respect to a Gaussian measure using the methods in \cite{MattinglyPillaiStuart2011,PST}.

The main result of the paper is as follows. We consider Gaussian target measures  $\pi^N \sim \cN(0, \cC^N)$ on $\R^N$, where  $\cC^N:= diag\{\lambda_1^2 , ... , \lambda_N^2\}$ (see Section \ref{sec:2} for more details). Such a measure is clearly of the form \eqref{tu} for a quadratic potential $U$.
Therefore, in this case,  the general form of a Euler-discretisation of \eqref{Eq:IrreversibleLangevin} is given by
\begin{equation*}
y^{N}_{k+1}= x^{N}_k- \frac{\sigma_N^2}{2} x^{N}_{k} + \sigma_N^{\alpha}
\cC^N S^N x^{N}_{k}+ \sigma_N (\cC^N)^{1/2}z_{k+1}^N,
\end{equation*}
where $z^{N}_{k+1}\sim \cN(0, \mathrm{I}_N)$,  $\sigma_N=\ell/ N^{\gamma}$ and $\alpha>0$. The notation $S^N$ is used to stress that here $S^N$ is an $N \times N$ antisymmetric matrix.  The quantity $y^{N}_{k+1}$ is the proposal, and then a Markov chain is formed using the usual accept-reject mechanism (more details about the algorithm can be found in Section \ref{sec:3}). Let $x_k^N$ be the resulting \imala\, Markov chain. Interestingly, depending on whether $\alpha$ is bigger, equal or smaller than two, we will have different limiting behaviours. Broadly speaking, we show the following:
\begin{itemize}
\item if $\gamma< 1/6$ then the acceptance probability degenerates to zero exponentially quickly, and thus this case is not of practical interest.
\item if  $\alpha\geq 2$ then the optimal value of $\gamma$ is $\gamma=1/6$. We prove that the continuous interpolant of two subsequent steps of the MALA algorithm, see (\ref{continter}) for proper definition, has a diffusion limit.
     The cost of the algorithm is still $N^{1/3}$ like in the standard MALA case. However, if $S^N$ and $\alpha$ are chosen appropriately  (see specific examples in Sections \ref{S:NumericalApproxAccProb} and \ref{Eq:SimulationStudies}),  the limiting acceptance probability is higher. That is, the \imala will accept moves more frequently than MALA.
\item if $1\leq\alpha <2$ then take $\gamma\geq 1/6$. In this regime we  show that the cost of the algorithm is of the order $N^{\alpha\gamma}$, and one can choose $\alpha\gamma<1/3$. In addition, we  show that the continuous interpolant, see (\ref{continter}),  converges weakly to the solution of a deterministic ODE. It is also interesting  that the ODE that we get in the limit in this regime, can be related to Hamilton's ODE in HMC (see Theorem \ref{mainthm}) and we plan to investigate this in the future.
\end{itemize}

We would like to stress that in this paper we only study different scaling regimes when the algorithm is started {\em in stationarity}. In the case of the MALA algorithm it is a known fact that the optimal scaling out of stationarity differs from the optimal scaling in stationarity (the former being $O(N^{1/2})$ and the latter being $O(N^{1/3})$, see \cite{Christensenetal,jl1,jl2,kotmala}). It would be  relevant to address the same issue for the \imala  algorithm (especially in view of the fact that non-reversibility is known to speed up convergence,  although we do realise that the measure of efficiency we use here may well be unrelated to convergence rate) but we do not do it in this paper, as this would involve substantial further analysis.

The goal of this paper is to explore in a rigorous mathematical way what happens when a Metropolis-Hastings accept-reject step is applied to a proposal coming from discretisation of an irreversible Langevin diffusion. We find that there are different possible regimes and we characterize explicitly what is the limit of the continuous interplant of the chain as the number of steps goes to $\infty$, see Theorem \ref{mainthm}. We find that the irreversible perturbation matrix has non-trivial effects on the limiting dynamics (see Theorem \ref{mainthm}) and on the limiting average optimal acceptance probability, see (\ref{Eq:OptimalAcceptProb}) and Section \ref{S:NumericalApproxAccProb}. In terms of applications, the conclusion is that even though the introduction of the accept-reject step potentially offsets some of the advantages of irreversible perturbations \cite{HwangMaSheu2005,ReyBelletSpiliopoulos2014a,ReyBelletSpiliopoulos2014b}, some advantages may appear in certain cases. In particular, by appropriately choosing $S$ and $\alpha$ (e.g. $S=S_{1}$ and $\alpha>4$ as in Section  \ref{S:NumericalApproxAccProb}), we can obtain a diffusion limit but with limiting acceptance probabilities higher than that of MALA. In addition, even in the fluid regime, where the limit of the algorithm is an ODE, the acceptance probability can be the same as that of MALA, but the algorithm takes, in stationarity, $N^{\alpha\gamma}$ steps to explore the state space with $\alpha\gamma<1/3$ as opposed to $N^{1/3}$ steps for standard MALA.

Lastly, it is possible that for multi-modal targets the use of irreversible proposals may be beneficial as this is the case in the absence of the accept-reject step, see \cite{ReyBelletSpiliopoulos2014b};  an analysis of this fact is though beyond the scope of this paper.
The present paper certainly invites further research on this topic in different directions, including multi-modal targets, out of stationarity analysis and computational considerations of implementation issues.

The paper is organized as follows. In  Section \ref{sec:2} we introduce the notation used throughout the paper. In Section \ref{sec:3} we describe and motivate the algorithm that we will examine. Section \ref{sec:4}  contains the rigorous statement of our main result and its main implications. In Section \ref{sec:5} we give a detailed heuristic argument to explain how the main result is obtained and why it should hold. In Section \ref{S:NumericalApproxAccProb} we present some examples of potential choices for the antisymmetric matrix and the computation of the corresponding limiting acceptance probabilities. Section \ref{Eq:SimulationStudies} contains extensive simulation studies that demonstrate the theoretical results.  Rigorous proofs  are relegated to the Appendix. 

\section{Preliminaries and Notation}\label{sec:2}

Let $\left( \Hh, \langle\cdot, \cdot \rangle, \|\cdot\|\right)$
denote an infinite dimensional separable Hilbert space  with
the canonical norm derived from the inner-product.
Let $\cC$ be a  positive, trace class operator on $\Hh$
and $\{\phi_j,\lambda^2_j\}_{j \geq 1}$ be the eigenfunctions
and eigenvalues of $\cC$ respectively, so that
\be\label{for1}
\cC \varphi_j = \lambda_j^2 \varphi_j, \qquad \sum_{j} \lambda_j^2 < \infty.
\ee
The eigenvalues $\lambda_j^2$ are non-decreasing.
We assume a normalization under which $\{\phi_j\}_{j \geq 1}$
forms a complete orthonormal basis in $\Hh$.
Let $\pi$ be a Gaussian probability measure on $\Hh$ with covariance operator $\cC$, that is,
\begin{equation*}
\pi \sim \cN(0, \cC).
\end{equation*}
If $X^N$ is the finite dimensional space
\begin{equation*}
\Hh \supset X^N:=\textrm{span}\{\varphi_j\}_{j=1}^N
 \end{equation*}
spanned by the first $N$ eigenvectors of the covariance operator (notice that the space $X^N$ is isomorphic to $\R^N$), then for any fixed $N \in \N$ and $x \in \cH$ we denote by $\mathcal{P}^N(x)$ the projection of $x$ on $X^N$.
Moreover, the finite dimensional projection
 $\pi^N$ of the measure $\pi$ on $X^N$ is given by the  Gaussian measure
$$
\pi^N \sim \cN(0, \cC^N)
$$
where $\cC^N:= \diag\{\lambda_1^2 \dd \lambda_N^2\}$ (or, more precisely, $\cC^N:=\cP^N \circ \cC \circ \cP^N $).
Thus we have,
\begin{equation}\label{prodgas}
\pi^N:=\pi(x^N)= \pi(x^{1,N}\dd x^{N,N} ) = \frac{1}{(\sqrt{2 \pi})^N} \prod_{i=1}^N \frac{1}{\lambda_i}e^{- \frac{\lv x^{i,N}\rv^2}{2\lambda_i^2}}.
\end{equation}

Let  $\tS: \cH \ra \cH$ be a bounded linear operator. Thus there exists a constant $\kappa>0$ such that
\be\label{contCS}
\| \tS x\| \leq \kappa \|x\|, \qquad x \in \cH .
\ee
For any $N \in \N$ we can consider the projected operator $\tS^N$,   defined as follows:
$$
\tS^N x:= (\cP^N \circ \tS \circ \cP^N)x.
$$
The operator $\tS^N$ is also bounded on $\cH$. Since $X^N$ is isomorphic to $\R^N$, $\tS^N$ can be represented by an $N\times N$ matrix.  Throughout the paper we require the following:  $\tS$ is such that for any $N \in \N$, the matrix $\tS^N$ can be expressed  as the product of a symmetric matrix, namely $\cC^N$, and an antisymmetric matrix, $S^N$:
\be\label{seq}
\tS^N= \cC^N \, S^N, \quad \mbox{for every } N \in \N.
\ee

Throughout the paper we will use the following notation:
\begin{itemize}
 \item $x$ and $y$ are elements of the Hilbert space $\Hh$;
\item the letter $N$ is reserved to denote the  dimensionality of the space $X^N$ where
the target measure $\pi^N$ is supported;
\item $x^N$ is an element of $X^N$ { $\cong \R^N$} (similarly for {$y^N$ and the noise $\xi^N$}); the $j$-th component of the $N$-dimensional vector $x^N$ (in the basis $\{\varphi_j\}_{j=1}^N$)  is denoted by $x^{j,N}$.
\end{itemize}

We analyse Markov chains evolving in $\R^N$. Because the dimensionality of the space in which the chain evolves will be a key fact, we want to keep track of both the dimension $N$ and the step $k$ of the chain. Therefore,
\begin{itemize}
\item $x^N_k$ will denote the $k$-th step of the chain $\{x_k^N\}_{k \in \N}$ evolving in $\R^N$;
\item compatibly with the notation set above, $x^{i,N}_k$ is the $i$-th component of the vector $x^N_k \in \R^N$; \textit{i.e.}, $x^{i,N}_k= \langle x^N_k, \varphi_i\rangle$;
\item two (double) sequences of real numbers $\{A^{N}_k\}$ and $\{B^{N}_k\}$ satisfy $A^{N}_k \lesssim B^{N}_k$
if there exists a constant $K>0$ (independent of $N$ and $k$) such that
$$A^{N}_k\leq K B^{N}_k,$$
for all $N$ and $k$ such that $\{A^{N}_k\}$ and $\{B^{N}_k\}$ are defined. The notation $\lesssim$ is used for functions and random functions as well, with the constant $K$ being independent of the argument of the function and of the randomness.
\end{itemize}
As we have already mentioned,  $\|\cdot \|$ is the norm on $\Hh$, namely
$$
\|x\|^2:=\sum_{i=1}^{\infty}\langle x, \varphi_i\rangle^2.
$$
 With abuse of notation, we will also write
$$
\|x^N\|^2= \sum_{i=1}^N \lv x^{i,N}\rv^2 \,.
$$
We will also use the weighted norm $\|\cdot \|_{\cC}$, defined as follows:
$$
\| x\|^2_{\cC}:= \langle x, \cC^{-1} x \rangle = \sum_{i=1}^{\infty} \frac{\lv \langle x, \varphi_i\rangle \rv^2}{\lambda_i^2},
$$
for all $x \in \Hh$ such that the above series is convergent;  analogously,
$$
\| x^N\|^2_{\cC^N}:= \sum_{i=1}^N \frac{\lv x^{i,N} \rv^2}{\lambda_i^2} \,.
$$
If $A^N$ is an $N \times N$ matrix, $A^N= \{A^N_{i,j}\}$, we define
\begin{align}
\V_A^N &:= \sum_{i=1}^N {\lambda_i^2} \sum_{j=1}^N \lv A_{i,j}^N\rv^2
 \lambda_j^2 \nonumber \\
& = \EE_{\pi^N} \|(\cC^N)^{1/2} A^Nx^N\|^2 \,
= \EE_{\pi^N} \lv \langle (\cC^N)^{1/2}z^N, A^N x^N \rangle \rv^2 \label{asvar}
\end{align}
In the above $z^N \sim \cN(0, \id_N)$, $x^N \sim \pi^N$ and  $\Epin$ denotes expectation with respect to all the sources of noise contained in the integrand. We will often also write $\mathbb{E}_k$ and $\mathbb{E}_x$, to mean
$$
\mathbb{E}_k = \mathbb{E} [\cdot \vert x_k], \quad \mathbb{E}_x= \mathbb{E}[\cdot \vert x_k=x].
$$
Finally, many of the objects of interest in this paper depend on the matrix $S^N$ and on the parameters $\alpha$ and $\gamma$. When we want to stress such a dependence, we will add a subscript ${\sub}$, see for example the notation for the drift $d\sub $ in Theorem \ref{mainthm}.

\section{The \imala \, algorithm}\label{sec:3}
As we have already mentioned, the classical MALA algorithm is a Metropolis-Hastings algorithm which employs a proposal that results from a one-step discretisation of the Langevin dynamics \eqref{Eq:ReversibleLangevin}.
This is motivated by the fact that the dynamics \eqref{Eq:ReversibleLangevin} is ergodic   with unique invariant measure given by $\pi$ \eqref{tu} and can therefore be used to sample from $\pi$. The Langevin dynamics that samples from our target of interest,  the Gaussian measure $\pi^N$ in \eqref{prodgas}, reads as follows:
$$
dX_t = -(\cC^N)^{-1}X_t \,dt +  \sqrt{2} dW_t, \qquad X_t \in X^N,
$$
where $W_{t}$ is a N-dimensional standard Wiener process. If $\Delta$ is any positive definite, $N$-dimensional symmetric matrix, then  the equation
$$
dX_t = - \Delta (\cC^N)^{-1}X_t \,dt +  \sqrt{2\Delta} dW_t
$$
is still ergodic with invariant invariant measure $\pi^{N}$ in \eqref{prodgas}. Now notice that if $\cC$ is a trace class operator, then $\cC^{-1}$ is unbounded, so in the limit as $N\ra \infty$ any of the above two dynamics (and their discretisations) would lead to numerical instabilities. To avoid the appearance of unbounded operators, we can choose $\Delta=\cC^N/2$ and therefore consider the SDE
$$
dX_t = - \frac{1}{2} X_t \,dt+  (\cC^N)^{1/2} dW_t.
$$
Discretizing the above and using such a discretisation as a Metropolis-Hastings proposal would result in a well-posed MALA algorithm to sample from \eqref{prodgas}. However, as explained in the Introduction, here we want to analyze the MALA algorithm with irreversible proposal. As in \eqref{antsymdr}, we next consider the non-reversible SDE:
$$
dX_t = \left(- \frac{1}{2} X_t + S^N (\cC^N)^{-1}X_t \right) \,dt +(\cC^N)^{1/2} dW_t,
$$
where $S^N$ is any $N\times N$ antisymmetric matrix.
Again to avoid the appearance of unbounded operators, we modify the irreversible part of drift term and, finally, obtain the dynamics
\be\label{cdtbd}
dX_t = \left(- \frac{1}{2} X_t + \cC^N S^N X_t \right) \,dt +(\cC^N)^{1/2} dW_t.
\ee
Notice that, for any $x^N \in X^N$,
\be\label{stillinvar}
\nabla \cdot \left( (\cC^N S^N x^N) \, \pi^N \right)= \mathrm{Trace}(\cC^N S^N) \pi^N - \langle\cC^N S^N x^N,(\cC^N)^{-1}x^N \rangle \pi^N =0,
\ee
having used the antisymmetry of $S^N$ and the symmetry of $\cC^N$. Therefore, $\pi^N$ is invariant for the dynamics \eqref{cdtbd}. This justifies  using the Metropolis-Hastings proposal \eqref{propprodgas} below, which is a one-step Euler discretisation of the SDE \eqref{cdtbd}.

We now describe the \imala\, algorithm. If the chain is in $x_k^N$ at step $k$, the \imala\, algorithm has the proposal move:
\be\label{propprodgas}
y_{k+1}^N= x_k^N- \frac{\sigma_N^2}{2} x_k^N+ \sigma_N^{\alpha}
\cC^N S^N x_k^N+ \sigma_N (\cC^N)^{1/2}z_{k+1}^N
\ee
where
\be\label{defsigma}
\sigma_N = \frac{\ell}{N^{\gamma}}, \qquad \ell, \gamma>0
\ee
and $\alpha>0$. The proposal \eqref{propprodgas} is a generalised discretisation of \eqref{cdtbd}. We may indeed choose $\alpha$ so that, asymptotically, relative to the reversible drift and diffusion, the non-reversible drift dominates ($1\leq \alpha < 2$), vanishes ($\alpha>2$) or is balanced ($\alpha=2$). This will result in different scaling limits.  The choice of the parameters $\alpha, \gamma$ and $\ell$ will be further  discussed  in Section \ref{sec:4} below.   $S^N$ can be any antisymmetric $N \times N$ matrix and $z^N_{k+1}=(z^{1,N}_{k+1} \dd z^{N,N}_{k+1})$ is a vector of i.i.d standard Gaussians.   With this proposal, the acceptance probability is
\be \label{defaccprob}
\beta^N(x^N,y^N)= 1 \wedge e^{Q^N(x^N, y^N)}
\ee
where
\begin{align*}
Q^N(x^N,y^N)&= - \frac{\sigma_N^2}{8} \left( \norcn{y^N}^2- \norcn{x^N}^2\right)+\frac{1}{2}
\sigma_N^{2\alpha-2}\left( \norcn{ \cC^N S^N x^N}^2- \norcn{\cC^N S^N y^N}^2\right)\\
&+ 2\sigma_N^{\alpha-2}\langle S^Ny^N, x^N \rangle.
\end{align*}
The above expression for $Q^N$ is obtained with straightforward calculations, after observing that the proposal kernel $q(x^N,y^N)$ implied by  \eqref{propprodgas} is such that
$$
q(x^N,y^N)\propto \exp\left\{  - \frac{1}{2 \sigma_N^2}\norcn{ y^N-x^N+ \frac{\sigma_N^2}{2} x^N- \sigma_N^{\alpha} \cC^N S^N x^N}^2  \right\}.
$$
If $\tilde{\beta}^N \sim \mathrm{Bernoulli}(\beta^N(x^N,y^N))$, then the  Metropolis-Hastings chain  $\{x_k^N\}_{\{k \in \N\}}$ resulting from using the proposal
\eqref{propprodgas} can be written as follows
\be\label{chain}
x_{k+1}^N= \tilde{\beta}^N y_{k+1}^N+ (1- \tilde{\beta}^N)x_k^N = x_k^N+ \tilde{\beta}^N (y_{k+1}^N - x_k^N).
\ee

\section{Main results and Implications}\label{sec:4}
To understand the behaviour of the chain \eqref{chain}, we start by decomposing  it into its drift and martingale part; that is, let $\zeta>0$ and write
\be\label{dr-martal2}
x_{k+1}^N=x_k^N+ \frac{1}{N^{\zeta\gamma}} d\sub^N(x_k^N)+ \frac{1}{N^{\zeta\gamma/2}}M_{k, \mathbf{p}}^N
\ee
where the approximate drift $d\sub^N$ is defined as
\be\label{apprdrift}
d^N\sub(x_k^N):=N^{\zeta\gamma}\EE_k(x_{k+1}^N - x_k^N)
\ee
and the approximate diffusion $M_{k,\mathbf{p}}^N$ is given by
\begin{equation*}
M_{k, \mathbf{p}}^N:= N^{\zeta\gamma/2}\left[x_{k+1}^N - x_k^N - \EE_k(x_{k+1}^N - x_k^N) \right].
\end{equation*}
Using \eqref{propprodgas}-\eqref{defsigma} and \eqref{chain}, we  rewrite the approximate drift \eqref{apprdrift}  as follows:
\begin{align}
d^N\sub (x_k^N)&= N^{\zeta\gamma}\EE_k \left[ (1 \wedge e^{Q^N}) \left(- \frac{\ell^2}{2N^{2\gamma}}  x_k^N + \frac{\ell^{\alpha}}{N^{\alpha\gamma}} \cC^N S^N x_k^N\right) \right] \label{apprdrift1}\\
 &+ N^{\zeta \gamma}\EE_k \left[(1 \wedge e^{Q^N})\frac{\ell}{N^{\gamma}}(\cC^N)^{1/2}z_{k+1}^N \right] \,. \label{apprdrift2}
\end{align}
 Looking at \eqref{apprdrift1} it is clear that we need to examine two different cases, namely
\begin{description}
\item[i) Diffusive regime ]$2\gamma \leq \alpha \gamma$, \textit{i.e.}, $\alpha\geq2$
\item [ii) Fluid regime ]$2\gamma > \alpha \gamma$, \textit{i.e.}, $\alpha<2$.
\end{description}
The names of the above regimes will be clear after the statement of our main results,  Theorem \ref{thmspec} and Theorem \ref{mainthm}, see also Remark \ref{Rem:practimpl}.
We will show that, in order for the algorithm to have a well defined non-trivial limit, in the case i)  we should choose $\zeta=2$, whereas in case ii) we should choose $\zeta=\alpha$. For this reason, it is intended from now on that
\be\label{zeta=zetaalpha}
\zeta=\alpha \quad \mbox{if } \alpha <2  \qquad \mbox{and} \qquad \zeta =2 \quad \mbox{if } \alpha\geq 2.
\ee
With this observation in mind, we introduce  the continuous  interpolant of the chain $\{x_k^N\}$:
\be\label{continter}
x\bn (t)=(N^{\zeta\gamma}t-k)x_{k+1}^{N}+(k+1-N^{\zeta\gamma}t)x^{N}_k, \qquad t_k\leq t< t_{k+1},
\ee
where  $ t_k=k/N^{\zeta\gamma}$.

 We now come to state the main results of this paper, Theorem \ref{thmspec} and Theorem \ref{mainthm}.  Theorem \ref{mainthm} is the most general result. The assumptions under which Theorem \ref{mainthm} holds are a bit involved; we detail and motivate them in  Section \ref{SS:HeurAcceptProb}. In this section we first state Theorem \ref{thmspec}, which is just Theorem \ref{mainthm}, adapted to the case in which the sequence of matrices $S^N$ is chosen from a specific class. Namely,  we restrict our attention to the case in which each of the matrices  $S^{N}$ is in Jordan block  form:  we assume that   $S_{i,j}^{N}=J_{i}$ and $S_{j,i}^{N}=-J_{i}$ for $j=i+1$ and $i=1,3,5,7,\cdots, N-1$ and $S_{i,j}^{N}=0$ otherwise; here $\{J_i\}_i$ is an aribitrary sequence of real numbers.  To be more clear, when we refer to matrices $S^N$ in Jordan block form, we are referring to matrices of the form
\be\label{graphjordan}
\left( \begin{array}{cccccc}
0 & J_1 & 0 & 0 & 0 & 0 \\
-J_1 & 0 & 0 & 0 & 0 & 0\\
0 & 0 & 0 & J_2 &0 & 0\\
0 & 0 & -J_2 & 0 & 0 & 0 \\
0 & 0 & 0 & 0 & \ddots & \\
\end{array}
\right) \,.
\ee
 In this case, we can carry out explicit calculations,  construct examples  and demonstrate our results.

\begin{theorem}\label{thmspec}
Assume that the family of  anti-symmetric matrices $S^N$ is in Jordan block form (i.e. of the form \ref{graphjordan} as discussed above) and $x_0\sim \pi$.  Assume the constant
\be\label{finc1spec}
c_{1}=\lim_{N \rightarrow \infty} \frac{\Epin\norcn{\tS^N x^N}^2}{N^{2(\alpha-1)\gamma}} \quad \mbox { is finite}.
\ee
 Then, as $N \ra \infty$, the continuous interpolant $x\bn (t)$ of the chain  $\{x_k^N\}$ (defined in \eqref{continter}-\eqref{zeta=zetaalpha} and \eqref{chain}, respectively) started from the state $x_0^N= \cP^N(x_0)$, converges
weakly in $C([0,T];\mathcal{H})$ to the solution of the following equation
\be\label{generalSDE-ODE}
dx(t) = d\sub (x(t)) dt + D\sub \,  dW^{\cC}_t, \quad x(t) \in \cH,
\ee
where $W^{\cC}_t$ is a $\cH$-valued Brownian motion with covariance $\cC$. The drift coefficient $d\sub (x): \cH \ra \cH$ and the diffusion constant $D\sub\geq 0$ are defined as follows:

\be\label{drvc}
d\sub (x) : =\left\{
\begin{array}{l l }
- \frac{\ell^2}{2} h\sub^J x  & \mbox{if } \gamma=1/6 \mbox{ and } \alpha \geq 2\\
& \\
 0    &  \mbox{if } \gamma \geq 1/6 \mbox{ and } 1 \leq \alpha < 2 ,
\end{array}
\right.
\ee
 and
\be
D\sub : =\left\{
\begin{array}{l l}
\ell \sqrt{ h\sub^J} & \mbox{if } \gamma=1/6 \mbox{ and } \alpha \geq 2\\
 0  &  \mbox{if } \gamma \geq 1/6 \mbox{ and } 1 \leq \alpha < 2 \,.
\end{array}
\right.
\ee
The real constant $h\sub^J$ is defined as
\begin{align}\label{Eq:OptimalAcceptProbThm}
h\sub^J&:= 2\Phi\left(\frac{-\ell^{6}/32-a}{\sqrt{(\ell^{6}/16)+2a}}\right),
\end{align}
where $a=  2 \ell^{2(\alpha-1)}c_1$ and $\Phi$ is the standard Gaussian cumulative distribution function. Lastly, as measured by the number of steps, in stationarity, to explore the state space the cost of the algorithm in the diffusion case, $\alpha\geq 2$, is $N^{1/3}$ whereas in the fluid case, $1 \leq \alpha<2$, it is $N^{\alpha\gamma}$.
\end{theorem}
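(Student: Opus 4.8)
The plan is to run the standard optimal-scaling machinery for invariance principles, in the spirit of \cite{RobertsRosenthal1998,MattinglyPillaiStuart2011,PST}, on the drift--martingale decomposition \eqref{dr-martal2} with $\zeta$ fixed as in \eqref{zeta=zetaalpha}. Concretely I would establish: (i) convergence of the approximate drift, $d^N\sub(x_k^N)\to d\sub(x_k^N)$ in $\cH$, in an $L^1(\pi^N)$ sense uniformly for $k\le TN^{\zeta\gamma}$; (ii) convergence of the tensor quadratic variation, $N^{\zeta\gamma}\EE_k[M_{k,\mathbf{p}}^N\otimes M_{k,\mathbf{p}}^N]\to D\sub^2\cC$; (iii) a conditional Lindeberg / negligible-jumps bound; and (iv) tightness of $\{x\bn(\cdot)\}$ in $C([0,T];\cH)$; and then identify every weak limit point as the unique solution of the linear equation \eqref{generalSDE-ODE} through its martingale problem. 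Since the chain starts in stationarity, $x_k^N\sim\pi^N$ for every $k$, so it suffices to prove the one-step estimates (i)--(iii) with $x_k^N$ replaced by a single draw from $\pi^N$.

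\emph{Law of the log-acceptance ratio.} The core input is a CLT for $Q^N(x^N,y^N)$ with $x^N\sim\pi^N$ and $z^N\sim\cN(0,\id_N)$. Because $\pi^N$ is the product measure \eqref{prodgas} and, in Jordan block form \eqref{graphjordan}, $S^N$ is block diagonal with $2\times 2$ blocks, $Q^N$ decomposes as a sum over the decoupled coordinate pairs $(2i-1,2i)$ of essentially i.i.d.\ summands, to which a Lyapunov CLT applies — this is exactly what makes the Jordan-block case tractable, whereas the general Theorem \ref{mainthm} requires the extra structural hypotheses of Section \ref{SS:HeurAcceptProb}. Using the antisymmetry of $S^N$ (in particular $\langle S^Nx^N,x^N\rangle=0$, which removes the would-be $O(\sigma_N^{\alpha-2})$ deterministic term) and Taylor expanding around $y^N=x^N$, one finds: when $\gamma=1/6$ the reversible part of $Q^N$ converges to $\cN(-\ell^6/32,\ \ell^6/16)$, as for classical MALA; the irreversible part, whose fluctuations are governed by $\sigma_N^{2(\alpha-1)}\norcn{\tS^Nx^N}^2$, converges to $\cN(-a,2a)$ with $a=2\ell^{2(\alpha-1)}c_1$, where $c_1$ is the finite limit assumed in \eqref{finc1spec}; the two parts are asymptotically independent, so $Q^N\Rightarrow\cN(-m,2m)$ with $m=\ell^6/32+a$ (only the irreversible part surviving when $1\le\alpha<2$ and $\gamma>1/6$). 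In every case the fluctuation--dissipation identity mean $=-\tfrac12$(variance) holds; hence, if $Z\sim\cN(-m,2m)$ then $\EE[1\wedge e^Z]=2\Phi(-\sqrt{m/2})$, which gives $\EE_k[\beta^N]\to h\sub^J$ with $h\sub^J$ as in \eqref{Eq:OptimalAcceptProbThm}. The same expansion also yields a quantitative concentration of the conditional acceptance probability $\EE_k[\beta^N]$, for $x_k^N$ fixed, around the deterministic constant $h\sub^J$.

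\emph{Drift and the cancellation.} In \eqref{apprdrift1} the prefactor $N^{\zeta\gamma}\bigl(-\tfrac{\ell^2}{2N^{2\gamma}}x_k^N+\tfrac{\ell^\alpha}{N^{\alpha\gamma}}\tS^Nx_k^N\bigr)$ is exactly $O(1)$ in $\cH$ for the choices $\zeta=2,\ \gamma=1/6$ (case i) and $\zeta=\alpha$ (case ii) — which is precisely what forces \eqref{zeta=zetaalpha}; using the concentration of $\EE_k[\beta^N]$ to decouple $1\wedge e^{Q^N}$ from $x_k^N$, \eqref{apprdrift1} contributes $-\tfrac{\ell^2}{2}h\sub^Jx$ in case i (its irreversible term being negligible unless $\alpha=2$, where it adds $\ell^2 h\sub^J\tS x$) and $\ell^\alpha h\sub^J\tS x$ in case ii (its reversible term being $O(N^{(\alpha-2)\gamma})$). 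The term \eqref{apprdrift2} is treated by Gaussian integration by parts, $\EE_k[(1\wedge e^{Q^N})z^{i,N}_{k+1}]=\EE_k[\partial_{z^{i,N}}(1\wedge e^{Q^N})]$: since $\partial_{z^{i,N}}Q^N$ contains the term $-2\sigma_N^{\alpha-1}\lambda_i(S^Nx_k^N)_i$ coming from $2\sigma_N^{\alpha-2}\langle S^Ny^N,x^N\rangle$, and $\EE_k[e^{Q^N}\one_{\{Q^N<0\}}]\to h\sub^J/2$, the contribution of \eqref{apprdrift2} is precisely $-\ell^\alpha h\sub^J\tS x$ in case ii and $-\ell^2 h\sub^J\tS x$ in case i when $\alpha=2$, which cancels the corresponding irreversible terms of \eqref{apprdrift1}; the reversible integration-by-parts contributions are $O(N^{-1/3})$ and disappear. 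This is exactly why $d\sub$ carries no dependence on $\tS$, equalling $-\tfrac{\ell^2}{2}h\sub^Jx$ for $\alpha\ge2$ and $0$ for $1\le\alpha<2$. The technical crux is to keep every discarded term negligible after multiplication by $N^{\zeta\gamma}$, which relies on moment bounds for the increments and on the covariance estimate \eqref{asvar}.

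\emph{Quadratic variation, tightness, identification, cost.} The leading term of $N^{\zeta\gamma}\EE_k[(x_{k+1}^N-x_k^N)^{\otimes 2}]$ equals $N^{\zeta\gamma}\sigma_N^2\EE_k[\beta^N]\cC^N=\ell^2 h\sub^J N^{(\zeta-2)\gamma}\cC^N$, which tends to $\ell^2 h\sub^J\cC=D\sub^2\cC$ when $\zeta=2$ and to $0$ when $\zeta=\alpha<2$; the drift cross-terms are of lower order, and the correlation of $\beta^N$ with $z^N(z^N)^\top$ is handled by the same integration-by-parts bookkeeping, giving (ii). Tightness in $C([0,T];\cH)$ follows from the uniform increment bound $\EE_k\|x_{k+1}^N-x_k^N\|^{2p}\lesssim\sigma_N^{2p}$ (valid since $0\le\tilde\beta^N\le1$ and $z^N$ is Gaussian), the drift bound of (i), and the trace-class property $\sum_j\lambda_j^2<\infty$ to control coordinate tails uniformly in $N$, via a Kolmogorov/Aldous-type criterion as in \cite{MattinglyPillaiStuart2011}. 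With (i)--(iv), any weak limit point solves the martingale problem for \eqref{generalSDE-ODE}, which — being a linear (Ornstein--Uhlenbeck, or, in the fluid case, trivial) equation with bounded coefficients — has a unique weak solution, so the whole family $x\bn(\cdot)$ converges. Finally, the cost statement is immediate from the time change $t_k=k/N^{\zeta\gamma}$ in \eqref{continter}: one $O(1)$ unit of limit time corresponds to $N^{\zeta\gamma}$ steps of $\{x_k^N\}$, i.e.\ $N^{2\gamma}=N^{1/3}$ steps when $\gamma=1/6$ and $\alpha\ge2$ (diffusive regime) and $N^{\alpha\gamma}$ steps when $1\le\alpha<2$ (fluid regime). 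The main obstacle throughout is the second step: proving the CLT for $Q^N$ with its explicit Gaussian limit while simultaneously establishing the quantitative concentration of $\EE_k[\beta^N]$ needed to decouple the acceptance indicator from both $x_k^N$ and the noise — this is the precise point where the irreversible matrix enters the limiting dynamics, via $c_1$.
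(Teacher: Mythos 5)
Your proposal is correct in substance and reaches the right limit through the same underlying mechanism as the paper, but it is packaged differently, so a brief comparison is worth making. The paper does not prove Theorem \ref{thmspec} from scratch: it observes (Remark \ref{rem:compare}) that for Jordan-block $S^N$ the hypotheses of the general Theorem \ref{mainthm} hold with $c_2=c_3=0$, hence $b=2a$, $\tau\sub=0$ and $h\sub=h\sub^J$, and then proves Theorem \ref{mainthm} by pushing the drift--martingale decomposition through the It\^o map $\mathcal{I}$ of \eqref{Eq:LimitProcess}: by the continuous mapping theorem only weak convergence of the noise process $\hat{w}\sub^N$ is needed, which is obtained from the $L^1(\pi^N)$ drift estimates of Lemma \ref{lemma1} and Berger's invariance principle in Lemma \ref{lemma2} --- so no separate tightness argument or martingale-problem uniqueness step for $x\bn$ is required, whereas your route needs the Kolmogorov/Aldous tightness bound and uniqueness for the limiting (linear) martingale problem, both of which are unproblematic here. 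Your treatment of the crucial noise--acceptance correlation term \eqref{apprdrift2} by Gaussian integration by parts, $\EE_k[(1\wedge e^{Q^N})z^{i,N}]=\EE_k[e^{Q^N}\mathbf{1}_{\{Q^N<0\}}\partial_{z^{i,N}}Q^N]$, is an equivalent substitute for the paper's exact conditional-Gaussian formula (Lemma \ref{lemgaussmin3} inside Lemma \ref{lemmapreliminaries}); both isolate the contribution $-2\nu\sub\ell^{\alpha}\tS x$ of \eqref{bfh}, and your identity $\nu\sub=h\sub^J/2$ (valid precisely because $b=2a$ in the Jordan case, i.e.\ the mean $=-\tfrac12$\,variance relation) is exactly the paper's statement $\tau\sub=h\sub-2\nu\sub=0$ in \eqref{mhs1}, which is why the $\tS$-dependence cancels from the limit. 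Your direct block-wise CLT for $Q^N$, exploiting the decoupled $2\times2$ structure of \eqref{graphjordan}, is a more elementary replacement for the paper's Condition \ref{Ass1}/Dvoretzky argument and is the honest reason the Jordan case avoids the extra hypotheses; what the paper's longer route buys is the general Theorem \ref{mainthm} (nonzero $c_2,c_3$, $b\neq 2a$, nonvanishing $\tau\sub$), of which Theorem \ref{thmspec} is then an immediate corollary.
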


We now move on to stating Theorem \ref{mainthm}.  The precise statement of the assumptions of the theorem is deferred to  Section \ref{sec:5}. After stating Theorem \ref{mainthm},  we first compare it with Theorem \ref{thmspec}  (Remark \ref{rem:compare}) and then  make several observations regarding practical implications of the statement (Remark \ref{Rem:practimpl}). We stress that  the operator $\tilde{S}$ appearing in the statement of the theorem is as in Section \ref{sec:2} (see \eqref{contCS}-\eqref{seq}). In particular, the projected operator $\tilde{S}^N$ is, for every $N$, the product of the matrix $\cC^N$ and of a generic  antisymmetric matrix $S^N$,  appearing in the proposal \eqref{propprodgas}.

\begin{theorem}\label{mainthm}
Let Assumption \ref{AssCLT}, Assumption \ref{Ass2}, Condition \ref{Ass1} and Assumption \ref{extrassT2} hold and let $x_0\sim \pi$.  Then, as $N \ra \infty$, the continuous interpolant $x\bn (t)$ of the chain  $\{x_k^N\}$ (defined in \eqref{continter}-\eqref{zeta=zetaalpha} and \eqref{chain}, respectively) started from the state $x_0^N= \cP^N(x_0)$, converges
weakly in $C([0,T];\mathcal{H})$ to the solution of the following equation
\be\label{generalSDE-ODE}
dx(t) = d\sub (x(t)) dt + D\sub \,  dW^{\cC}_t, \quad x(t) \in \cH,
\ee
where $W^{\cC}_t$ is a $\cH$-valued Brownian motion with covariance $\cC$. In addition,   the drift coefficient $d\sub (x): \cH \ra \cH$ and the diffusion constant $D\sub\geq 0$ are defined as follows:
\be\label{drvc}
d\sub (x) : =\left\{
\begin{array}{l l }
- \frac{\ell^2}{2} h\sub x  & \mbox{if } \gamma=1/6 \mbox{ and } \alpha > 2\\
& \\
 - \frac{\ell^2}{2} h\sub x + \tau\sub \ell^{2} \tS x
& \mbox{if } \gamma=1/6 \mbox{ and } \alpha = 2 \\
& \\
 \tau\sub \ell^{\alpha} \tS x    &  \mbox{if } \gamma \geq 1/6 \mbox{ and } 1 \leq \alpha < 2 ,
\end{array}
\right.
\ee
 and
\be
D\sub : =\left\{
\begin{array}{l l}
\ell \sqrt{ h\sub} & \mbox{if } \gamma=1/6 \mbox{ and } \alpha \geq 2\\
 0  &  \mbox{if } \gamma \geq 1/6 \mbox{ and } 1 \leq \alpha < 2
\end{array}
\right.
\ee
The real constants $h\sub$ and $\tau\sub$ are defined in (\ref{hS}) and in (\ref{mhs1}) respectively.
\end{theorem}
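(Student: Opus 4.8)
The plan is to establish the invariance principle by the standard route for optimal scaling results of this type (as in Mattingly--Pillai--Stuart and Pillai--Stuart--Thiéry), namely: decompose the interpolant into a drift term, a martingale term, and a negligible remainder; identify the limits of each via laws of large numbers over the eigendirections; and then close the argument with a martingale central limit theorem together with a continuous-mapping / tightness argument in $C([0,T];\mathcal{H})$. The decomposition \eqref{dr-martal2} is already in place, so the content is in computing the limits of \eqref{apprdrift1}--\eqref{apprdrift2} and the quadratic variation of the martingale part, and in controlling the error terms.

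First I would analyse the acceptance-probability factor $1\wedge e^{Q^N}$. The key heuristic is that $Q^N$, evaluated along the chain in stationarity with the proposal \eqref{propprodgas}, converges in distribution (under $x^N\sim\pi^N$, $z^N\sim\cN(0,\mathrm{I}_N)$) to a Gaussian random variable whose mean and variance are governed by $\gamma=1/6$ (making the reversible $-\sigma_N^2/8\,(\norcn{y}^2-\norcn{x}^2)$ term contribute an $O(1)$ piece of the form $-\ell^6/32$ type constants) and by the constant $c_1$ in \eqref{finc1spec} coming from the irreversible cross-term $2\sigma_N^{\alpha-2}\langle S^Ny^N,x^N\rangle$; this is exactly where $h\sub$ (and $h\sub^J$ in the Jordan case) arises as $\EE[1\wedge e^{\mathcal{N}}]$ for the limiting Gaussian $\mathcal{N}$, via the standard identity relating $\EE[1\wedge e^{X}]$ for $X\sim\cN(m,s^2)$ with $m=-s^2/2$ to $2\Phi(-s/2)$ — this gives the $2\Phi(\cdot)$ formula in \eqref{Eq:OptimalAcceptProbThm}. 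This step requires Assumption \ref{AssCLT} (the CLT for $Q^N$) and a uniform-integrability bound so that convergence in distribution of $Q^N$ upgrades to convergence of $\EE_k[1\wedge e^{Q^N}(\cdots)]$; I expect this to be packaged in Condition \ref{Ass1} and Assumption \ref{Ass2}.

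Next, with the asymptotic decoupling of the indicator from the Gaussian increment $z_{k+1}^N$ (the noise term \eqref{apprdrift2} contributes to the martingale part with limiting covariance $\ell^2 h\sub\,\cC$, hence $D\sub=\ell\sqrt{h\sub}$), I would identify the drift: in the diffusive regime $\alpha>2$ the irreversible drift $\sigma_N^\alpha\cC^NS^Nx$ is of lower order and only $-\tfrac{\ell^2}{2}h\sub x$ survives; at $\alpha=2$ the irreversible term is balanced and survives as $\tau\sub\ell^2\tilde S x$, where $\tau\sub$ is the limiting correlation-type constant from \eqref{mhs1} capturing $\EE[\mathbf{1}_{Q<0}e^Q$-type corrections$]$; in the fluid regime $\alpha<2$ with $\zeta=\alpha$, the dominant surviving term is $\tau\sub\ell^\alpha\tilde S x$ and the diffusion vanishes because the rescaling $N^{\zeta\gamma/2}$ kills the $O(N^{-\gamma})$ noise. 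One then verifies that $D\sub\,dW^{\cC}_t$ has the stated covariance and that the remainder in \eqref{dr-martal2} is $o(1)$ in the relevant norm uniformly on $[0,T]$. Finally I would invoke tightness of $\{x^{(N)}(\cdot)\}$ in $C([0,T];\mathcal{H})$ (using the a priori moment bounds that Assumption \ref{extrassT2} should supply, plus the fact that $\cC$ is trace-class so truncation in the eigenbasis is controlled) together with the martingale CLT to conclude weak convergence to the unique solution of \eqref{generalSDE-ODE}, uniqueness being immediate since the limiting drift is linear and the diffusion constant.

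The main obstacle, I expect, is the joint control of the acceptance indicator and the proposal increment uniformly over the $N$ eigendirections and uniformly over $k\le T N^{\zeta\gamma}$: one must show that along the chain the state $x_k^N$ stays (with high probability) in a region where the Gaussian approximation of $Q^N$ is valid, handle the replacement of $x_k^N\sim\pi^N$ by the actual chain value (this is where starting in stationarity is used, so that $x_k^N\sim\pi^N$ exactly at every $k$), and obtain the error estimates with constants independent of $N$ and $k$ — in particular showing that the difference between $\EE_k[(1\wedge e^{Q^N})\,z_{k+1}^N]$ and $\EE_k[1\wedge e^{Q^N}]\,\EE_k[z_{k+1}^N]$-type heuristics is genuinely negligible, and that $\V_{S}^N$-type quantities (see \eqref{asvar}) scale as dictated by $c_1$ in \eqref{finc1spec}. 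These are precisely the estimates that the deferred Assumptions \ref{AssCLT}, \ref{Ass2} and Condition \ref{Ass1} are designed to furnish, so the proof of Theorem \ref{mainthm} itself should reduce to assembling them; the genuinely technical work (verifying those assumptions for the Jordan class) is what makes Theorem \ref{thmspec} a corollary and is relegated to the Appendix.
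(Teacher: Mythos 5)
Your overall skeleton (drift--martingale decomposition, Gaussian limit for $Q^N$ giving $h\sub$, martingale CLT plus a continuous-mapping/tightness argument) is the same as the paper's, which proves the theorem by writing $x\bn=\mathcal{I}(x_0^N,\hat{w}\sub^N)$ for the It\^o-type map $\mathcal{I}$ and reducing everything to Lemma \ref{lemma1} and Lemma \ref{lemma2}. However, there is a genuine gap in how you treat the term $\EE_k\bigl[(1\wedge e^{Q^N})\,(\cC^N)^{1/2}z_{k+1}^N\bigr]$. You assert ``asymptotic decoupling of the indicator from the Gaussian increment'' and, in your final paragraph, state that the task is to show this correlation is ``genuinely negligible.'' That is correct only in the regime $\gamma=1/6$, $\alpha>2$ (Lemma \ref{lemmapreliminaries}(i), via \eqref{est11}). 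For $\alpha=2$ and for $1\leq\alpha<2$ the correlation is \emph{not} negligible: after the rescaling $N^{(\zeta-1)\gamma}$ it converges in $L^2(\pi^N)$ to $-2\ell^{\alpha-1}\nu\sub\,\tS x$ (Lemma \ref{lemmapreliminaries}(ii)), and it is exactly this contribution, added to the acceptance-weighted irreversible drift $h\sub\,\ell^{\alpha}\tS x$, that produces the coefficient $\tau\sub=h\sub-2\nu\sub$ in \eqref{drvc}. Your proposal attributes the whole $\tau\sub\ell^{2}\tS x$ (resp.\ $\tau\sub\ell^{\alpha}\tS x$) term to the balanced irreversible proposal term alone; carried out as written, your argument would deliver the drift coefficient $h\sub$ on $\tS x$ rather than $\tau\sub$, i.e.\ the wrong limit whenever $b\neq 2a$.

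Moreover, this correlation computation is not a routine remainder estimate but the technical core of the paper's proof: one conditions on all noise coordinates except $z^{j,N}$, applies the Gaussian identity of Lemma \ref{lemgaussmin3} with $\delta=\delta^B_j\propto N^{-\gamma(\alpha-1)}\lambda_j (S^Nx^N)^j$, and then controls the error terms $T_0,\dots,T_5$ (Lemma \ref{lemmaTis}), which is precisely where Assumption \ref{extrassT2} (exponential moments and the $L^r$ bound) is needed --- not merely for a priori moment bounds or tightness as you suggest. Without this step your plan covers only the $\alpha>2$ diffusive case and does not yield the $\alpha=2$ and fluid-regime drifts stated in the theorem.
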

\begin{proof} See Appendix \ref{AppendixA} and \ref{AppendixB}.
  \end{proof}

\begin{remark}\label{rem:compare}\textup{If the sequence of matrices $S^N$ is in Jordan block form, then all the assumptions of Theorem \ref{mainthm} are satisfied, provided \eqref{finc1spec} holds. \footnote{Indeed, in this case Assumption \ref{AssCLT}
and Assumption \ref{Ass2} presented in Section \ref{sec:5} are satisfied with $c_1$ as in \eqref{finc1spec} and $c_2=c_3=0$. This makes Assumption \ref{extrassT2} easy to verify. Moreover, Condition \ref{Ass1} is trivially satisfied  for matrices in Jordan block form. Detailed comments on this can be found in Section \ref{sec:5}, see comments after \eqref{Eq:OptimalAcceptProb}. } It should be clearly said that, while examples and calculations in the case of Theorem \ref{thmspec} are very explicit,  we have not been able to find examples of matrices $S^N$ such that the constant $\tau\sub$ in Theorem \ref{mainthm} is non-vanishing.}

\textup{However, it is important to stress that even when $\tau\sub=0$ it is still possible that the average acceptance probability (given, in the limit, by $h\sub$) is not tending to one and it is of order 1  instead, see the examples in Sections \ref{S:NumericalApproxAccProb} and \ref{Eq:SimulationStudies}. Notice that $\tau\sub =0$ when $b=2a$, with $b$ and $a$ are parameters depending on the choice of $S^N, \alpha$ and $\gamma$,  defined in \eqref{alb} and \eqref{albb}. Cfr Table \ref{tab:1} and Table \ref{tab:2} to have a summary of how $h\sub$ and $\tau\sub$ depend on $a$ and $b$. As the numerical studies of Sections \ref{S:NumericalApproxAccProb} and \ref{Eq:SimulationStudies} demonstrate, even in the setup of Theorems \ref{thmspec}, for which we can construct practical algorithms, we end up with interesting non-trivial results.
}
\end{remark}

We now  comment on the practical implications of Theorems \ref{thmspec} and \ref{mainthm}.
\begin{remark}\label{Rem:practimpl}
\textup{The  limiting drifts (and corresponding diffusion coefficients) appearing in \eqref{drvc} correspond to the two different regimes i) and ii), which we identified before the statement of the theorem.    The choice $\gamma=1/6$ in case (i)  (\textit{i.e.}, in the diffusive regime) and $\gamma \geq 1/6$ in the fluid regime will be fully motivated  in  Section \ref{SS:HeurAcceptProb}. The various results obtained in the different regimes are summarised in Tables \ref{tab:2} and \ref{tab:1} below.} 
\begin{itemize}
\item\textup{In the regime (i),  the effective time-step implied by the interpolation \eqref{zeta=zetaalpha}--\eqref{continter},
is $N^{-2\gamma}=N^{-1/3}$. Therefore, if $\gamma=1/6$ and $\alpha\geq 2$,  Theorem \ref{mainthm} implies that the optimal scaling for the proposal variance when the chain is in its  stationary regime is of the order $N^{-1/3}$. That is, the cost of the algorithm (in terms of number of steps needed, in stationarity, to explore the state space) is $O(N^{1/3})$. This is the same scaling as obtained in \cite{RobertsRosenthal1998, PST} for the MALA algorithm. Therefore, in the regime $(i)$, the \imala\,algorithm has the same scaling properties of the usual MALA. 
More discussions on this can be found in Section \ref{sechad}.
}
\item \textup{In the case  $\gamma=1/6$ and $\alpha> 2$,  one can construct specific matrices $S$, which result in considerably higher limiting acceptance probabilities than in the classical MALA case, see Sections \ref{S:NumericalApproxAccProb} and \ref{Eq:SimulationStudies}.
}
\item \textup{ When $\alpha<2$ and $\gamma\geq 1/6$, the (rescaled) chain converges to a fluid limit; \textit{i.e.}, the limiting behaviour of the chain is described by an ODE. Such an ODE still admits our target as invariant measure.
   With the same reasoning as above, the main result implies that, in this regime,  the cost  of the algorithm  is of the order $N^{\alpha\gamma}$ and one can choose $\alpha\gamma<1/3$. In the fluid regime, one may expect that the chain gets to get stuck in the limit as $N\rightarrow\infty$. However, we also comment here that in the simulations that we performed we did not see the chain getting stuck in this regime, which may imply that one may need to go to very high dimensions to observe such issues.
}
\end{itemize}
\end{remark}

\begin{table}[h!]
\begin{tabular}{|c|c|c|}
\hline
\multicolumn{3}{|c !{\vrule width 2pt}}{Setting of Theorem \ref{thmspec}: $c_{j}= 0$ for $j=2,3$ and $b= 2a$}\\
\hline
\multicolumn{3}{|c !{\vrule width 2pt}}{$\gamma=1/6 \Rightarrow $ acc. prob is $O(1)$}\\
\hline
$\alpha\geq 2$ & $dx_t =- \frac{\ell^2}{2}h\sub^J x_t dt+ \ell \sqrt{h^J\sub} dW_t^{\cC} $ & $N^{1/3}$ \\
\hline
$1 < \alpha <2$ & $dx_t = 0$ & $N^{\alpha\gamma}$\\
\hline \hline
\multicolumn{3}{|c !{\vrule width 2pt}}{$\gamma>1/6 \Rightarrow $ acc. prob tends to one}\\
\hline
$\alpha \geq 2$ & \multicolumn{2}{c !{\vrule width 2pt}}{too costly $N^{\alpha \gamma}(> N^{1/3})$ } \\
\hline
$1 \leq \alpha <2$ & $dx_t = 0$ & $N^{\alpha\gamma}$\\
\hline
\end{tabular}
\smallskip
\caption{The possible different regimes when the antisymmetric matrix $S$ takes the Jordan block form. The constants $c_j$ are those appearing in Assumption \ref{Ass2}; such constants also determine the value of the parameters $a$ and $b$ (defined after \eqref{asympQnormal1})}
\label{tab:2}
\end{table}

\begin{table}[h!]
\begin{tabular}{|c|c|c|}
\hline
\multicolumn{3}{|c !{\vrule width 2pt}}{Setting of Theorem \ref{mainthm}: case $c_{j}\neq 0$ for $j=1,2,3$ and $b\neq 2a$}\\
\hline
\multicolumn{3}{|c !{\vrule width 2pt}}{$\gamma=1/6 \Rightarrow $ acc. prob is $O(1)$}\\
\hline
$\alpha>2$ & $dx_t =- \frac{\ell^2}{2}h\sub x_t dt+ \ell \sqrt{h\sub} dW_t^{\cC} $ & $N^{1/3}$ \\
\hline
$\alpha=2$ & $dx_t =- \frac{\ell^2}{2}h\sub x_t dt+ \tau\sub \ell^2\tilde{S}x_t dt+ \ell \sqrt{h\sub} dW_t^{\cC} $ &
$N^{1/3}$ \\
\hline
$1 \leq \alpha <2$ & $dx_t = \tau\sub \ell^{\alpha} \tilde{S} x_t dt$ & $N^{\alpha\gamma}$ \\
\hline \hline
\multicolumn{3}{|c !{\vrule width 2pt}}{$\gamma>1/6 \Rightarrow $ acc. prob is $O(1)$}\\
\hline
$\alpha \geq 2$ & \multicolumn{2}{c !{\vrule width 2pt}}{too costly $N^{\alpha \gamma}(> N^{1/3})$ } \\
\hline
$1 \leq \alpha <2$ & $dx_t = \tau\sub \ell^{\alpha} \tilde{S} x_t dt$ & $N^{\alpha\gamma}$\\
\hline
\end{tabular}
\smallskip
\caption{The above table summarizes the general scenario described by Theorem \ref{mainthm} in the case in which  $c_j\neq 0$ for $j=1,2,3$ and $b\neq 2a$.   The constants $c_j$  are those appearing in Assumption \ref{Ass2}; such constants also determine the value of the parameters $a$ and $b$ (defined after \eqref{asympQnormal1}). Table \ref{tab:2} is a subcase of this table. Indeed, Theorem \ref{mainthm} also covers the following:  if $b=2a$ and $c_j\neq 0$ for at least one $j$ then all of the above holds with $\tau\sub=0$. If $b=2a$ and all the $c_j$'s vanish, then all of the above holds with $\tau\sub=0$.
}
\label{tab:1}
\end{table}

\section{Heuristic derivation of the diffusion limit}\label{sec:5}

In this section we give heuristic arguments to  explain how one can formally obtain the  diffusion limit  of Theorem \ref{mainthm} for the chain $\{x_k^N\}_k$.
We stress that the arguments of this section are only formal; therefore, we often use the notation $``\simeq"$, to mean ``approximately equal". We write  $A\simeq B$ when $A=B+$ ``terms that are negligible" as $N$ tends to infinity; we then rigorously justify these approximations,
and the resulting limit theorems, in the Appendix.

The goal of this section is to motivate (a): the proofs of Theorems \ref{thmspec} and \ref{mainthm} that are presented in the Appendix \ref{AppendixA} and \ref{AppendixB}, and (b): the Assumptions  \ref{AssCLT}-\ref{extrassT2} needed for the general limiting result to hold.

In particular, in Subsection \ref{SS:HeurAcceptProb}, we present the heuristic derivation and the intuition behind it for the asymptotic acceptance probability as the dimension increases.  On the way of doing so,  we present and justify the assumptions that need to be imposed in order for the appropriate scaling limits to hold. In addition, Remark \ref{remonassumptions} discusses at length these assumptions and presents examples where they are expected to hold. Subsections \ref{sechad} and \ref{SS:HeurDiffCoeff} present the heurustic calculations and motivation for the  drift and diffusion coefficient respectively of the limiting continuous time interpolation of the resulting Markov chain. The heuristic discussion of Subsections \ref{SS:HeurAcceptProb}, \ref{sechad} and \ref{SS:HeurDiffCoeff} is made rigorous in Appendix \ref{AppendixA} and \ref{AppendixB}.

\subsection{Study of the acceptance probability and statement of main assumptions}\label{SS:HeurAcceptProb}
In order to understand the behaviour of the chain, it is crucial to gain intuition about the acceptance probability $\beta^N$. While attempting to improve such an intuition in this section, we also present the motivation behind the assumptions needed for Theorem \ref{mainthm}.\par
 Using \eqref{propprodgas}, a more useful and detailed expression for $Q^N$ (which was introduced just after \eqref{defaccprob}) is
\begin{align}\label{QNdecomp}
Q^N(x^{N},y^N)&= \bar{Q}^N(x^N,y^N)+Q_{\alpha}^N (x^N,y^N)
\end{align}
where $\bar{Q}^N$ contains all the terms that come from the reversible part of the proposal while $Q^N_{\alpha}$ contains all the terms that come from the irreversible part of the proposal; namely,
\begin{align*}
\bar{Q}^N :=& \left(-\frac{\sigma_N^3}{4}+ \frac{\sigma_N^5}{8}\right)\langle x^N, (\cC^N)^{1/2}z^N \rangle_{\cC^N} + \frac{\sigma_N^4}{8} \left( \norcn{x^N}^2- \|z^N\|^2\right)\\
& -\frac{\sigma_N^6}{32} \norcn{x^N}^2
\end{align*}
and
\begin{align}
Q^N_{\alpha}:=& \frac{1}{2}\sigma_N^{2\alpha} \left( \norcn{\tilde{S}x^N}^2- \norcn{\tilde{S}^N (\cC^N)^{1/2}z^N}^2\right)- \left(2 \sigma_N^{\alpha-1}+ \frac{1}{4}\sigma_N^{3+\alpha} \right)\langle{(\cC^N)^{1/2}z^N},{S^Nx^N}\rangle \label{4.14}\\
& - \left(2 \sigma_N^{2\alpha-2}+ \frac{1}{4}\sigma_N^{2\alpha+2} \right)\norcn{\tilde{S}^Nx^N}^2
- \left( \sigma_N^{2\alpha-1}- \frac{1}{2}\sigma_N^{2\alpha+1} \right)\lancn{\tilde{S}^N (\cC^N)^{1/2}z^N}{\tilde{S}^Nx^N} \label{4.15}\\
& -  \sigma_N^{3\alpha -1} \lancn{\tilde{S}^N (\cC^N)^{1/2}z^N}{(\tilde{S}^N)^2x^N} - \frac{1}{2}\sigma_N^{4\alpha -2} \norcn{(\tilde{S}^N)^2 x^N}^2\\
& - \sigma_N^{3\alpha-2}\left( 1- \frac{\sigma_N^2}{2}\right)\lancn{\tilde{S}^Nx^N}{(\tilde{S}^N)^2 x^N}.\label{4.17}
\end{align}
For $N$ large, from \eqref{defsigma} we have $\sigma_N^{\alpha-1}>> \sigma_N^{3+\alpha}$; so we expect that, asymptotically, the last term in \eqref{4.14} will be negligible. The same reasoning can be applied to the terms in \eqref{4.15}. Therefore,    irrespective of the choice of $\gamma$ and $\alpha$, we have the approximation
\begin{align}
Q^N_{\alpha} \simeq & \frac{1}{2}\sigma_N^{2\alpha} \left( \norcn{\tS^N x^N}^2
- \norcn{\tS^N (\cC^N)^{1/2} z^N}^2\right)- 2 \sigma_N^{\alpha-1} \langle{(\cC^N)^{1/2}z^N},{S^Nx^N}\rangle\nonumber\\
& - 2 \sigma_N^{2\alpha-2}\norcn{\tS^N x^N}^2
- \sigma_N^{2\alpha-1}\lancn{ \tS^N (\cC^N)^{1/2}z^N}{\tS^N x^N}\nonumber\\
&- \sigma_N^{3\alpha -1} \lanc{\tS^N (\cC^N)^{1/2}z^N}{(\tS^N)^2 x^N} - \frac{1}{2}\sigma_N^{4\alpha -2} \norc{(\tS^N)^2 x^N}^2\nonumber\\
& - \sigma_N^{3\alpha-2} \left( 1- \frac{\sigma_N^2}{2}\right)\lancn{\tS^N x^N}{(\tS^N)^2 x^N}  \,. \label{zerot}
\end{align}
 We further observe that,  in stationarity (and again irrespective of the choice of $\gamma, \alpha$ and $S^N$) the term
$$
\frac{1}{2}\sigma_N^{2\alpha} \left( \norcn{\tS^N x^N}^2- \norcn{\tS^N (\cC^N)^{1/2} z^N}^2\right)
$$
is smaller than the term
$$
- 2 \sigma_N^{2\alpha-2}\norcn{\tS^N x^N}^2
$$
in the sense that it is  always of lower order in $N$. Moreover, due to the skew-symmetry of $S^N$, the term \eqref{zerot} is identically zero:
$$
\lancn{\tS^N x^N}{(\tS^N)^2 x^N}= \lancn{\cC^N S^N x^N}{\cC^N S^N \cC^N S^N x^N}=
\langle \cC^N S^N x^N,  S^N \cC^N S^N x^N\rangle =0\, .
$$
 Therefore we can make the further approximation
\begin{align}
Q^N_{\alpha} \simeq & - 2 \sigma_N^{\alpha-1} \langle{(\cC^N)^{1/2}z^N},{S^Nx^N}\rangle
- 2 \sigma_N^{2\alpha-2}\norcn{\tS^N x^N}^2 \label{feqan}\\
&
- \sigma_N^{2\alpha-1}\lancn{ \tS^N (\cC^N)^{1/2}z^N}{\tS^N x^N}
- \frac{1}{2}\sigma_N^{4\alpha -2} \norcn{(\tS^N)^2 x^N}^2\\
&- \sigma_N^{3\alpha -1} \lancn{\tS^N (\cC^N)^{1/2}z^N}{(\tS^N)^2 x^N}
=: R^N_{\alpha} \,. \label{leqan}
\end{align}
As a result of the above reasoning we then have the heuristic approximation
\be\label{qaapprra}
Q^N_{\alpha} \simeq R^N_{\alpha} \, .
\ee
We now heuristically argue that the only sensible choice for $\gamma$ is $\gamma=1/6$ if $\alpha \geq 2$ and $\gamma\geq 1/6$ if $\alpha<2$. In order to do so we look at the decomposition \eqref{QNdecomp} of $Q^N$ and recall the definition \eqref{defaccprob} of the acceptance probability;  we then write
$$
\beta^N= 1 \wedge (e^{\bar{Q}^N}e^{Q^N_{\alpha}}).
$$
 Let us start by looking at $\bar{Q}^N$:
if we start the chain in stationarity, i.e. $x_0^N\sim \pi^N$  then $x^{N}_k \sim \pi^N$ for every $k \geq 0$.  In particular, if   $x^N \sim \pi^N$ then $x^N$ can be represented as $x^N= \sum_{i=1}^N \lambda_i \rho_i \varphi_i$, where $\rho_i$ are i.i.d.~ $\cN(0,1)$. We can therefore use the law of large numbers and observe that
$$\|x^N\|_{\cC^N}^2=\sum_{i=1}^N \lv\rho_{i} \rv^2 \simeq N . $$
Similarly, by the Central Limit Theorem (CLT)  the term $\langle x^N, (\cC^N)^{1/2} z^N\rangle_{\cC^N}$ is $O(N^{1/2})$ and converges to a standard Gaussian as $N\ra \infty$. Again by the CLT we can see that also the term $\left( \norcn{x^N}^2- \|z^N\|^2\right)$ is $O(N^{1/2})$ in stationarity.  Therefore, recalling \eqref{defsigma}, one has
$$
\bar{Q}^N \simeq -\frac{\ell^3}{4N^{3\gamma}}
\langle x^N, (\cC^N)^{1/2} z^N\rangle_{\cC^N} -
\frac{\ell^6}{32 N^{6\gamma}} \norcn{x^N}^2\,.
$$
With these observations in place we can then argue the following:
\begin{itemize}
\item If $\gamma=1/6$ then $\bar{Q}^N$ is $O(1)$ (in particular, for large $N$ it converges to a Gaussian with finite mean and variance, see \eqref{QNapprgaussian});  therefore $e^{\bar{Q}^N}$ is $O(1)$ as well
\item If $\gamma <1/6$ then $\bar{Q}^N \ra -\infty$ (more precisely $\EE \bar{Q}^N \ra -\infty$), hence
$e^{\bar{Q}^N} \ra 0$
\item If $\gamma >1/6$ then $\bar{Q}^N \ra 0$, hence
$e^{\bar{Q}^N} \ra 1$
\end{itemize}
Let us now informally discuss the acceptance probability $\beta^N$  in each of the above three cases, taking into account the behaviour of $Q^N_{\alpha}$ as well. To this end it is worth noting the following:
\begin{align}
&\EE_{\pi^N} \norcn{\tS^N x^N}^2 = \EE_{\pi^N} \| (\cC^N)^{1/2}S^Nx^N\|^2=   \V_S^N= \Epin \lv \langle{\cC^{1/2}z^N},{S^N x^N}\rangle \rv^2 \label{constc1=}\\
& \EE_{\pi^N} \norcn{(\tS^N)^2 x^N}^2 = \V_{S\tS}= \Epin \lv \lancn{ \tS^N (\cC^N)^{1/2}z^N}{\tS^N x^N} \rv^2 , \label{constc2=}
\end{align}
where the first equality in \eqref{constc1=} holds because
$$
\| (\cC^N)^{1/2}S^Nx^N\|^2=  \norcn{\tS^N x^N}^2 \qquad \mbox{for all } x,
$$
and the others follow from \eqref{asvar}. By \eqref{constc1=}- \eqref{constc2=} and
 \eqref{feqan} - \eqref{leqan},  as $N \ra \infty$, the (average, given $x$, of the) quantity $Q^N_{\alpha}$ can only tend to zero, go to $- \infty$ or be $O(1)$, but it cannot diverge to $+\infty$. With this in mind,
\begin{description}
\item[a)]If $\gamma=1/6$ (that is, $\bar{Q}^N$ is $O(1)$) then one has the following two sub-cases
\begin{description}
\item[a1)] $\gamma=1/6$ and either ${Q^N_{\alpha}}\ra 0$ or $Q^N_{\alpha}$ is $O(1)$. In this case the limiting acceptance probability does not degenerate to zero (and does not tend to one)
\item[a2)] $\gamma=1/6$ and  ${Q^N_{\alpha}}\ra - \infty$. In this case $\beta^N \ra 0$.
\end{description}
\item[b)]If $\gamma <1/6$  then $e^{Q^N}$ can only tend to zero (if $Q^N_{\alpha}\ra 0$, to $-\infty$ or is $O(1)$). This is not a good regime as in this case   $\beta^N$ would overall tend to zero.
\item[c)]If $\gamma >1/6$  we have three sub-cases
\begin{description}
\item[c1)] $\gamma >1/6$ and ${Q^N_{\alpha}}$ is $O(1)$. In this case $\beta^N$ is $O(1)$.
\item[c2)] $\gamma >1/6$ and ${Q^N_{\alpha}} \ra 0$, in which case $\beta^N \ra 1$.
\item[c3)] $\gamma >1/6$ and ${Q^N_{\alpha}}\ra - \infty $ , in which case $\beta^N$ degenerates to zero.
\end{description}
\end{description}

We clearly want to rule out all the cases in which $\beta^N$ tends to zero,
(because this  implies that the chain  is getting stuck). Therefore, if we want to ``optimize" the behaviour of the chain,  the only  good choices are a1),  c1) and, potentially c2). However, considering that the cost of running the chain is either $2\gamma$ (when $\alpha\geq 2$) or $\alpha \gamma$ (when $\alpha<2$), the case $\gamma>1/6$ is always going to be sub-optimal if $\alpha \geq 2$; which is why in Theorem \ref{mainthm} we only consider $\gamma > 1/6$ if $\alpha<2$.

We therefore want to make assumptions on $S^N$, $\gamma$ and $\alpha$  to guarantee that our analysis falls only into one of the good regimes, \textit{i.e.}, to guarantee that  $Q^N_{\alpha}$  is of order one (or tends to zero). Assumption \ref{AssCLT} and Assumption \ref{Ass2} below are enforced in order to guarantee that this is indeed the case.

To explain the nature of Assumption \ref{AssCLT} below, we recall the approximation \eqref{qaapprra} and  the expression for $R^N_{\alpha}$, namely:
\begin{align}
R^N_{\alpha} = & - 2 \frac{\ell^{\alpha-1}}{N^{(\alpha-1)\gamma}} \langle{\cCn^{1/2}z^N},{S^Nx^N}\rangle - 2 \frac{\ell^{2(\alpha-1)}}{N^{2(\alpha-1)\gamma}}\norcn{\tS^N x^N}^2\nonumber\\
& - \frac{\ell^{2\alpha-1}}{N^{(2\alpha-1)\gamma}}\lancn{ \tS^N \cCn^{1/2}z^N}{\tS^N x^N}- \frac{1}{2}\frac{\ell^{2(2\alpha-1)}}{N^{2(2\alpha-1)\gamma}} \norcn{\tSn^2 x^N}^2\nonumber
\\
&- \frac{\ell^{3\alpha-1}}{N^{(3\alpha-1)\gamma}} \lancn{\tS^N \cCn^{1/2}z^N}{\tSn^2 x^N} \end{align}

 Roughly speaking,  Assumption \ref{AssCLT} requires that a CLT should hold for all the scalar products in the above expressions, Assumption \ref{Ass2} requires  a Law of Large Numbers (LLN) to hold for the remaining terms. We will first state such assumptions and then comment on them in Remark \ref{remonassumptions}.

We recall that $\tS^N=\cC^N S^N$ so, since $\cC^N$ is fixed by the problem and invertible, the following assumptions can be equivalently rephrased in terms of either $\tS^N$ or $S^N$.
\begin{assumption}\label{AssCLT}
The entries of the matrix $S^N$ do not depend on $N$.
There exist positive constants $0\leq d_1, d_2, d_3 <\infty$ such that the sequence of  matrices $\{S^N\}_N$ (and the related $\{\tS^N\}_N$) satisfies the following conditions, for some $\alpha \geq 1$:
 \begin{align*}
\textrm{i)} & \quad \frac{1}{N^{(\alpha-1)\gamma}} \langle{(\cC^N)^{1/2}z^N},{S^Nx^N}\rangle
\stackrel{\D}{\longrightarrow} {\cN} (0,d_1)\\
\textrm{ii)} & \quad \frac{1}{N^{(2\alpha-1)\gamma}}\lancn{ \tS^N (\cC^N)^{1/2}z^N}{\tS^N x^N}\stackrel{\D}{\longrightarrow} \cN (0,d_2)\\
\textrm{iii)} & \quad \frac{1}{N^{(3\alpha-1)\gamma}} \lancn{\tS^N (\cC^N)^{1/2}z^N}{(\tS^N)^2 x^N}\stackrel{\D}{\longrightarrow} \cN (0,d_3),
\end{align*}
where $\stackrel{\D}{\longrightarrow} $ denotes convergence in distribution.
\end{assumption}

\begin{assumption}\label{Ass2} The parameter $\alpha\geq 1$ and the sequence of  matrices $\{S^N\}_N$ are such that
\begin{align}
 \textrm{i)}\, & \lim_{N \ra \infty}\Epin \lv \frac{\norcn{\tS^N x^N}^2}{N^{2(\alpha-1)\gamma}} - c_1 \rv^2
= 0  \label{f1cond}\\
\textrm{ii)}\, &\lim_{N \ra \infty}\Epin \lv \frac{\norcn{(\tS^N)^2 x^N}^2  }{N^{2(2\alpha-1)\gamma}} - c_2\rv^2 = 0 \nonumber\\
\textrm{iii)}\, &\lim_{N \ra \infty} \quad \EE_{\pi^N} \lv \frac{\norcn{(\tS^N)^T (\tS^N)^2x}}{N^{2(3\alpha-1)\gamma}} - c_3\rv^2 = 0\nonumber \, ,
\end{align}
for some constants $0\leq c_1, c_2,c_3 <\infty$. We do not consider sequences $\{S^N\}$, such that \eqref{f1cond} is satisfied with $c_1=0$ and $\alpha=1$. That is, we  consider sequences $\{S^N\}_N$ such that \eqref{f1cond} is satisfied with $c_1=0$,  only in the cases in which this happens for $\alpha>1$, see Remark \ref{remonassumptions}.
\end{assumption}

In view of \eqref{constc1=}-\eqref{constc2=}, Assumption \ref{Ass2} implies the following three facts
\begin{align}
&\lim_{N \rightarrow \infty} \frac{\Epin\norcn{\tS^N x^N}^2}{N^{2(\alpha-1)\gamma}} = c_1 \label{f1}\\
&\lim_{N \rightarrow \infty} \frac{\Epin\norcn{(\tS^N)^2 x^N}^2}{N^{2(2\alpha-1)\gamma}} = c_2 \label{f2}\\
& \lim_{N \rightarrow \infty}\EE_{\pi^N}\frac{\norcn{(\tS^N)^T (\tS^N)^2x^N}}{N^{2(3\alpha-1)\gamma}}=c_3 \,. \label{f3}
\end{align}

Let us state a sufficient condition under which Assumption \ref{AssCLT} holds.
\begin{con}\label{Ass1}{ Let $z^N \sim \cN(0, \mathrm{Id}_N)$ and $x^N\sim \pi^N$.
The sequence of  matrices $\{S^N\}_N$ is such that for every $h=1 \dd N$ the $\sigma$-algebra generated by the random variable
$ z^{h,N}(S^Nx^N)^h$ is independent of the $\sigma$-algebra generated by the sum
$$
\sum_{j=1}^{h-1}  z^{j,N}(S^Nx^N)^j.
$$
In the above we stress that $(S^Nx^N)$ is a vector in $\R^N$ and $(S^N x^N)^h$ is the $h$-th component of such a vector. In order for Assumption \ref{AssCLT} to be satisfied, the same should hold  for $(\tS^N)^T \tS^N$ and for
$(\tS^N)^T (\tS^N)^2$ as well.}
\end{con}
Notice that, because $\cC^N$ is diagonal, if $(S^N)^2$ and $(S^N)^3$ satisfy Condition \ref{Ass1},   then also $(\tS^N)^T \tS^N$ and   $(\tS^N)^T (\tS^N)^2$ do. We give an example of a matrix satisfying all such conditions in  Remark \ref{remonassumptions}. Finally, for technical reasons (see proof of statement (ii) of Lemma \ref{lemmapreliminaries}, we also assume the following.

\begin{assumption}\label{extrassT2}
\begin{enumerate}
\item  Under $\pi^N$, the variables appearing in \eqref{f1}- \eqref{f3} have exponential moments, i.e., there exists a constant $m$, independent of $N$, such that
$$
\Epin e^{\frac{\norcn{\tS^N x^N}^2}{N^{2(\alpha-1)\gamma}}},
\Epin e^{\frac{\norcn{(\tS^N)^2 x^N}^2}{N^{2(2\alpha-1)\gamma}}},
\Epin e^{\frac{\norcn{(\tS^N)^T (\tS^N)^2x^N}}{N^{2(3\alpha-1)\gamma}}} <  m.
$$
\item  There exists some $r>1$ such that
$$
\Epin \left( \sum_{j=1}^N \frac{\lambda_j^6 \lv (S^Nx^N)^j \rv^4}{N^{2\gamma(\alpha-1)}}\right)^r < \infty.
$$

\end{enumerate}

\end{assumption}

\begin{remark} \label{remonassumptions} \textup{ Let us clarify the meaning of Assumption \ref{AssCLT},  Assumption \ref{Ass2} and Condition \ref{Ass1}. }
\begin{itemize}
\item \textup{In Assumption \ref{AssCLT}, we do not allow the entries of $S^N$ to depend explicitly on $N$ because this can be achieved by simply changing scaling (\textit{i.e.}, values of $\alpha$). }
\item\textup{ Recall that in stationarity $x^{i,N} \sim \mathcal{N}(0, \lambda_i^2)$ and observe  that without any assumption on $S^N$ the addends in the sum
\be\label{clttbs}
\langle{(\cC^N)^{1/2}z^N},{S^Nx^N}\rangle= \sum_{i=1}^N { \lambda_i z^{i,N}(S^Nx^N)^i}= \sum_{i=1}^N {\lambda_i z^{i,N}}\sum_{j=1}^N S^N_{i,j}x^{j,N}
\ee
 are non-independent and non-identically distributed random variables,  so the Central Limit theorem does not necessarily hold and in general the behaviour of the above sum will depend on $S^N$.  However, if we want
$Q^N_{\alpha}$ to be order one, the validity of the CLT is a reasonable requirement to impose on $S^N$.  }
\item  \textup{Condition \ref{Ass1} allows  simplifications in the calculation of the asymptotic variance  for terms of the form \eqref{clttbs}. Indeed a generalized form of the CLT  (see \cite[Theorem 2.1]{Dvoretzky}) implies that (under suitable assumptions) the asymptotic variance of the sum   \eqref{clttbs} is related to the following sum of conditional expectations:
$$
\sum_{k}a_{k,N}^2, \quad {\mbox{where }}   a_{k,N}^2:= \EE\left(\lambda_k^2  {\left( z^{k,N}(S^Nx^N)^k\right)^2}\Big\vert \sum_{i=1}^{k-1} {z^{i,N}(S^Nx^N)^i}\right).
$$
This may not be easy to calculate in general.
If Condition \ref{Ass1} holds then clearly the above simplifies and  one can apply the Central Limit Theorem in its simplest form to  the  summation  \eqref{clttbs}.
  In particular, if Condition \ref{Ass1} holds then each term is independent of the sum of the previous terms so  we obtain that for large $N$ (and, again, assuming that we are in stationarity)
$$
\frac{1}{N^{(\alpha-1)\gamma}}\lancn{(\cC^N)^{1/2}z^N}{S^N x^N} \sim \mathcal{N}(0,{c_1}).
$$
Notice that we are requiring that also the structure of  $(S^N)^2$ and $(S^N)^3$ allows for a similar simplification. So the main effect of Condition \ref{Ass1} is to obtain $c_i=d_i$ for $i \in \{1,2,3\}.$ A matrix that satisfies all the requirements of Condition \ref{Ass1} is the matrix $S^N=[S^N_{ij}]_{i,j=1}^{N}$ of (\ref{graphjordan}). If, for example,  $S^N$ is defined by (\ref{graphjordan}) with $J_{i}=1$  then $(S^N)^2=-\mathrm{Id}$ so $(S^N)^3=-S^N$ and $(S^N)^4=\mathrm{Id}$.  }
\item \textup{Roughly speaking, while Assumption \ref{AssCLT} imposes the validity of a CLT, Assumption \ref{Ass2} requires  that a Law of Large numbers should hold.
This assumption  is again made to  guarantee that $Q^N_{\alpha}$ is of order one (or zero), which is the only case of interest, as we have argued (Because $Q^N_{\alpha} \simeq R^N_{\alpha}$, this is the same as assuming that  $R^N_{\alpha}$ is $O(1)$. )}
\item \textup{We only consider values of $\alpha$ in the range $\alpha\geq 1$ for consistency: the sequence $\Epin\norcn{\tS^N x^N}^2$  is positive (and non identically zero), and the entries of $S^N$ do not depend on $N$.  Therefore there is no $0\leq \alpha<1$ such that Assumption \ref{Ass2} is satisfied for some finite constant $c_1$.
}
\item \textup {Finally, we make the obvious observation that, under Assumption \ref{AssCLT}, Assumption \ref{Ass2} and Condition \ref{Ass1} if the constant $c_1$ is assumed to be vanishing, $c_1=0$, then also $c_2=c_3=0$. Therefore in this case $Q^N_{\alpha}$ tends to zero in $L^2$. \footnote{This does not mean that the effect of the irreversible term is destroyed. It simply means that the acceptance probability will not feel it.} If $c_1=0$ then we rule out the case $\alpha=1$ because this would imply $\Epin\norcn{\tS^N x^N} \ra 0$, which could be achieved with a different rescaling.  If at least one of these constants is non-zero then $Q^N_{\alpha}$ is $O(1)$.
}
\end{itemize}
\hf
\end{remark}
  To recap, we have that for $N$ large, then
  \be\label{QNapprgaussian}
  \bar{Q}^N \approx \mathcal{N}\left(-\frac{\ell^6}{32} \frac{\ell^6}{16}\right)
\mbox{ if }\gamma=1/6,  \qquad \bar{Q}^N \approx 0 \,\,
\mbox{ if }\gamma > 1/6 \,.
\ee
 On the other hand, if Assumption \ref{AssCLT} and  Assumption \ref{Ass2} hold, then, from \eqref{feqan}-\eqref{constc2=}, we deduce after some calculus \footnote{These calculations are a bit long but straightforward and follow the lines of the calculation done in Lemma \ref{variousestimates}, proof of point {\em vi)}.} that  asymptotically
\be\label{F}
Q^N_{\alpha} \approx \mathcal{N} (-\ta, \tb), \quad \gamma \geq 1/6,
\ee
where
\begin{align}
 \ta &:=  2 \ell^{2(\alpha-1)}c_1+ \frac{1}{2}\ell^{2(2\alpha-1)}c_2  \label{G}\\
\tb &:= 4 \ell^{2(\alpha-1)}d_1 +5
\ell^{2(2\alpha-1)}d_2 + \ell^{2(3\alpha-1)}d_3.\label{H} 
\end{align}
From this we can heuristically deduce that if $\tilde{Z}$ is a normally distributed random variable with
\begin{equation*}
\tilde{Z} \sim\mathcal{N} \left(-\frac{\ell^6}{32}-\ta, \frac{\ell^6}{16}+ \tb\right),
\end{equation*}
then, for $N$ large one has
\be\label{apprQN}
\bar{Q}^N+ Q^N_{\alpha}=Q^N \approx \tilde{Z}  \quad \mbox{if }\gamma=1/6, \qquad Q^N \approx \mathcal{N}(-\tilde{a}, \tilde{b}) \,\, \mbox{ if } \gamma>1/6 \,.
\ee
As explained in Remark \ref{remonassumptions} (see third bullet point), if also Condition \ref{Ass1} is enforced,  then $c_i=d_i$ for $i \in \{1,2,3\}$;  therefore, for $N$ large, one finds that $Q^N$ is approximately distributed like $Q$,
\be\label{asympnorQ}
Q^N \approx Q,
\ee
where
\be\label{asympQnormal1}
Q \sim\mathcal{N} \left(-\frac{\ell^6}{32}-a, \frac{\ell^6}{16}+ b\right) \,\,\, \mbox{ if } \gamma=1/6, \quad  Q \sim\mathcal{N}(-a,b) \,\,\, \mbox{ if } \gamma>1/6,
\ee
with
\begin{align}
 a &:=  2 \ell^{2(\alpha-1)}c_1+ \frac{1}{2}\ell^{2(2\alpha-1)}c_2 \label{alb}\\
b &:= 4 \ell^{2(\alpha-1)}c_1 +5
\ell^{2(2\alpha-1)}c_2 + \ell^{2(3\alpha-1)}c_3. \label{albb}
\end{align}

Before moving to the heuristic analysis of the drift coefficient, we set
\be\label{hS}
h\sub:= \EE \left[ (1 \wedge e^{Q}) \right] \, ,
\ee
with $Q$ as in \eqref{asympQnormal1}.
The constant $h\sub$ represents the limiting average acceptance probability and it can be calculated by using the following lemma.
\begin{lemma}\label{lemgaussmin1}
If $G$ is a normally distributed random variable with $G \sim \mathcal{N}(\mu, \delta^2)$ ($\mu, \delta \neq 0$) then
$$
\EE(1 \wedge e^{G})= e^{\mu + \delta^2/2} \Phi\left( -\frac{\mu}{\delta}- \delta \right)+ \Phi\left( \frac{\mu}{\delta} \right),
$$
and
$$
\EE(G\mathbf{1}_{\{G<0\}})= e^{\mu + \delta^2/2} \Phi\left( -\frac{\mu}{\delta}- \delta \right),
$$
where $\Phi$ is the CDF of a standard Gaussian.
\end{lemma}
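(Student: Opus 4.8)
The plan is to reduce the lemma to two elementary one–dimensional Gaussian integrals. Since $\delta^{2}$ is the variance of $G$ there is no loss in taking $\delta>0$, and throughout I would write $G=\mu+\delta Z$ with $Z\sim\mathcal N(0,1)$. The first step is to split the expectation according to the sign of $G$: on $\{G\ge 0\}$ one has $e^{G}\ge 1$, so $1\wedge e^{G}=1$, whereas on $\{G<0\}$ one has $1\wedge e^{G}=e^{G}$. Hence
\be\label{splitlem}
\EE\bigl(1\wedge e^{G}\bigr)=\Pp(G\ge 0)+\EE\bigl(e^{G}\mathbf{1}_{\{G<0\}}\bigr),
\ee
and the two terms are then handled separately.

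For the first term, standardising and using the symmetry of the standard normal density gives $\Pp(G\ge 0)=\Pp(Z\ge -\mu/\delta)=\Phi(\mu/\delta)$. For the second term I would complete the square in the exponent: writing out the Gaussian density and using $\delta z-\tfrac12 z^{2}=-\tfrac12(z-\delta)^{2}+\tfrac12\delta^{2}$ together with the substitution $u=z-\delta$,
\begin{align*}
\EE\bigl(e^{G}\mathbf{1}_{\{G<0\}}\bigr)
&=e^{\mu}\int_{-\infty}^{-\mu/\delta}\frac{e^{\delta z-z^{2}/2}}{\sqrt{2\pi}}\,dz \\
&=e^{\mu+\delta^{2}/2}\int_{-\infty}^{-\mu/\delta-\delta}\frac{e^{-u^{2}/2}}{\sqrt{2\pi}}\,du
=e^{\mu+\delta^{2}/2}\,\Phi\!\left(-\frac{\mu}{\delta}-\delta\right).
\end{align*}
This is exactly the second asserted identity (read with $e^{G}\mathbf{1}_{\{G<0\}}$ in place of $G\mathbf{1}_{\{G<0\}}$, which is the truncated exponential moment that actually enters the drift computation later), and substituting both evaluations into \eqref{splitlem} yields the first asserted identity.

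There is no genuine obstacle here: the lemma is a textbook Gaussian computation. The only points requiring a little care are the completing–the–square manipulation and keeping the two $\Phi$–arguments consistent via $\Phi(-t)=1-\Phi(t)$; the positivity of $\delta$ is used because the formula is written in terms of $\delta$ rather than $|\delta|$, and the hypothesis $\mu,\delta\neq 0$ serves only to make the quotient $\mu/\delta$ and the resulting expressions non-degenerate.
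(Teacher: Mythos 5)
Your proof is correct and is precisely the ``direct calculation'' the paper invokes for this lemma: split on the sign of $G$, standardise, and complete the square in the truncated exponential moment. You are also right that the lemma's second display should read $\EE\bigl(e^{G}\mathbf{1}_{\{G<0\}}\bigr)$ rather than $\EE\bigl(G\mathbf{1}_{\{G<0\}}\bigr)$ (the latter is nonpositive while the stated right-hand side is positive), and it is this exponential version that is actually used to define $\nu_{\mathbf{p}}$ in \eqref{bfh}.
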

\begin{proof} By direct calculation.
\end{proof}

Therefore,
\begin{align} \label{Eq:OptimalAcceptProb}
h\sub&= \EE(1 \wedge e^{{Q}}) {=}e^{\frac{b}{2}-a}\Phi\left(\frac{-\ell^{6}/32-b+a}{\sqrt{(\ell^{6}/16)+b}}\right)+\Phi\left(\frac{-\ell^{6}/32-a}{\sqrt{\ell^{6}/16+b}}\right), \quad \mbox{ if } \gamma=1/6, \,\, a,b,\in \R
\end{align}
and
$$
h\sub = e^{\frac{b}{2}-a} \Phi\left(\frac{a}{\sqrt{b}}-\sqrt{b}\right)+\Phi\left(- \frac{a}{\sqrt{b}}\right), \quad \mbox{ if } \gamma>1/6, \,\, a,b \neq 0 \,.
$$
The above discussion makes it clear that the acceptance probability $h\sub$ depends on the constants $c_{i}$ and therefore, ultimately,  on the choice of the antisymmetric matrix $S^N$. {Notice that if $S^{N}$ takes the Jordan block diagonal form then $c_{2}=c_{3}=0$ and $b=2a$ resulting in limiting acceptance probability $h\sub=h\sub^J$, as in \eqref{Eq:OptimalAcceptProbThm}.

 \subsection{Heuristic analysis of the drift coefficient}\label{sechad}
In view of the discussion in Section \ref{SS:HeurAcceptProb}, we comment separately on the case $\alpha>2$, $\alpha=2$ and $\alpha<2$.

$\bullet$ Let us first assume that $\gamma=1/6$ and suppose $\alpha > 2$. Then the approximate drift of the chain is given by \eqref{apprdrift1}-\eqref{apprdrift2}, where we take  $\zeta=2$. We will prove that when $\alpha>2$ the addend \eqref{apprdrift2} is asymptotically small. This happens because   $Q^N$ and $z^N$ are asymptotically independent; in particular we will show that   the correlations between $Q^N$ and $z^N$ decay faster than $N^{-1/6}$, that is
\begin{equation*}
N^{1/6}\EE_k \left[ \left(1 \wedge e^{Q^N}\right) \cC^{1/2} z_{k+1}^N \right] \ra 0
\end{equation*}
(the formal statement of the above heuristic is contained in \ref{epsx} and \ref{est11}. 
Therefore
\begin{align}
d^N(x_k^N) &\simeq N^{1/3}\EE_k \left[ (1 \wedge e^{Q^N}) \left(- \frac{\ell^2}{2N^{1/3}}  x_k^N + \frac{\ell^{\alpha}}{N^{\alpha/ 6}}\tilde{S}^{N}x_k^N \right) \right]
\nonumber \\
& = \EE_k \left[ (1 \wedge e^{Q^N}) \left(- \frac{\ell^2}{2}  x_k^N \right) \right]+ \EE_k
\left[ (1 \wedge e^{Q^N}) \frac{\ell^{\alpha}}{N^{(\alpha-2)/ 6}} \tilde{S}^{N}x_k^N  \right]. \label{apprdrift22}
\end{align}
 We now use  the definition \eqref{hS} and  the heuristic approximation \eqref{asympnorQ} and observe that if $\alpha>2$ the second addend in \eqref{apprdrift22} disappears in the limit;  therefore the drift coefficient of the limiting diffusion is
\begin{equation*}
d\sub(x) = - \frac{\ell^2}{2} h\sub x \,.
\end{equation*}

$\bullet$ If instead $\gamma=1/6$ and $\alpha=2$ then the second addend in \eqref{apprdrift22} does not disappear in the limit; on the contrary, with the same reasoning as in the previous case, it will converge to $h\sub \ell^2 \tS x$.
 Moreover,  the term in  \eqref{apprdrift2} is no longer asymptotically small and contributes to the limiting drift. The reason why this happens can be seen by recalling  that $Q^N= \bar{Q}^N+ Q^N_{\alpha}$: clearly, the correlations between $\bar{Q}^N$ and $z^N$ are not affected by the choice of $\alpha$; however the value of $\alpha$  does affect the decay of the  correlations between $Q^N_{\alpha}$ and $z^N$. \par
In \ref{lemmapreliminaries}  we present a calculation which shows that (under appropriate conditions on $S^N$), if $\alpha=2$ then
\begin{equation*}
N^{1/6}\ell\EE_k \left[ \left(1 \wedge e^{Q^N} \right)\cC^{1/2} z_{k+1}^N \right] \simeq -2 \ell^2 \nu\sub\tilde{S} x_k
\end{equation*}
where  $\nu\sub$ is a real constant, namely
\be\label{bfh}
\nu\sub := \EE e^{Q}\mathbf{1}_{\{Q<0\}} \stackrel{\mbox{if }\gamma=1/6}{=}
e^{\frac{b}{2}-a}\Phi\left(\frac{-\ell^{6}/32-b+a}{\sqrt{(\ell^{6}/16)+b}}\right) . \footnote{Having used Lemma \ref{lemgaussmin1} for the last equality.}
\ee
Thus we find that the drift of the limiting diffusion is
\begin{equation*}
d(x) = - \frac{\ell^2}{2} h\sub \,x + \tau\sub\,\ell^{2} \tilde{S}x ,
\end{equation*}
where
\begin{equation}\label{mhs1}
\tau\sub:= h\sub - 2 \nu\sub .
\end{equation}
 By Lemma \ref{lemgaussmin1}  one can also see that, again if $\gamma=1/6$,
\be
\tau\sub=
  -h\sub + 2 \Phi \left( \frac{-\frac{\ell^6}{32}-a}{\sqrt{\frac{\ell^6}{16}+b}} \right).
\ee
 If  the constant $c_1=0$ or if more generally $b=2a$ (e.g., in the Jordan block diagonal form case), then $\tau\sub=0$ and the limiting drift reduces to
\begin{equation*}
d(x) = - \frac{\ell^2}{2} h\sub \,x.
\end{equation*}
Notice that the constant $ h\sub$ can never vanish in this regime.

$\bullet$ As already observed, if $\gamma\geq 1/6$ and $1\leq \alpha<2$ we need to scale the algorithm differently; in particular, the approximate drift is given this time by
\begin{align}
d\sub^N(x_k^N) &\simeq N^{\alpha\gamma}\EE_k \left[ (1 \wedge e^{Q^N}) \left(- \frac{\ell^2}{2N^{2\gamma}}  x_k^N + \frac{\ell^{\alpha}}{N^{\alpha\gamma}} \tilde{S}^{N}x_k^N \right) \right]
+ N^{\alpha\gamma}\EE_k \left[ (1 \wedge e^{Q^N}) \frac{\ell}{N^{\gamma}}(\cC^N)^{1/2}z_{k+1}^N \right]
\label{apprdrift23}
\end{align}
Therefore in this case ($\alpha<2$) the first term in \eqref{apprdrift23} asymptotically disappears. As for the third addend, this is again asymptotically not small (see Lemma \ref{lemmapreliminaries}, part (ii) of \ref{AppendixB}) and the second addend contributes with a term $\ell^{\alpha} h\sub \tS x$.  Therefore, the  limiting drift is
$$
d(x)= \tau\sub \ell^{\alpha}\tilde{S}x,
$$
with $\tau\sub$ defined as in \eqref{mhs1}. Again, if  $b=2a$ (which happens e.g. for $c_1=c_2=c_3=0$), then $\tau\sub=0$.  Therefore, in this case the  limit of the chain would simply be $\dot{x}=0$.

\subsection{Heuristic analysis of the diffusion coefficient}\label{SS:HeurDiffCoeff}
The heuristic argument for approximate diffusion coefficient is relatively straightforward.
\begin{align}
M_k^N&= N^{\zeta\gamma/2}\left[x_{k+1}^N - x_k^N - \EE_k(x_{k+1}^N - x_k^N) \right]\nonumber\\
&=N^{\zeta\gamma/2}\left[\tilde{\beta}^{N}(y_{k+1}^N - x_k^N) - \EE_k[\tilde{\beta}^{N}(y_{k+1}^N - x_k^N)] \right]\nonumber\\
&=N^{\zeta\gamma/2}\left[\tilde{\beta}^{N}\left(- \frac{\ell^{2}}{2N^{2\gamma}} x_k^N+ \frac{\ell^{\alpha}}{2N^{\alpha\gamma}} \tilde{S}^{N} x_k^N+ \frac{\ell}{N^{\gamma}} \cC^{1/2}z_{k+1}^N\right) \right.\nonumber\\
&\qquad\qquad\left.- \EE_k[\tilde{\beta}^{N}\left(- \frac{\ell^{2}}{2N^{2\gamma}} x_k^N+ \frac{\ell^{\alpha}}{2N^{\alpha\gamma}} \tilde{S}^{N} x_k^N+ \frac{\ell}{N^{\gamma}} \cC^{1/2}z_{k+1}^N\right)] \right]\nonumber\\
&=N^{(\zeta/2-2)\gamma}\left(-\frac{\ell^{2}}{2}\tilde{\beta}^{N}x_k^N\right) + N^{(\zeta/2-\alpha)\gamma}\left(\ell^{\alpha}\tilde{\beta}^{N} \tilde{S}^{N} x_k^N\right)+
N^{(\zeta/2-1)\gamma}\left(\ell\tilde{\beta}^{N} \cC^{1/2}z_{k+1}^N\right)\nonumber\\
&\qquad\qquad- N^{-\zeta\gamma/2}d^{N}(x_k^N)\label{Eq:GeneralMartingaleDiff}
\end{align}

By Lemma \ref{lemma2}, we get that in the case  $\gamma=1/6$, $\alpha\geq2$ and $\zeta=2$,  $M_k^N$  behaves like $N(0,\ell^{2}h\sub \cC)$;  in the case $\gamma\geq1/6$, $\alpha<2$ and $\zeta=\alpha$ and $M_k^N$  converges to  zero.

\section{On numerical computation of acceptance probability} \label{S:NumericalApproxAccProb}
In this section we aim to show via concrete examples how to construct sequences of matrices $S^{N}$ that satisfy the assumptions of the paper.  In addition we calculate in these examples the constants $d_{i},c_{i}$ for $i=1,2,3$ that are necessary in order to numerically approximate the acceptance probabilities in each case.

Let us first discuss how one can construct $S^{N}$ so that it satisfies the assumptions of the paper, but at the same time leads to non-zero constant $c_{1}$.  Let us again consider a Jordan block-diagonal form for $S^{N}$ of (\ref{graphjordan}) but now set  $S_{1}^{N}=[S_{ij}]_{i,j=1}^{N}$ with
$S_{ij}=J_{i}$ and $S_{ji}=-J_{i}$ for $j=i+1$ and $i=1,3,5,7,\cdots,N-1$ and $S_{ij}=0$ otherwise with $J_{i}$ to be chosen.

Let us set $\lambda_{j}=j^{-k}$ for $k>1/2$. Then, we can compute
\begin{align}
\Epin  \norc{\tS_{1} x^N}^2&=\sum_{i=1}^{N}\lambda_{i}^{2}\sum_{j=1}^{N}(S_{ij})^{2}\lambda_{j}^{2}=\sum_{i \textrm{ odd }} i^{-2k}|S_{i,i+1}|^2(i+1)^{-2k}+\sum_{i \textrm{ even }} i^{-2k}|S_{i,i-1}|^2(i-1)^{-2k}\nonumber\\
&=2\sum_{i=1}^{N}(2i-1)^{-2k}J_{i}^{2}(2i)^{-2k}\nonumber
\end{align}

In order to make sure that $c_1\neq0$, let us choose $J_{i}=(2i-1)^k (2i)^k i^{[2(\alpha-1)\gamma-1]/2}$ with $\alpha>1$. Then, we have that $\Epin  \norc{\tS_{1} x^N}^2=2\sum_{i=1}^{N}i^{2(\alpha-1)\gamma-1}$. We obtain that
\begin{align}
c_{1}&=\lim_{N \ra \infty}\Epin  \frac{\norcn{\tS_{1} x^N}^2}{N^{2(\alpha-1)\gamma}}=\lim_{N \ra \infty}  \frac{1}{N} \frac{2\sum_{i=1}^{N}\left(\frac{i}{N}\right)^{2(\alpha-1)\gamma-1}}{N^{2(\alpha-1)\gamma}}=2\int_{0}^{1}x^{2(\alpha-1)\gamma-1}dx=\frac{1}{(\alpha-1)\gamma}.\nonumber
\end{align}

Next we compute $c_{2}$ for this particular choice of the $S$ matrix. We have
\begin{align}
\Epin  \norcn{\tS_{1}^2 x^N}^2  &=\sum_{i=1}^{N}\lambda_{i}^{2}\sum_{j=1}^{N}((S\tilde{S})_{ij})^{2}\lambda_{j}^{2}=\sum_{i=1}^{N}\lambda_{i}^{2}((S\tilde{S})_{ii})^{2}\lambda_{i}^{2}=2\sum_{i \textrm{ odd }}^{N}\lambda_{i}^{4}\lambda_{i+1}^{4}J_{(i+1)/2}^{4}\nonumber\\
&=2\sum_{i \textrm{ odd }}^{N}i^{-4k}(i+1)^{-4k} (2(i+1)/2-1)^{4k} (2(i+1)/2)^{4k} \left(\frac{i+1}{2}\right)^{2[2(\alpha-1)\gamma-1]}\nonumber\\
&=2\sum_{i \textrm{ odd }}^{N}i^{-4k}(i+1)^{-4k} (i)^{4k} (i+1)^{4k} \left(\frac{i+1}{2}\right)^{2[2(\alpha-1)\gamma-1]}\nonumber\\
&=2\sum_{i \textrm{ odd }}^{N} \left(\frac{i+1}{2}\right)^{2[2(\alpha-1)\gamma-1]}\nonumber\\
&=2\sum_{i=1}^{N} i^{2[2(\alpha-1)\gamma-1]}\nonumber
\end{align}

Similarly to the computation for $c_{1}$, we obtain that $c_{2}=0$. As far as $c_{3}$ is concerned, we obtain by a computation similar to the computation for $c_{2}$ that
\begin{align}
c_{3}&=\lim_{N \ra \infty}\EE_{\pi^N}  \frac{\norc{\tS_{1}^T \tS_{3}^2x}}{N^{2(3\alpha-1)\gamma}}=0\nonumber
\end{align}

For the same reasons as in Remark  \ref{remonassumptions} we also have $d_{i}=c_{i}$ for $i=1,2,3$. Hence, we have obtained that in this example $d_{1}=c_{1}=\frac{1}{(\alpha-1)\gamma}$ and $d_{2}=c_{2}=d_{3}=c_{3}=0$.
Then, this implies that  $Q \sim\mathcal{N} \left(-\frac{\ell^6}{32}-a, \frac{\ell^6}{16}+b\right)$ in (\ref{asympQnormal1}), with $a=2\ell^{2(\alpha-1)}\frac{1}{(\alpha-1)\gamma}$ and $b=4\ell^{2(\alpha-1)}\frac{1}{(\alpha-1)\gamma}$. Hence, in this case $b=2a$ and thus $\tau\sub=0$.

Next we discuss optimal acceptance probabilities. Since in this case $b=2a$ which then implies $\tau\sub=0$, we can answer this question in the case of regimes i), i.e., $\alpha>2$ and ii) $\alpha=2$. In these cases we can find the $\ell$ that maximizes the speed of the limiting diffusion, i.e., $\ell^{2}h\sub$ and then for the maximizing $\ell$ compute the limiting acceptance probability $h\sub$ as given by (\ref{Eq:OptimalAcceptProb}).
In Table \ref{T:OptimalAccProb} we record optimal acceptance probabilities $h\sub$ derived in (\ref{Eq:OptimalAcceptProb}) for different choices for $\alpha$ when $\gamma=1/6$ in the case the matrix $S$ being $S_{1}$ as defined above.

\begin{table}[th]
  \begin{tabular}[c]{|c||c|c|c|c|c|c|c|}
    \hline
     $\alpha$ & $2$ & $4$ & $6$  & $8$  & $10$ & $15$ & $30$\\
    \hline
    $\textrm{optimal }  h\sub$  & $0.234$ & $0.574$ & $0.702$  & $0.767$ &$0.803$ & $0.848$ & $0.884$\\

       \hline
\end{tabular}
\medskip
\caption{Optimal acceptance probability when $S^{N}=S^{N}_{1}$ for different $\alpha'$s when $\gamma=1/6$.} \label{T:OptimalAccProb}
\end{table}

In particular, Table \ref{T:OptimalAccProb} shows that by increasing $\alpha$, we can increase the optimal acceptance probability as desired.

Next, we discuss two cases where $c_{1}=c_{2}=c_{3}=0$. In particular, one can define the matrix $S_{2}^{N}=[S_{ij}]_{i,j=1}^{N}$  such that  $S_{ij}=1$ and $S_{ji}=-1$ for $j=i+1$ and $i=1,2,3,4,\cdots,N$ and $S_{ij}=0$ otherwise. A similar alternative construction is to set  $S_{3}^{N}=[S_{ij}]_{i,j=1}^{N}$ with $S_{ij}=1$ and $S_{ji}=-1$ for $j=i+1$ and $i=1,3,5,\cdots,N-1$ and $S_{ij}=0$ otherwise. In both cases $S_{2}^{N}$ and $S_{3}^{N}$ one can check that $c_{i}=d_{i}=0$ for all $i=1,2,3$. This then implies that $Q \sim\mathcal{N} \left(-\frac{\ell^6}{32}, \frac{\ell^6}{16}\right)$ in (\ref{asympQnormal1}), as $a=b=0$ in this case. Hence, one recovers that in this case the optimal acceptance probability for both choices $S_{1}^{N}$ and $S_{2}^{N}$ is achieved at $0.574$ as in the classical work \cite{RobertsRosenthal1998}.

In Section \ref{Eq:SimulationStudies} we present simulation studies based on using $S_{1}^{N}$, $S_{2}^{N}$ and $S_{3}^{N}$ confirming and illustrating the theoretical results.

\section{Simulation studies}\label{Eq:SimulationStudies}
The goal of this section is to illustrate the theoretical findings of this paper via simulation studies. Upon close inspection of Theorem \ref{mainthm} it becomes apparent that closed form expressions of quantities such as optimal acceptance probabilities and optimal scalings are hard to obtain and dependent on the choice of the sequence $S^{N}$. For this reason we will resort to simulations.

As a measure of performance we choose similarly to \cite{RobertsRosenthal1998,RobertsRosenthal2001} the expected squared jumping distance (ESJD) of the algorithm defined as
\[
\mathrm{ESJD}=\EE\left|X^{1}_{k+1}-X^{1}_{k}\right|^{2}
\]

It is easy to see that maximizing ESJD is equivalent to minimizing the first order lag autocorrelation function, $\rho_{1}=\text{Corr}_{\pi}(X_{k+1},X_{k})$. Hence, motivated by \cite{RobertsRosenthal2001} and based on Theorem \ref{mainthm} we define the computational time as
\begin{align}
CT\sub=-\frac{1}{N^{\zeta\gamma}\log \rho_{1}}\label{Eq:CT}
\end{align}
where $\zeta=2$ if $\alpha\geq 2$ and $\zeta=\alpha$ if $\alpha<2$. 
 We seek to minimize $CT\sub$. A word of caution is necessary here. Even though $CT\sub$ as a measure of performance is easily justifiable in the case that one has a diffusion limit, \textit{i.e.}, when $\alpha\geq 2$, the situation is more complex for $\alpha<2$. By Theorem \ref{mainthm} if $\alpha<2$, the algorithm converges to a deterministic ODE limit and not to a diffusion limit. Hence, in this case the absolute jumping distance $\left|X^{1}_{k+1}-X^{1}_{k}\right|$ may make more sense as a measure of performance in the limit as $N\rightarrow\infty$. However, in the deterministic case, maximizing the absolute jumping distance or the squared jumping distance are equivalent tasks. Hence, working with the acceptance probability and with ESJD and as a consequence with $CT\sub$ as defined in (\ref{Eq:CT}) may be reasonable criterion on quantifying convergence.

The simulations presented below were performed using a MPI C parallel code. The numerical results that follow report sample averages and sample standard deviations of quantities such as acceptance probability and different observables. The reported values were obtained based on $1600$ independent repetitions of MALA and \imala\, algorithm with $10^4$ steps for each independent realization. For each independent realization of the algorithm the numerical values selected are the ones corresponding to minimizing $CT\sub$. We also remark here that in all of the simulation studies $CT\sub$ was a convex function of the acceptance probability (this resembles the behavior in the standard MALA case, see \cite{RobertsRosenthal2001}).

Below we report a statistical estimator and its corresponding empirical standard deviation for the optimal acceptance probability. In addition, to demonstrate convergence to the equilibrium we also computed a few observables. We report below statistical estimators along with their corresponding statistical standard deviations for
\[
\theta_{2}=\EE_{\pi^{N}}\|X\|^{2}, \text{ and } \theta_{3}=\EE_{\pi^{N}}\sum_{i=1}^{N}X_{i}^{3},
\]
where with some abuse of notation we have denoted by $X_{i}$ the $i^{th}$ component of the $X$ vector. In regards to the measure $\pi^{N}\sim N(0,\mathcal{C}^{N})$, where $\mathcal{C}^{N}=\text{diag}(\lambda_{1}^{2},\cdots,\lambda_{N}^{2})$ with $\lambda_{i}^{2}=i^{-2}$.

Let us first study the behavior of the algorithm when $S^{N}=S^{N}_{1}$, as defined in Section \ref{S:NumericalApproxAccProb}. In this case $c_{1}=\frac{1}{(\alpha-1)\gamma}\neq 0 $ and $c_{2}=c_{3}=0$. However, as we saw in Table \ref{T:OptimalAccProb} the effect of the value of $\alpha$ on the optimal acceptance probabilities is significant.

Let us demonstrate the behavior via a number of simulation studies reported in Tables \ref{Table_d10_3}-\ref{Table_d100_3}. As we described in Section \ref{S:NumericalApproxAccProb}, such a computation makes sense mainly in the case $\alpha\geq 2$ and $\gamma=1/6$. Of course one can compute via simulation empirical values for any value of $\alpha>1$ and $\gamma>0$. The acceptance probabilities in the case $\alpha\in(1,2)$ were very small, leading to estimators with large variances. Hence, we only report values for larger values of $\alpha$ where one may expect to have results that are comparable to what standard MALA gives. For completeness,  we also present data for $\gamma=1/2$, even though for standard MALA the optimal choice is $\gamma=1/6$.

Notice that the estimates for acceptance probabilities $\widehat{h}\sub$ align very well with the theoretical predictions that appear in Table \ref{T:OptimalAccProb}. In particular, as $\alpha$ increases the optimal acceptance probability also increases, in accordance to what the theory predicts.
\begin{table}[htbp!]
\begin{center}%
\begin{scriptsize}
\begin{tabular}
[c]{|c|c|c|c|c|}\hline
$\gamma=1/6$ & $(\widehat{h}\sub, \widehat{sd(h\sub)})$ &$(\widehat{\theta}_{2}, \widehat{sd(\theta_{2})})$&$(\widehat{\theta}_{3}, \widehat{sd(\theta_{3})})$ &$\widehat{CT}\sub$   \\
\hline  \text{standard MALA}& $(0.541, 0.043)$& $(1.073, 0.022)$& $(-0.001, 0.061)$& $0.725$\\
        $\alpha=2$ & $(0.229,0.056)$ &  $(0.935,0.081)$   & $(-0.001, 0.211)$   & $8.139$      \\
        $\alpha=4$ & $(0.566,0.065)$ &  $(1.026,0.034)$   & $(0.006, 0.141)$   & $3.367$      \\
        $\alpha=6$ & $(0.686,0.063)$ &  $(1.045,0.072)$   & $(0.003, 0.101)$   & $1.946$    \\
        $\alpha=8$ & $(0.751, 0.051)$&  $(1.051,0.026)$   & $(0.0001, 0.101)$   & $1.418$     \\
        $\alpha=10$ & $(0.784,0.052)$ &  $(1.056, 0.061)$  & $(0.004, 0.172)$   & $1.151$     \\
        $\alpha=15$ & $(0.821, 0.045)$ &  $(1.059, 0.031)$ & $(-0.002,0.096)$   & $0.951$      \\
        $\alpha=30$ & $(0.855, 0.026)$&  $(1.069, 0.025)$   & $(-0.003, 0.087)$   & $0.798$    \\
\hline
$\gamma=1/2$ &  & &  &   \\
\hline  \text{standard MALA}& $(0.753, 0.008)$& $(1.072, 0.020)$& $(-0.002, 0.068)$& $0.198$\\
        $\alpha=2$ & $(0.205,0.066)$ &  $(0.869,0.118)$   & $(0.0002, 0.339)$   & $9.419$      \\
        $\alpha=4$ & $(0.555,0.092)$ &  $(0.979,0.087)$   & $(-0.012, 0.242)$   & $2.022$      \\
        $\alpha=6$ & $(0.691,0.083)$ &  $(0.997,0.056)$   & $(0.002, 0.174)$   & $1.411$    \\
        $\alpha=8$ & $(0.761, 0.068)$&  $(1.001,0.038)$   & $(-0.008, 0.166)$   & $1.232$     \\
        $\alpha=10$ & $(0.808,0.064)$ &  $(1.015, 0.036)$  & $(0.002, 0.168)$   & $1.051$     \\
        $\alpha=15$ & $(0.866, 0.054)$ &  $(1.023, 0.032)$ & $(-0.0003,0.156)$   & $0.551$      \\
        $\alpha=30$ & $(0.924, 0.046)$&  $(1.038, 0.067)$   & $(-0.002, 0.206)$   & $0.213$    \\
\hline
\end{tabular}
\end{scriptsize}
\end{center}
\caption{Dimension $N=10$ and $S^{N}=S^{N}_{1}$. }
\label{Table_d10_3}%
\end{table}

\begin{table}[htbp!]
\begin{center}%
\begin{scriptsize}
\begin{tabular}
[c]{|c|c|c|c|c|}\hline
$\gamma=1/6$ & $(\widehat{h}\sub, \widehat{sd(h\sub)})$ &$(\widehat{\theta}_{2}, \widehat{sd(\theta_{2})})$&$(\widehat{\theta}_{3}, \widehat{sd(\theta_{3})})$ &$\widehat{CT}\sub$   \\
\hline  \text{standard MALA}& $(0.538, 0.052)$& $(1.054, 0.025)$& $(0.004, 0.089)$& $0.912$\\
        $\alpha=2$ & $(0.216,0.077)$ &  $(0.882,0.177)$   & $(-0.035, 0.451)$   & $8.453$      \\
        $\alpha=4$ & $(0.563,0.074)$ &  $(1.004,0.041)$   & $(0.004, 0.180)$   & $3.143$      \\
        $\alpha=6$ & $(0.689,0.068)$ &  $(1.026,0.035)$   & $(-0.009, 0.141)$   & $2.024$    \\
        $\alpha=8$ & $(0.748, 0.064)$&  $(1.036,0.042)$   & $(0.003, 0.157)$   & $1.512$     \\
        $\alpha=10$ & $(0.784,0.063)$ &  $(1.048, 0.098)$  & $(0.006, 0.181)$   & $1.269$     \\
        $\alpha=15$ & $(0.824, 0.045)$ &  $(1.048, 0.031)$ & $(-0.005,0.123)$   & $0.991$      \\
        $\alpha=30$ & $(0.849, 0.035)$&  $(1.054, 0.025)$   & $(-0.0002, 0.114)$   & $0.843$    \\
\hline
$\gamma=1/2$ &  & &  &   \\
\hline  \text{standard MALA}& $(0.849, 0.002)$& $(1.038, 0.037)$& $(0.0004, 0.147)$& $0.211$\\
        $\alpha=2$ & $(0.222,0.102)$ &  $(0.681,0.111)$   & $(0.025, 0.437)$   & $3.396$      \\
        $\alpha=4$ & $(0.564,0.013)$ &  $(0.837,0.107)$   & $(0.008, 0.378)$   & $1.696$      \\
        $\alpha=6$ & $(0.699,0.023)$ &  $(0.887,0.139)$   & $(0.003, 0.272)$   & $1.405$    \\
        $\alpha=8$ & $(0.774, 0.071)$&  $(1.001,0.019)$   & $(0.005, 0.267)$   & $1.269$     \\
        $\alpha=10$ & $(0.816,0.034)$ & $(0.967, 0.056)$  & $(-0.002, 0.263)$   & $1.101$     \\
        $\alpha=15$ & $(0.873, 0.088)$ &  $(0.948, 0.101)$ & $(0.008,0.264)$   & $0.781$      \\
        $\alpha=30$ & $(0.932, 0.073)$&  $(0.958, 0.071)$   & $(-0.001, 0.114)$   & $0.583$    \\
\hline
\end{tabular}
\end{scriptsize}
\end{center}
\caption{Dimension $N=50$ and $S^{N}=S^{N}_{1}$. }
\label{Table_d50_3}%
\end{table}

\begin{table}[htbp!]
\begin{center}%
\begin{scriptsize}
\begin{tabular}
[c]{|c|c|c|c|c|}\hline
$\gamma=1/6$ & $(\widehat{h}\sub, \widehat{sd(h\sub)})$ &$(\widehat{\theta}_{2}, \widehat{sd(\theta_{2})})$&$(\widehat{\theta}_{3}, \widehat{sd(\theta_{3})})$ &$\widehat{CT}\sub$   \\
\hline  \text{standard MALA}& $(0.538, 0.057)$& $(1.047, 0.036)$& $(0.001, 0.093)$& $0.938$\\
        $\alpha=2$ & $(0.225,0.071)$ &  $(0.891,0.178)$   & $(0.003, 0.352)$   & $7.654$      \\
        $\alpha=4$ & $(0.559,0.083)$ &  $(1.003,0.039)$   & $(-0.008, 0.180)$   & $3.467$      \\
        $\alpha=6$ & $(0.688,0.073)$ &  $(1.019,0.047)$   & $(-0.001, 0.168)$   & $2.012$    \\
        $\alpha=8$ & $(0.749, 0.068)$&  $(1.032,0.036)$   & $(-0.006, 0.155)$   & $1.488$     \\
        $\alpha=10$ & $(0.784,0.059)$ &  $(1.034, 0.038)$  & $(-0.001, 0.144)$   & $1.245$     \\
        $\alpha=15$ & $(0.823, 0.051)$ &  $(1.043, 0.038)$ & $(-0.008,0.131)$   & $0.981$      \\
        $\alpha=30$ & $(0.851, 0.034)$&  $(1.053, 0.033)$   & $(-0.0016, 0.124)$   & $0.942$    \\
\hline
$\gamma=1/2$ &  & &  &   \\
\hline  \text{standard MALA}& $(0.948, 0.002)$& $(1.019, 0.045)$& $(0.0001, 0.183)$& $0.211$\\
        $\alpha=2$ & $(0.221,0.101)$ &  $(0.782,0.113)$   & $(0.015, 0.357)$   & $3.145$      \\
        $\alpha=4$ & $(0.562,0.015)$ &  $(0.843,0.139)$   & $(0.012, 0.313)$   & $1.551$      \\
        $\alpha=6$ & $(0.696,0.034)$ &  $(0.841,0.121)$   & $(0.001, 0.275)$   & $1.299$    \\
        $\alpha=8$ & $(0.771, 0.082)$&  $(0.951,0.082)$   & $(-0.001, 0.217)$   & $1.187$     \\
        $\alpha=10$ & $(0.817,0.067)$ & $(0.989, 0.063)$  & $(-0.003, 0.311)$   & $1.127$     \\
        $\alpha=15$ & $(0.873, 0.093)$ &  $(1.003, 0.102)$ & $(-0.002,0.235)$   & $0.967$      \\
        $\alpha=30$ & $(0.937, 0.075)$&  $(0.908, 0.101)$   & $(-0.013, 0.111)$   & $0.486$    \\
\hline
\end{tabular}
\end{scriptsize}
\end{center}
\caption{Dimension $N=100$ and $S^{N}=S^{N}_{1}$. }
\label{Table_d100_3}%
\end{table}

To visualize the situation, in Figure \ref{figsim3} we have plotted $\alpha$ versus the empirical acceptance probabilities $\widehat{h}\sub$ for $\gamma=1/6$.  As expected by the theory,  as $\alpha$ increases, $\widehat{h}\sub$ increases.
\begin{figure}[!h]
 \centering
\begin{tabular}{c}
 \includegraphics[scale=0.3]{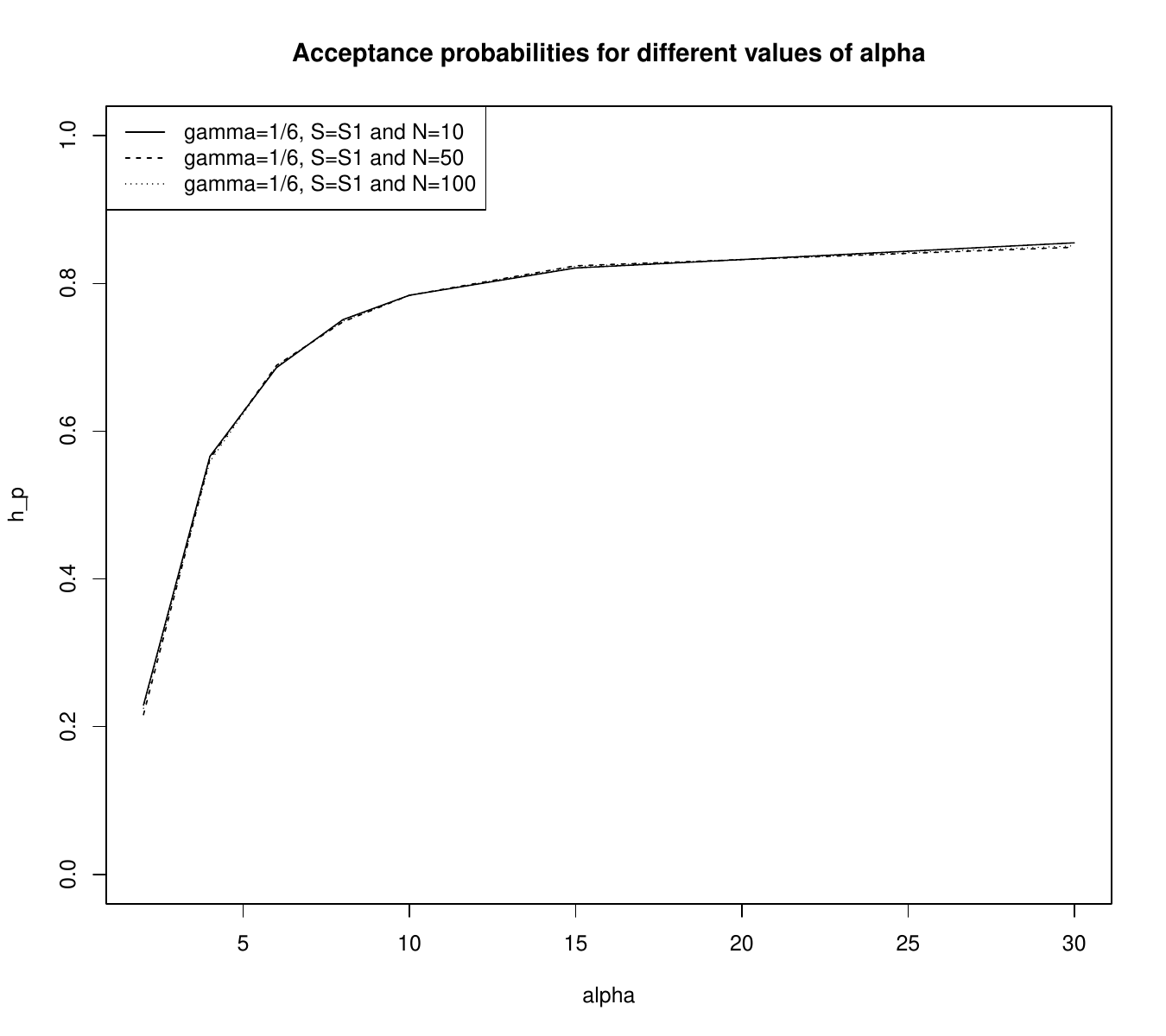}
  \end{tabular}
\caption{Empirical acceptance probability $\widehat{h}\sub$ versus values of $\alpha$ for $S^{N}=S^{N}_{1}$  and $\gamma=1/6$. For reference the limiting optimal acceptance probability for standard MALA is $0.574$.} \label{figsim3}
 \end{figure}

In Figure \ref{figsim2} we have plotted $\alpha$ versus the statistic $\widehat{CT}\sub$. In each case, standard MALA is represented by the rightmost value for $\alpha$. As expected by the theory, in each of the cases, as $\alpha$ increases $\widehat{CT}\sub$ converges to the corresponding value of standard MALA.
\begin{figure}[!h]
 \centering
\begin{tabular}{cc}
 \includegraphics[scale=0.3]{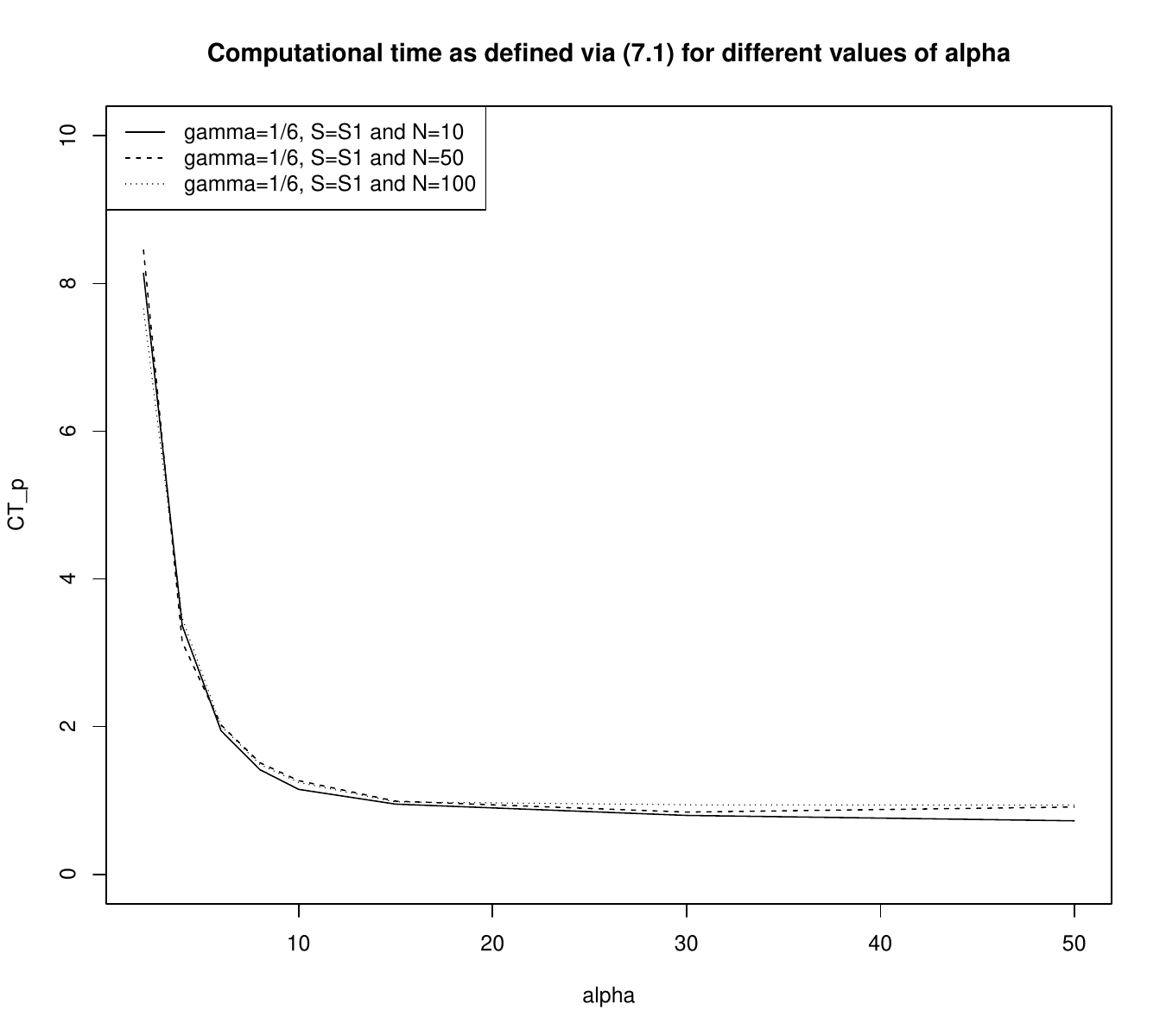}
  &
  \includegraphics[scale=0.3]{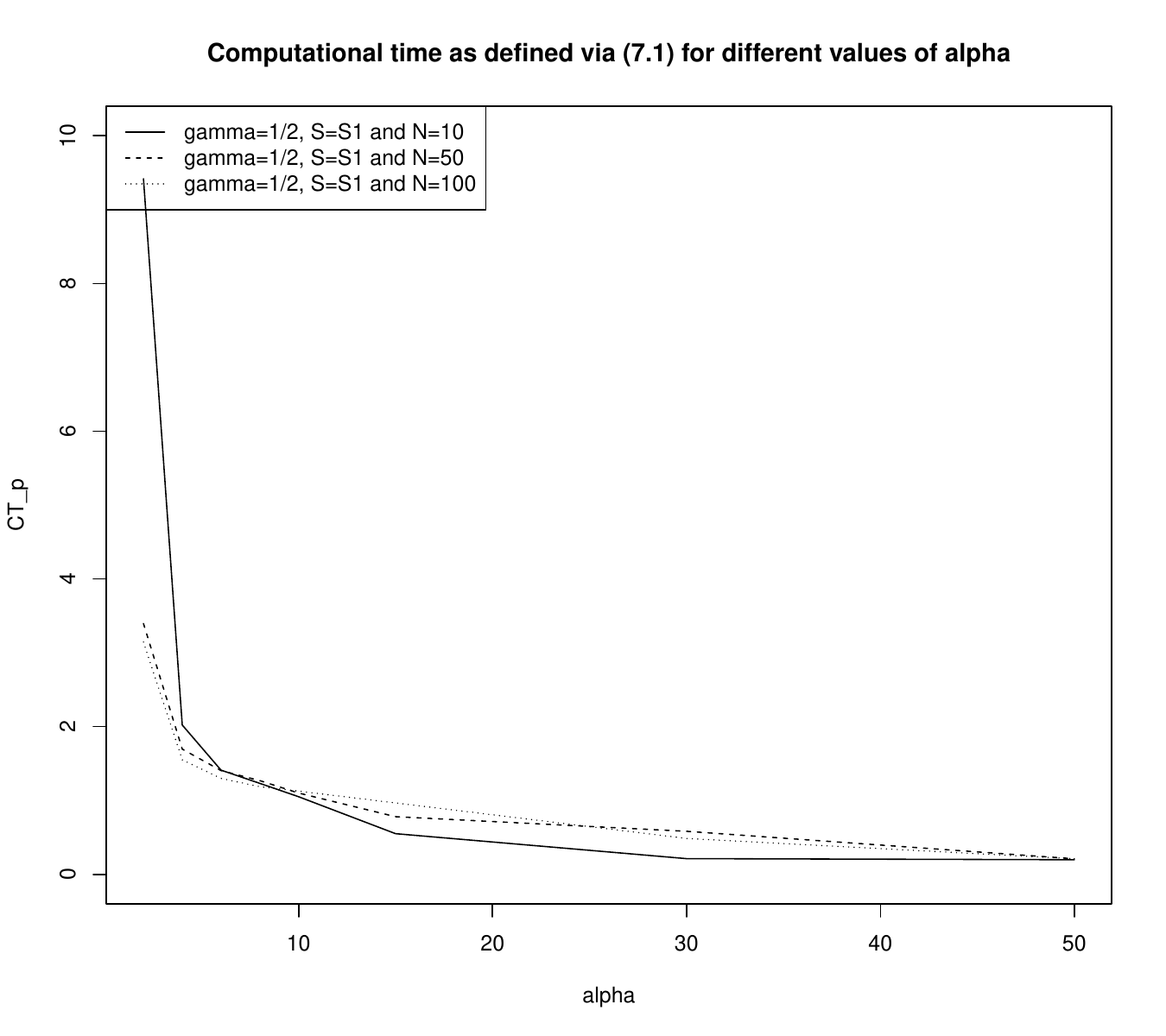}\\
 (a) $S^{N}=S^{N}_{1}$  and $\gamma=1/6$ &  (b) $S^{N}=S^{N}_{1}$  and $\gamma=1/2$\\
 \end{tabular}
\caption{Measure $\widehat{CT}\sub$ versus values of $\alpha$. In each of the figures the rightmost value for $\alpha$ corresponds to standard MALA} \label{figsim2}
 \end{figure}

 Next we record simulations for the antisymmetric matrix $S^{N}=S^{N}_{2}$, defined in Section \ref{S:NumericalApproxAccProb}.  We remark here that we also did the same simulations with the $S=S^{N}_{3}$ and the numerical results were statistically the same. So, we only report the ones for $S^{N}=S^{N}_{2}$, see Tables \ref{Table_d10}-\ref{Table_d100}.

In the case of $S^{N}=S^{N}_{2}$, we have that $c_{i}=d_{i}=0$ and thus the optimal acceptance probability in the limit as $N\rightarrow\infty$ is the same as with standard MALA, i.e. around $0.574$. However, as the Tables \ref{Table_d10}-\ref{Table_d100} demonstrate when $\gamma=1/6$ (the optimal choice for standard MALA) this is being realized for much larger values of $N$ for \imala\ compared to MALA. On the other hand if we take $\gamma>1/6$ (which is not optimal for standard MALA), e.g., $\gamma=1/2$ in Table \ref{Table_d100}, and focus on the fluid regime, regime iii), we notice that $CT\sub$ is estimated to be smaller with \imala\ compared to MALA, as predicted by our theory.

\begin{table}[htbp!]
\begin{center}%
\begin{scriptsize}
\begin{tabular}
[c]{|c|c|c|c|c|}\hline
$\gamma=1/6$ & $(\widehat{h}\sub, \widehat{sd(h\sub)})$ &$(\widehat{\theta}_{2}, \widehat{sd(\theta_{2})})$&$(\widehat{\theta}_{3}, \widehat{sd(\theta_{3})})$ &$\widehat{CT}\sub$   \\
\hline  \text{standard MALA}& $(0.537, 0.043)$& $(1.071, 0.022)$& $(-0.0002, 0.062)$& $0.725$\\
        $\alpha=1.2$ & $(0.464,0.039)$ &  $(1.064,0.025)$   & $(0.004, 0.071)$   & $1.289$      \\
        $\alpha=1.5$ & $(0.472,0.041)$ &  $(1.063,0.025)$   & $(-0.0016, 0.073)$   & $1.175$    \\
        $\alpha=1.8$ & $(0.481, 0.041)$&  $(1.061,0.025)$   & $(-0.0015, 0.074)$   & $1.070$     \\
        $\alpha=2.0$ & $(0.488,0.040)$ &  $(1.061, 0.025)$  & $(0.0005, 0.077)$   & $1.005$     \\
        $\alpha=2.5$ & $(0.507, 0.043)$ &  $(1.059, 0.024)$ & $(-0.0041,0.076)$   & $1.038$      \\
        $\alpha=3.0$ & $(0.535, 0.044)$&  $(1.061, 0.023)$   & $(0.00068, 0.077)$   & $1.063$    \\
\hline
$\gamma=1/2$ &  & &  &   \\
\hline  \text{standard MALA}& $(0.753, 0.008)$& $(1.071, 0.020)$& $(-0.0006, 0.066)$& $0.427$\\
        $\alpha=1.2$ & $(0.646,0.008)$ &  $(1.065,0.024)$   & $(-0.0025, 0.082)$   & $0.663$      \\
        $\alpha=1.5$ & $(0.649,0.008)$ &  $(1.066,0.024)$   & $(0.0025, 0.082)$   & $0.466$    \\
        $\alpha=1.8$ & $(0.653, 0.009)$&  $(1.063,0.024)$   & $(-0.0011, 0.081)$   & $0.327$     \\
        $\alpha=2.0$ & $(0.655,0.009)$ &  $(1.064, 0.024)$  & $(0.0010, 0.083)$   & $0.556$     \\
        $\alpha=2.5$ & $(0.662, 0.011)$ &  $(1.063, 0.023)$ & $(0.0011,0.084)$   & $0.549$      \\
        $\alpha=3.0$ & $(0.668, 0.012)$&  $(1.064, 0.023)$   & $(0.00317, 0.078)$   & $0.541$    \\\hline
\end{tabular}
\end{scriptsize}
\end{center}
\caption{Dimension $N=10$ and $S^{N}=S^{N}_{2}$. }
\label{Table_d10}%
\end{table}

\begin{table}[htbp!]
\begin{center}%
\begin{scriptsize}
\begin{tabular}
[c]{|c|c|c|c|c|}\hline
$\gamma=1/6$ & $(\widehat{h}\sub, \widehat{sd(h\sub)})$ &$(\widehat{\theta}_{2}, \widehat{sd(\theta_{2})})$&$(\widehat{\theta}_{3}, \widehat{sd(\theta_{3})})$ &$\widehat{CT}\sub$   \\
\hline  \text{standard MALA}& $(0.538, 0.052)$& $(1.054, 0.025)$& $(-0.0035, 0.086)$& $0.900$\\
        $\alpha=1.2$ & $(0.463,0.047)$ &  $(1.046,0.028)$   & $(-0.0023, 0.100)$   & $1.826$     \\
        $\alpha=1.5$ & $(0.476,0.048)$ &  $(1.046,0.027)$   & $(-0.0021, 0.097)$   & $1.485$    \\
        $\alpha=1.8$ & $(0.489, 0.049)$&  $(1.046,0.027)$   & $(0.0016, 0.096)$   & $1.209$     \\
        $\alpha=2.0$ & $(0.496,0.052)$ &  $(1.046, 0.031)$  & $(0.0017, 0.094)$   & $1.053$     \\
        $\alpha=2.5$ & $(0.518, 0.051)$&  $(1.046, 0.026)$ & $(-0.0028,0.096)$   & $1.037$      \\
        $\alpha=3.0$ & $(0.528, 0.053)$&  $(1.047, 0.028)$   & $(0.00031, 0.093)$   & $1.021$    \\
\hline
$\gamma=1/2$ &  & &  &   \\
\hline  \text{standard MALA}& $(0.949, 0.003)$& $(1.036, 0.038)$& $(-0.0015, 0.144)$& $0.776$\\
        $\alpha=1.2$ & $(0.788,0.006)$ &  $(1.021,0.043)$   & $(-0.0007, 0.169)$   & $1.326$      \\
        $\alpha=1.5$ & $(0.832,0.005)$ &  $(1.023,0.041)$   & $(0.0008, 0.166)$   & $0.684$    \\
        $\alpha=1.8$ & $(0.865, 0.005)$&  $(1.028,0.041)$   & $(-0.0043, 0.158)$   & $0.361$     \\
        $\alpha=2.0$ & $(0.883,0.005)$ &  $(1.030, 0.041)$  & $(-0.0021, 0.158)$   & $0.873$     \\
        $\alpha=2.5$ & $(0.914, 0.004)$ &  $(1.031, 0.038)$ & $(-0.0001,0.153)$   & $0.828$      \\
        $\alpha=3.0$ & $(0.931, 0.003)$&  $(1.034, 0.037)$   & $(0.0002, 0.151)$   & $0.803$    \\\hline
\end{tabular}
\end{scriptsize}
\end{center}
\caption{Dimension $N=50$ and $S^{N}=S^{N}_{2}$.}
\label{Table_d50}%
\end{table}

\begin{table}[htbp!]
\begin{center}%
\begin{scriptsize}
\begin{tabular}
[c]{|c|c|c|c|c|}\hline
$\gamma=1/6$ & $(\widehat{h}\sub, \widehat{sd(h\sub)})$ &$(\widehat{\theta}_{2}, \widehat{sd(\theta_{2})})$&$(\widehat{\theta}_{3}, \widehat{sd(\theta_{3})})$ &$\widehat{CT}\sub$   \\
\hline  \text{standard MALA}& $(0.549, 0.054)$& $(1.045, 0.027)$& $(0.0015, 0.102)$& $0.955$\\
        $\alpha=1.2$ & $(0.473,0.049)$ &  $(1.037,0.029)$   & $(-0.0012, 0.115)$   & $2.080$     \\
        $\alpha=1.5$ & $(0.484,0.054)$ &  $(1.036,0.034)$   & $(-0.0028, 0.111)$   & $1.623$    \\
        $\alpha=1.8$ & $(0.498, 0.054)$&  $(1.037,0.036)$   & $(-0.0021, 0.109)$   & $1.268$     \\
        $\alpha=2.0$ & $(0.507,0.056)$ &  $(1.041, 0.084)$  & $(0.0074, 0.235)$   & $1.076$     \\
        $\alpha=2.5$ & $(0.523, 0.054)$&  $(1.041, 0.027)$  & $(-0.0018,0.106)$   & $1.052$      \\
        $\alpha=3.0$ & $(0.533, 0.057)$&  $(1.039, 0.034)$  & $(0.00019, 0.104)$   & $1.031$    \\
\hline
$\gamma=1/2$ &  & &  &   \\
\hline  \text{standard MALA}& $(0.975, 0.002)$& $(1.006, 0.050)$& $(0.0103, 0.206)$& $0.970$\\
        $\alpha=1.2$ & $(0.809,0.006)$ &  $(0.981,0.055)$  & $(0.0039, 0.228)$   & $1.705$      \\
        $\alpha=1.5$ & $(0.865,0.005)$ &  $(0.989,0.053)$   & $(0.0138, 0.216)$   & $0.785$    \\
        $\alpha=1.8$ & $(0.904, 0.004)$&  $(0.993,0.054)$   & $(-0.0066, 0.213)$   & $0.371$     \\
        $\alpha=2.0$ & $(0.922,0.004)$ &  $(0.997, 0.049)$  & $(0.0004, 0.215)$   & $1.055$     \\
        $\alpha=2.5$ & $(0.952, 0.003)$ &  $(1.002, 0.049)$ & $(0.0018,0.211)$   & $1.009$      \\
        $\alpha=3.0$ & $(0.966, 0.002)$&  $(1.002, 0.049)$   & $(-0.0016, 0.204)$   & $0.985$    \\\hline
\end{tabular}
\end{scriptsize}
\end{center}
\caption{Dimension $N=100$ and $S^{N}=S^{N}_{2}$.}
\label{Table_d100}%
\end{table}

To visualize the situation, in Figure \ref{figsim1} we have plotted $\alpha$ versus the statistic $\widehat{CT}\sub$. In each case, standard MALA is represented by the rightmost value for $\alpha$. As we observe from Tables \ref{Table_d10}-\ref{Table_d100} and from Figure \ref{figsim1} and also indicated by the theory, when $\gamma>1/6$ (in particular $\gamma=1/2$ here), $\widehat{CT}\sub$ is smaller when $1<\alpha<2$. In particular, notice that for $\gamma=1/2>1/6$, $\widehat{CT}\sub$  seems to obtain a minimum value around $\alpha \approx 1.8<2$, which is in the fluid regime.
\begin{figure}[!h]
 \centering
\begin{tabular}{cc}
 \includegraphics[scale=0.3]{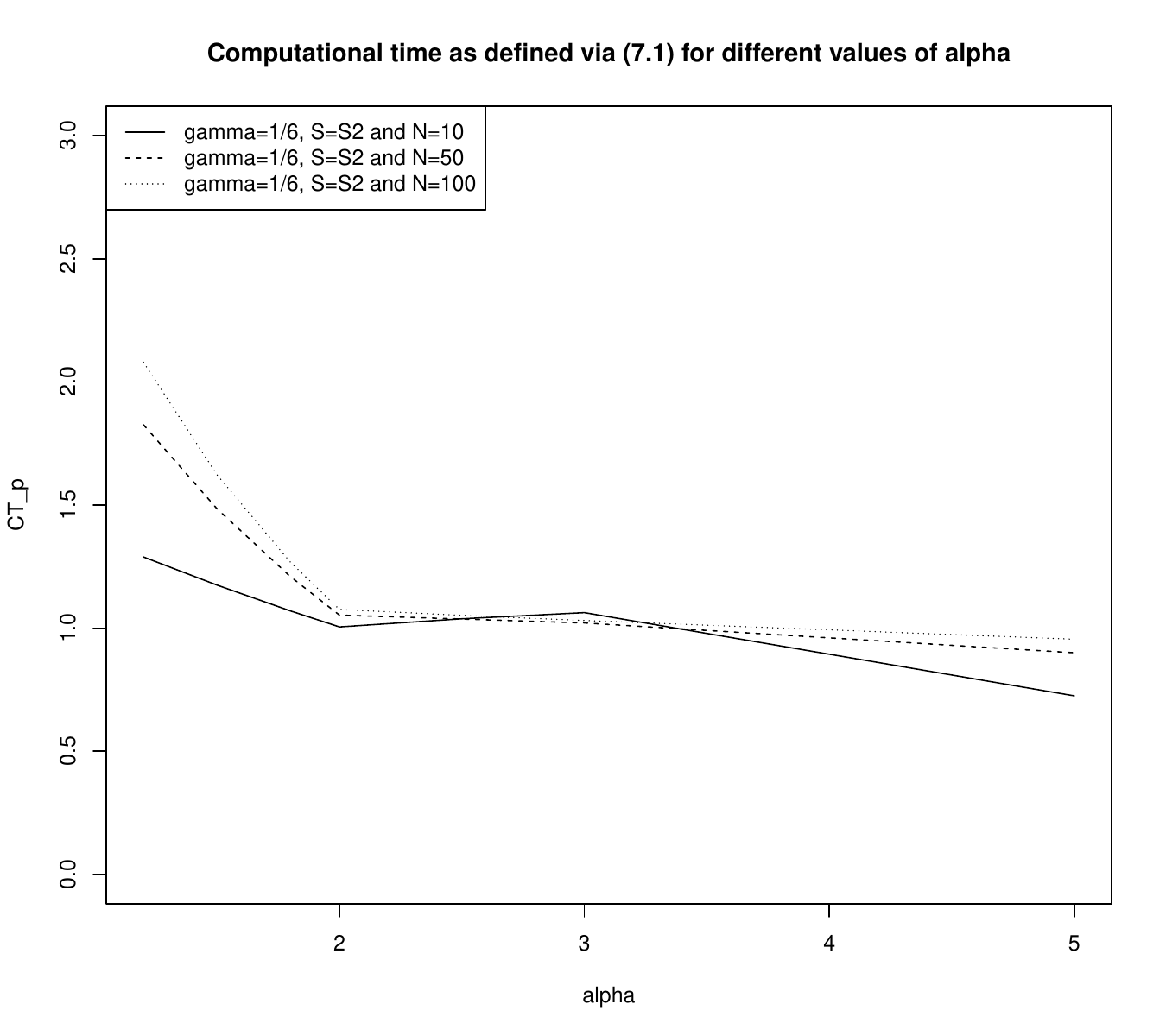}
  &
  \includegraphics[scale=0.3]{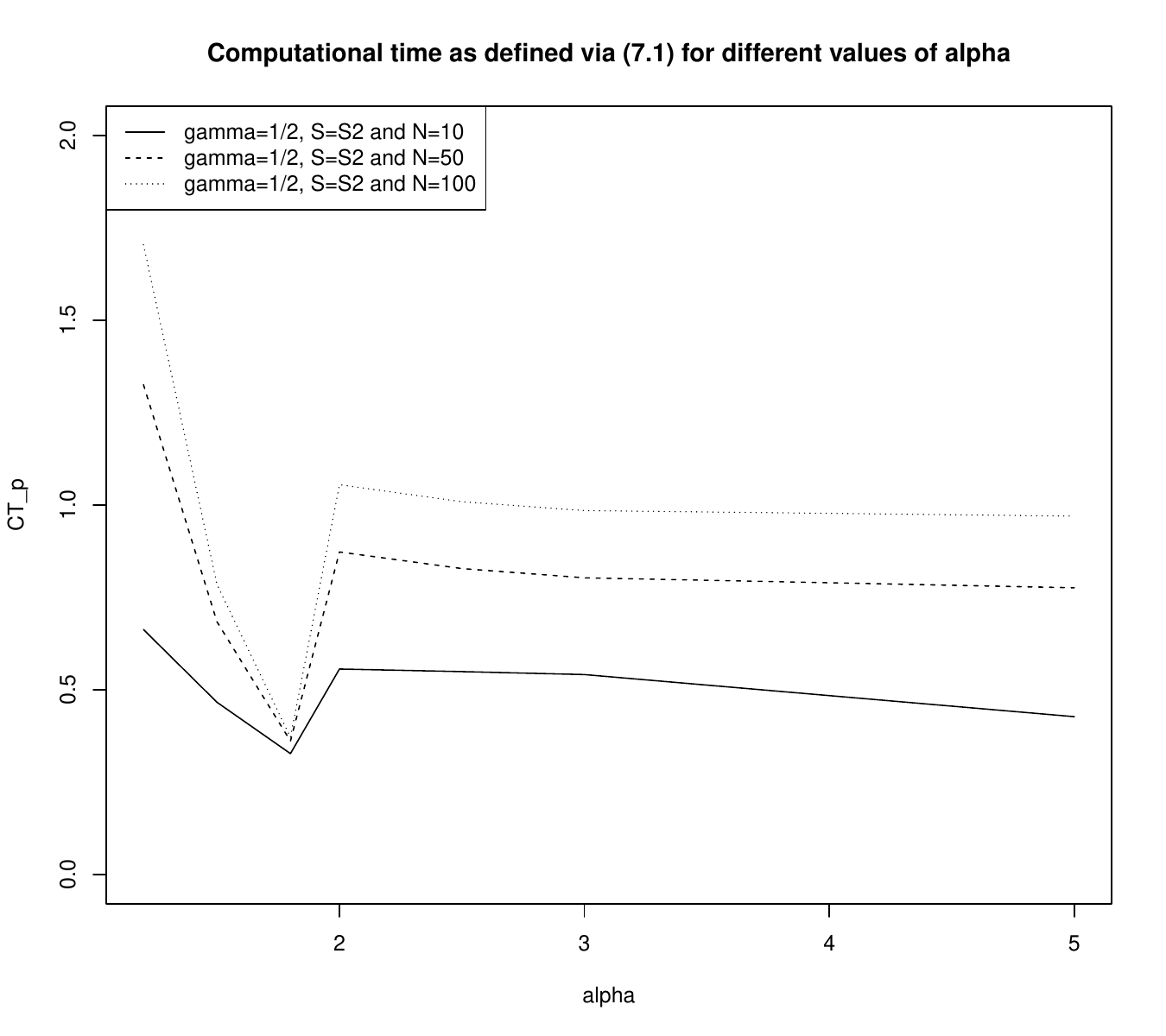}\\
 (a) $S^{N}=S^{N}_{2}$  and $\gamma=1/6$ &  (b) $S^{N}=S^{N}_{2}$  and $\gamma=1/2$\\
 \end{tabular}
\caption{Measure $\widehat{CT}\sub$ versus values of $\alpha$. In each of the figures the rightmost value for $\alpha$ corresponds to standard MALA} \label{figsim1}
 \end{figure}

The numerical studies presented above illustrate the theoretical findings of the paper and also show that the choice of the irreversible perturbations $S$ can have serious consequences on the performance of the algorithm.

In particular, in the case of $S=S^{N}_{1}$
\begin{itemize}
\item{the theoretical acceptance probability for \imala\ is increasing as a function of $\alpha$. This means that by increasing $\alpha$ one can increase the optimal acceptance probability considerably and combined with the fact that (a) the limit of the interpolation of the chain is a diffusion as in the standard MALA, and that (b) for large values of $\alpha$ the values of $\widehat{CT}\sub$ are the same for \imala\ and standard MALA, we have an algorithm with the same cost but much higher acceptance probabilities.}
 \item{the statistic $\widehat{CT}\sub$ for \imala\ converges to that  for the standard MALA case as $\alpha$ increases. However, when $1<\alpha<2$ the performance seems to be worse than standard MALA due to very low optimal acceptance probabilities in this case.}
 \item{for large $N$, the estimated standard deviation of the statistics that were computed, $\theta_{2}$ and $\theta_{3}$, appear to be slightly larger than  the corresponding estimated standard deviation of the statistics for the standard MALA case, but certainly of the same order. }
 \end{itemize}
On the other hand, in the case of $S=S^{N}_{2}$
\begin{itemize}
\item{ the theoretical acceptance probability for \imala\ is the same as that of standard MALA, i.e., around $0.574$, even though this seems to be realizable in practice in very high dimensions as even for $N=100$, the estimated optimal acceptance probabilities were lower.}
 \item{the statistic $\widehat{CT}\sub$ has lower values for $\gamma=1/2>1/6$ and $1<\alpha<2$ when compared to the corresponding values for the standard MALA case. This can be explained by the cost of the algorithm which in that case is $N^{\alpha\gamma}$ with $\alpha\gamma<1/3$ as opposed to the $N^{1/3}$ which is the cost of the standard MALA case. As $\alpha$ increases, the value of $\widehat{CT}\sub$ seems to approach that of standard MALA.}
 \item{the estimated standard deviation of the statistics that were computed, $\theta_{2}$ and $\theta_{3}$, are either the same or slightly larger than the corresponding estimated standard deviation of the statistics for the standard MALA case, but again of the same order. }
 \end{itemize}

 \section{Acknowledgments}\label{S:Acknowledgments}
 K.S. was partially supported by NSF CAREER AWARD NSF-DMS 1550918. 

\appendix

\section{Proof of Theorem \ref{mainthm}}\label{AppendixA}

In this section we present the proof of our main results. The proof is based on diffusion approximation techniques analogous to those used in \cite{MattinglyPillaiStuart2011}. In \cite{MattinglyPillaiStuart2011} the authors consider the MALA algorithm with reversible proposal. That is, if we fix $S=0$ in our paper and $\Psi=0$ in their paper, the algorithms we consider coincide.  For this reason we try to adopt a notation as similar as possible to the one used in \cite{MattinglyPillaiStuart2011} and, for the sake of brevity,  we detail only the parts of the proof that differ from the work \cite{MattinglyPillaiStuart2011} and just sketch the rest.

We start by  recalling that, by Fernique's theorem,
\be\label{finitemoments}
\Epin \| x^N\|^p =\EE \|\cCn^{1/2} z^N\|^p \less 1, \quad \mbox{for all } p\geq 1.
\ee
This fact will be often implicitly used without mention in the remainder of the paper.

We also recall that the chain $\{x_k^N\}_k$ that we consider is defined in \eqref{chain};  the drift-martingale decomposition of such a  chain is given in equation \eqref{dr-martal2} . 
Let us start by recalling the definition of  the continuous interpolant of the chain, equations \eqref{zeta=zetaalpha}-\eqref{continter},  and by introducing the piecewise constant interpolant of the chain $x_k^N$, that is,
\begin{equation*}
\bar{x}\bn(t)=x_{k}^N \qquad t_k\leq t< t_{k+1},
\end{equation*}
 where  $ t_k=k/N^{\zeta\gamma}$. It is easy to see (see e.g. \cite[Appendix A]{OPPS}) that \be\label{wf}
x\bn(t)=x_0^N+\int_{0}^t d\sub^N(\bar{x}\bn (v)) dv+   w\sub^N(t),
\ee
where
$$
w^N\sub(t):= \frac{1}{N^{\zeta\gamma/2}} \sum_{j=1}^{k-1}M_j^N +N^{\zeta\gamma/2}(t-t_k)M_k^N.
$$
For any  $t \in [0,T]$, we set
\begin{align}
\hat{w}\sub^N(t)&:= \int_0^t \left[ d\sub^N(\bar{x}\bn (v))-d\sub(x\bn(v)) \right] dv + w\sub^N(t)\nonumber \\
&= \int_0^t \left[ d\sub^N(\bar{x}\bn (v))-
d\sub(\bar{x}\bn(v)) \right] dv \nonumber\\
&+\int_0^t \left[ d\sub(\bar{x}\bn(v)) - d\sub({x}\bn(v)) \right] dv +w\sub^N(t). \label{whatN}
\end{align}
With the above notation, we can then rewrite \eqref{wf} as
\begin{align}\label{contmapetahatx}
x\bn(t)&=x_0^N+\int_{0}^t d\sub(x\bn(v)) dv+\hat{w}\sub^N(t).
\end{align}
Let now $C([0,T];\mathcal{H})$ denote the space of $\mathcal{H}$-valued continuous functions, endowed with the uniform topology and   consider  the map
\begin{align*}
\mathcal{I}:  \mathcal{H} \times C([0,T];\mathcal{H}) & \longrightarrow C([0,T];\mathcal{H})\\
 (x_0 , \eta(t)) & \longrightarrow  x(t).
\end{align*}
That is, $\mathcal{I}$ is the map that to every $(x_0 , \eta(t)) \in \mathcal{H} \times C([0,T];\mathcal{H})$ associates the (unique solution) of the equation
\begin{align}
x(t)=x_0+\int_0^t d\sub(x(s)) ds + \eta(t).\label{Eq:LimitProcess}
\end{align}
From \eqref{contmapetahatx} it is clear that $x\bn= \mathcal{I}(x_0^N, \hat{w}\sub^N)$. Notice that, under our continuity assumption on $\tS$, $\mathcal{I}$ is a continuous map. Therefore, in order to prove that $x\bn(t)$ converges weakly to $x(t)$ (where $x(t)$ is the solution of equation \eqref{Eq:LimitProcess} with $\eta(t)=D\sub W^{\cC}(t)$), by the continuous mapping theorem we only need to prove that   $\hat{w}\sub^N$ converges weakly to $D\sub W^{\cC}(t)$, where $W^{\cC}(t)$ is a $\mathcal{H}$-valued $\cC$-Brownian motion.
The weak convergence of $\hat{w}\sub^N$  to $D\sub W^{\cC}(t)$   is a consequence of \eqref{whatN} and of  Lemmata \ref{lemma1} and \ref{lemma2}. Then, we get the statement of Theorem \ref{mainthm}. The proof of Lemma \ref{lemma1} and Lemma \ref{lemma2} is contained in the remained of  this Appendix.

\begin{lemma}\label{lemma1}
Under Assumption \ref{AssCLT}, Assumption \ref{Ass2} and Condition \ref{Ass1}, the following holds
\be \label{l1}
\Epin\int_0^T \| d\sub^N(\bar{x}\bn (t))-
d\sub(\bar{x}\bn(t)) \| dt \longrightarrow 0, \quad \mbox{as } N \ra \infty
\ee
and
\be\label{l2}
\Epin \int_0^T \| d\sub (\bar{x}^N(t)) - d\sub ({x}\bn(t)) \| dt \longrightarrow 0, \quad \mbox{as } N \ra \infty,
\ee
where  the function $d\sub (x)$ has been defined in the statement of Theorem \ref{mainthm}.
\end{lemma}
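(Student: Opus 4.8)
The plan is to establish \eqref{l1} and \eqref{l2} separately; \eqref{l2} is essentially soft, while \eqref{l1} carries the real content. For \eqref{l2} I would first note that in each of the three regimes the limiting drift $d\sub$ is a bounded linear operator on $\cH$ --- one of $x\mapsto-\tfrac{\ell^2}{2}h\sub x$, $x\mapsto-\tfrac{\ell^2}{2}h\sub x+\tau\sub\ell^2\tS x$, or $x\mapsto\tau\sub\ell^\alpha\tS x$ --- since $\tS$ is bounded by \eqref{contCS}; hence $\|d\sub(\bar{x}\bn(t))-d\sub(x\bn(t))\|\less\|\bar{x}\bn(t)-x\bn(t)\|$. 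From the definition of the two interpolants, on $[t_k,t_{k+1})$ one has $\|\bar{x}\bn(t)-x\bn(t)\|\le\|x_{k+1}^N-x_k^N\|$, and using \eqref{chain}, \eqref{propprodgas}, the bound $\tilde\beta^N\le1$, the estimate $\|\tS^N x\|\le\kappa\|x\|$, and Fernique's inequality \eqref{finitemoments}, one gets $\Epin\|x_{k+1}^N-x_k^N\|\less\sigma_N=\ell N^{-\gamma}$ in stationarity (the $\sigma_N^2$ and $\sigma_N^\alpha$ contributions being of strictly higher order since $\alpha\ge1$). Summing over the $O(N^{\zeta\gamma})$ subintervals of $[0,T]$, each of length $N^{-\zeta\gamma}$, gives $\Epin\int_0^T\|\bar{x}\bn(t)-x\bn(t)\|\,dt\less T\sigma_N\to0$, which is \eqref{l2}.

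For \eqref{l1}, since the chain is started in stationarity, $\bar{x}\bn(t)=x_k^N\sim\pi^N$ for $t\in[t_k,t_{k+1})$, so by Fubini \eqref{l1} is equivalent to $\Epin\|d\sub^N(x^N)-d\sub(x^N)\|\to0$ for $x^N\sim\pi^N$, where $\EE_x[\,\cdot\,]$ below abbreviates the expectation over the proposal noise $z^N$ with $x^N=x$ held fixed. Recalling \eqref{apprdrift1}--\eqref{apprdrift2}, I would write $d\sub^N(x^N)=D_1^N(x^N)+D_2^N(x^N)$ with $D_1^N(x^N):=N^{\zeta\gamma}\EE_x[(1\wedge e^{Q^N})(-\tfrac{\ell^2}{2N^{2\gamma}}x^N+\tfrac{\ell^\alpha}{N^{\alpha\gamma}}\tS^N x^N)]$ and $D_2^N(x^N):=\ell N^{(\zeta-1)\gamma}\EE_x[(1\wedge e^{Q^N})\cCn^{1/2}z^N]$. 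The first summand of $D_1^N$ equals $-\tfrac{\ell^2}{2}N^{(\zeta-2)\gamma}\EE_x[1\wedge e^{Q^N}]\,x^N$, contributing $-\tfrac{\ell^2}{2}h\sub x^N$ when $\zeta=2$ and being negligible when $\zeta=\alpha<2$; the second summand equals $\ell^\alpha N^{(\zeta-\alpha)\gamma}\EE_x[(1\wedge e^{Q^N})\tS^N x^N]$, which vanishes when $\alpha>2$, tends to $h\sub\ell^2\tS x^N$ when $\alpha=2$, and to $h\sub\ell^\alpha\tS x^N$ when $1\le\alpha<2$. The two inputs are: $\EE_x[1\wedge e^{Q^N}]\to h\sub$ in $L^2(\pi^N)$ (with $h\sub=\EE[1\wedge e^{Q}]$ as in \eqref{hS}), which is the rigorous form of the Gaussian approximation $Q^N\Rightarrow Q$ of \eqref{asympnorQ}--\eqref{asympQnormal1} --- obtained from Assumption \ref{AssCLT}, Assumption \ref{Ass2} and Condition \ref{Ass1}, with uniform integrability supplied by the exponential moments in Assumption \ref{extrassT2}(1) --- and $\Epin\|(\tS^N-\tS)x^N\|^2\to0$, which follows from \eqref{contCS} and the trace-class property of $\cC$ by dominated convergence. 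Together with the Cauchy--Schwarz inequality and \eqref{finitemoments}, these handle $D_1^N$.

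The core of the argument, and the step I expect to be the main obstacle, is $D_2^N$: since $z^N$ sits inside the nonlinear factor $1\wedge e^{Q^N}$, the vector $\EE_x[(1\wedge e^{Q^N})\cCn^{1/2}z^N]$ does not vanish, and I must show that $D_2^N$ tends (in $L^1(\pi^N)$) to $0$ when $\alpha>2$, to $-2\nu\sub\ell^2\tS x^N$ when $\alpha=2$, and to $-2\nu\sub\ell^\alpha\tS x^N$ when $1\le\alpha<2$, where $\nu\sub=\EE[e^{Q}\one_{\{Q<0\}}]$ as in \eqref{bfh}. The plan is to apply Gaussian integration by parts (Stein's lemma, exploiting the diagonality of $\cC^N$): $\EE_x[(1\wedge e^{Q^N})\cCn^{1/2}z^N]=\EE_x[e^{Q^N}\one_{\{Q^N<0\}}\,\cCn^{1/2}\nabla_z Q^N]$. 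Among the linear-in-$z^N$ terms of $Q^N$ --- $\lancn{x^N}{\cCn^{1/2}z^N}$ from $\bar{Q}^N$ (coefficient of order $\sigma_N^3$) and $\langle\cCn^{1/2}z^N,S^N x^N\rangle$, $\lancn{\tS^N\cCn^{1/2}z^N}{\tS^N x^N}$, $\lancn{\tS^N\cCn^{1/2}z^N}{(\tS^N)^2 x^N}$ from $Q^N_\alpha$ (coefficients of order $\sigma_N^{\alpha-1}$, $\sigma_N^{2\alpha-1}$, $\sigma_N^{3\alpha-1}$) --- the dominant one is $-2\sigma_N^{\alpha-1}\langle\cCn^{1/2}z^N,S^N x^N\rangle$, for which $\cCn^{1/2}\nabla_z$ of the term equals the non-random (given $x$) vector $-2\sigma_N^{\alpha-1}\tS^N x^N$; it therefore factors out of the expectation, leaving $-2\sigma_N^{\alpha-1}\tS^N x^N\,\EE_x[e^{Q^N}\one_{\{Q^N<0\}}]$ with $\EE_x[e^{Q^N}\one_{\{Q^N<0\}}]\to\nu\sub$ in $L^2(\pi^N)$ (from $Q^N\Rightarrow Q$ and Lemma \ref{lemgaussmin1}), while the remaining pieces of $\cCn^{1/2}\nabla_z Q^N$ --- the other linear coefficients, and the linear-in-$z^N$ contributions coming from differentiating the quadratic-in-$z^N$ terms of $Q^N$ --- are shown, using Assumption \ref{extrassT2}(1)--(2), to produce lower-order terms. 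Thus $D_2^N=\ell N^{(\zeta-1)\gamma}\bigl(-2\sigma_N^{\alpha-1}\nu\sub\tS^N x^N\bigr)+o(1)$ has prefactor $-2\ell^\alpha N^{(\zeta-\alpha)\gamma}$, which is $-2\ell^2$ when $\alpha=2$ ($\zeta=2$), $-2\ell^\alpha$ when $1\le\alpha<2$ ($\zeta=\alpha$), and tends to $0$ when $\alpha>2$ ($\zeta=2$); since $\tS^N x^N$ is $L^1(\pi^N)$-close to $\tS x^N$ as above, this gives the claimed limits.

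Adding the limits of $D_1^N$ and $D_2^N$ reproduces the drift $d\sub$ of Theorem \ref{mainthm} with $\tau\sub=h\sub-2\nu\sub$ as in \eqref{mhs1} --- namely $-\tfrac{\ell^2}{2}h\sub x^N$ for $\alpha>2$, $-\tfrac{\ell^2}{2}h\sub x^N+\tau\sub\ell^2\tS x^N$ for $\alpha=2$, and $\tau\sub\ell^\alpha\tS x^N$ for $1\le\alpha<2$ --- which is \eqref{l1}. The lengthy but essentially routine book-keeping behind the $D_2^N$ step (the full list of terms in $\cCn^{1/2}\nabla_z Q^N$ and the verification that all but the dominant one vanish in $L^1(\pi^N)$ after the rescaling) is carried out in Lemma \ref{lemmapreliminaries} of Appendix \ref{AppendixB}, on which the present argument relies; note that $L^1(\pi^N)$ convergence, rather than just convergence in probability, is what the integrated quantities in \eqref{l1}--\eqref{l2} require, so the exponential-moment bounds of Assumption \ref{extrassT2} are invoked at each stage to pass from convergence in distribution to convergence of first and second moments.
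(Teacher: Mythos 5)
Your proposal is correct and follows essentially the same route as the paper: the same splitting of $d\sub^N$ into the two drift terms and the noise term $\epsilon\sub^N(x)=\EE_x[(1\wedge e^{Q^N})\,(\cC^N)^{1/2}z^N]$, the same Lipschitz-plus-one-step-increment bound for \eqref{l2}, and the key limits $\Epin\lv h\sub^N-h\sub\rv^2\to0$ and $\Epin\|N^{\gamma(\alpha-1)}\epsilon\sub^N(x)+2\ell^{\alpha-1}\nu\sub\tS x\|^2\to0$ imported from Lemma \ref{lemmapreliminaries}, exactly as the paper does. Your global Stein's-lemma derivation of the dominant contribution $-2\sigma_N^{\alpha-1}\tS^N x^N\,\EE_x[e^{Q^N}\mathbf{1}_{\{Q^N<0\}}]$ is just the vectorized form of the paper's component-wise conditioning combined with Lemma \ref{lemgaussmin3}, so the content coincides.
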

Set now
\begin{equation}\label{epsx}
\epsilon\sub^N(x):= \EE_x \left[\left(1 \wedge e^{Q^N} \right) \cC_N^{1/2}z^N \right]
\end{equation}
and
\begin{equation*}
h^N\sub(x):= \EE_x \left(1 \wedge e^{Q^N} \right).
\end{equation*}
While $h\sub$ (see \eqref{hS}) is the limiting average acceptance probability, $h\sub^N(x)$ is the {\em local} average acceptance probability. The above notation will be used in the proof of the next lemma.

\begin{lemma}\label{lemma2}
If Assumption \ref{AssCLT}, Assumption \ref{Ass2} and Condition \ref{Ass1} hold, then $w\sub^N(t)$ converges weakly in $C([0,T];\cH)$ to
$D\sub W^{\cC}(t)$, where $W^{\cC}(t)$ is a $\mathcal{H}$-valued $\cC$-Brownian motion and the constant $D\sub$ has been defined in the statement of Theorem \ref{mainthm}.
\end{lemma}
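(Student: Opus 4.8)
The plan is to derive Lemma \ref{lemma2} from a functional martingale central limit theorem in the separable Hilbert space $\cH$. By construction the increments $M_j^N$ appearing after \eqref{apprdrift} satisfy $\mathbb{E}_j M_j^N = 0$, so the rescaled partial sums $\frac{1}{N^{\zeta\gamma/2}}\sum_{j=1}^{k-1} M_j^N$ form a discrete-time $\cH$-valued martingale for the filtration $\mathcal{F}_k := \sigma(x_0^N, z_1^N,\dots,z_k^N)$, and $w\sub^N(t)$ is its piecewise-linear interpolation on the grid $t_k = k/N^{\zeta\gamma}$. Using a martingale invariance principle adapted to separable Hilbert spaces (cf. the argument in \cite{MattinglyPillaiStuart2011}), it then suffices to verify: (i) convergence of the conditional quadratic variation, i.e. for every $t\in[0,T]$ and $\varphi,\psi\in\cH$,
\[
\frac{1}{N^{\zeta\gamma}}\sum_{j=1}^{\lfloor N^{\zeta\gamma} t\rfloor}\mathbb{E}_j\!\left[\langle M_j^N,\varphi\rangle\langle M_j^N,\psi\rangle\right]\;\PpConv\; t\,D\sub^2\,\langle\cC\varphi,\psi\rangle;
\]
and (ii) a Lindeberg condition, for which a uniform bound $\sup_{j,N}\mathbb{E}_j\|M_j^N\|^{2+\delta}\less 1$ together with $\|M_j^N\|\less N^{-\zeta\gamma/2}(1+\|x_j^N\|)$ up to a negligible error is enough.

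The substance is (i). From the expansion \eqref{Eq:GeneralMartingaleDiff} the four summands carry the powers $N^{(\zeta/2-2)\gamma}$, $N^{(\zeta/2-\alpha)\gamma}$, $N^{(\zeta/2-1)\gamma}$ and $N^{-\zeta\gamma/2}$. In the diffusive regime ($\alpha\geq 2$, $\gamma=1/6$, $\zeta=2$) only the third exponent fails to decay (since $\zeta/2-1=0$), while the first two vanish (using $\alpha>1$) and the last multiplies the bounded quantity $d\sub^N$; hence, up to terms that vanish in $L^2$ by \eqref{finitemoments} and the moment bounds under $\pi^N$,
\[
M_k^N\;\simeq\;\ell\,\tilde{\beta}^N \cC^{1/2}z_{k+1}^N - \mathbb{E}_k\!\left[\ell\,\tilde{\beta}^N \cC^{1/2}z_{k+1}^N\right].
\]
Because $\tilde\beta^N\sim\mathrm{Bernoulli}(\beta^N)$ satisfies $(\tilde\beta^N)^2=\tilde\beta^N$, the conditional second moment of the first term equals $\ell^2\,\mathbb{E}_k[(1\wedge e^{Q^N})\,\cC^{1/2}z_{k+1}^N\otimes\cC^{1/2}z_{k+1}^N]$. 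The key point is then that $(1\wedge e^{Q^N})$ and $z_{k+1}^N$ decorrelate coordinatewise, since $Q^N$ depends on $z_{k+1}^N$ only through a handful of scalar products whose correlation with any fixed coordinate of $z_{k+1}^N$ is $O(N^{-1/6})$; this is exactly the smallness of $\epsilon\sub^N(x)$ from \eqref{epsx}, and it forces this conditional covariance operator to converge to $h\sub^N(x_k)\,\cC$. Averaging over $j\leq N^{\zeta\gamma}t$, using stationarity and the $L^2$-convergence $Q^N\approx Q$ from \eqref{asympnorQ}, gives $\frac{1}{N^{\zeta\gamma}}\sum_j h\sub^N(x_j^N)\PpConv t\,h\sub$ with $h\sub$ as in \eqref{hS}, which identifies the limiting quadratic variation as $t\,\ell^2 h\sub\,\cC = t\,D\sub^2\,\cC$. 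In the fluid regime ($1\leq\alpha<2$, $\zeta=\alpha$) all four exponents in \eqref{Eq:GeneralMartingaleDiff} are strictly negative (in particular $\zeta/2-1=\alpha/2-1<0$), so $\mathbb{E}_j\|M_j^N\|^2\to 0$ uniformly and $w\sub^N(t)\PpConv 0 = D\sub W^{\cC}(t)$, with nothing further to prove.

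To promote finite-dimensional convergence to convergence in $C([0,T];\cH)$, I would establish tightness by combining the one-dimensional functional martingale CLT in each coordinate $\langle\cdot,\varphi_j\rangle$ with a tail estimate: since $\cC$ is trace class and \eqref{finitemoments} holds, the contribution of the modes $\varphi_j$ with $j>J$ to $\sup_{t\leq T}\|w\sub^N(t)\|^2$ is $\less\sum_{j>J}\lambda_j^2$ uniformly in $N$, which is an admissible compactness criterion in $\cH$. Together with Lemma \ref{lemma1}, the decomposition \eqref{whatN} and the continuous mapping argument preceding the statement, this yields Theorem \ref{mainthm}.

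The main obstacle is the decorrelation estimate underlying the identification of the covariance: one must show quantitatively that $\epsilon\sub^N(x)=\mathbb{E}_x[(1\wedge e^{Q^N})\cC^{1/2}z^N]$ is small enough (summing to $o(N^{-\zeta\gamma/2})$, and in the borderline case $\alpha=2$ of the precise order that contributes the extra drift term $\tau\sub\ell^2\tilde{S}x$ handled in Lemma \ref{lemma1}), which requires conditioning carefully on the Gaussian vector $z^N$ inside $Q^N$, exploiting Condition \ref{Ass1} and Assumptions \ref{AssCLT}--\ref{Ass2}, and bounding uniformly in $N$ all the lower-order terms discarded in the heuristic passages \eqref{QNdecomp}--\eqref{leqan}. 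This, and the concentration of the ergodic average of $h\sub^N(x_j^N)$, are the technically delicate parts.
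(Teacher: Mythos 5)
Your proof is correct and follows essentially the same route as the paper: starting from \eqref{Eq:GeneralMartingaleDiff} one isolates $\ell\tilde{\beta}^N\cC^{1/2}z_{k+1}^N$ as the only non-vanishing contribution in the diffusive regime (the remaining terms being killed by Lemma \ref{lemma1} and the negative powers of $N$), identifies the limiting conditional covariance as $\ell^2 h\sub\,\cC$, and invokes a Hilbert-space martingale invariance principle, while in the fluid regime all terms vanish. The only differences are in bookkeeping: the paper establishes just the convergence of the trace of the conditional covariance $\mathcal{M}_N(x)$, citing Lemma 4.8 of \cite{PST} for the second-moment decorrelation of $(1\wedge e^{Q^N})$ from $z^N$ (which is the second-moment analogue of the $\epsilon\sub^N$ estimate \eqref{epsx}, not literally that estimate) and Proposition 5.1 of \cite{Berger} for the invariance principle, whereas you verify the quadratic-variation, Lindeberg and tightness conditions by hand.
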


\begin{proof}[Proof of Lemma \ref{lemma1}]
We start by proving \eqref{l2}, which is simpler. The drift coefficient $d\sub$ is globally Lipshitz; therefore, using  \eqref{chain},  \eqref{propprodgas} and \eqref{defsigma}, if $t_k\leq t <t_{k+1}$, we have
\begin{align*}
\Epin \| d\sub (\bar{x}\bn (t)) - d\sub ({x}\bn(t)) \|&\less \Epin \| \bar{x}\bn(t) - x\bn(t)\| \\
& \less \lv (N^{\zeta\gamma} t - k)   \rv \Epin \| x_{k+1}^N-x\kn \| \less \Epin \| y_{k+1}^N-x\kn \| \\
& \less \left(\frac{1}{N^{2\gamma}}+\frac{1}{N^{\alpha\gamma}} \right)
\Epin \| x_k^N\| + \frac{1}{N^{\gamma}}
\mathbb{E} \| (\cC^N)^{1/2}z_{k+1}^N\| \ra 0.
\end{align*}

Let us now come to the proof of \eqref{l1}.
From \eqref{apprdrift1}-\eqref{apprdrift2}, we have
$$
d^N\sub(x)- d\sub(x) =A_1^N+A_2^N+A_3^N - d\sub(x),
$$
where
\begin{align}
A_1^N & :=  N^{\zeta\gamma}\EE_x\left[ (1 \wedge e^{Q^N}) \left(- \frac{\ell^2}{2N^{2\gamma}}  x^N\right) \right]  \label{A1}\\
A_2^N & :=  N^{(\zeta-\alpha)\gamma} \ell^{\alpha}\EE_x\left[ (1 \wedge e^{Q^N}) \tilde{S}^N x^N\right] \nonumber\\
A_3^N & := N^{(\zeta-1)\gamma} \ell  \EE_x\left[ (1 \wedge e^{Q^N}) \cC_N^{1/2} z^N \right]
\stackrel{\eqref{epsx}}{=} N^{(\zeta-1)\gamma} \ell \epsilon\sub^N(x) . \label{A3333}
\end{align}
We split the function $d\sub(x)$ in three (corresponding) parts:
$$
d\sub(x) = A_1(x)+A_2(x)+A_3(x),
$$
with
$$
A_1:= \left\{
\begin{array}{ll}
- \frac{\ell^2}{2} h\sub x  & \mbox{if } \gamma = 1/6 \mbox{ and } \alpha \geq 2\\
0 & \mbox{otherwise}
\end{array}
\right.
$$
$$
A_2:= \left\{
\begin{array}{ll}
h\sub {\ell^{\alpha}} \tilde{S} x  & \mbox{if } \gamma \geq 1/6 \mbox{ and } \alpha  \leq 2\\
0 & \mbox{otherwise}
\end{array}
\right.
$$
$$
A_3:= \left\{
\begin{array}{ll}
-2 \nu\sub {\ell^{\alpha}} \tilde{S} x  & \mbox{if } \gamma \geq 1/6 \mbox{ and } 1 \leq  \alpha  \leq 2\\
0 & \mbox{otherwise}
\end{array}
\right.
$$

We therefore need to consecutively estimate the above three terms.

\smallskip

$ \bullet {\bf \,\, A_1^N-A_1}:$ if $\alpha \geq 2$ (and $\gamma=1/6$) we fix $\zeta=2$ and we have
\begin{align}
\Epin \| A_1^N-A_1\|  & \less \Epin \left[ \lv \EE_x \left(1 \wedge e^{Q^N} \right) -h\sub \rv \| x\|  \right]+ \Epin \nor{x^N-x} \nonumber\\
& \stackrel{\eqref{finitemoments}}{\less}
\left(\Epin \lv h^N\sub(x) -h\sub \rv ^2\right)^{1/2}+ \Epin \nor{x^N-x}\longrightarrow  0, \label{sss1}
\end{align}
as the first addend tends to zero by Lemma \ref{lemmapreliminaries} of \ref{AppendixB} and the second addend tends to zero by definition (see also \cite[equation (4.3)]{PST}). If $\alpha<2$ then $\zeta=\alpha$ and we have
\be\label{sss}
\Epin \|A_1^N\|\less N^{(\alpha-2)\gamma}
\Epin\| \EE_x (1 \wedge e^{Q^N}) x^N \| \less N^{(\alpha-2)\gamma} \ra 0\,.
\ee

$ \bullet {\bf \,\, A_2^N-A_2}:$ if $\alpha \leq 2$ then, recalling \eqref{zeta=zetaalpha},  a calculation analogous to the one in \eqref{sss1} gives the statement. If $\alpha>2$ then we can act as in \eqref{sss}.

$ \bullet {\bf \,\, A_3^N-A_3}:$ by Lemma \ref{lemmapreliminaries} of \ref{AppendixB} (and \eqref{zeta=zetaalpha}), we have
$$
\Epin \|A_3^N - A_3\| \ra 0 \quad \mbox{as } N \ra \infty.
$$
This concludes the proof.
\end{proof}
\begin{proof}[Proof of Lemma \ref{lemma2}] The calculations here are standard so we only prove it for the case $\gamma=1/6$ and $\alpha>2$.
Let us recall the martingale difference given by (\ref{Eq:GeneralMartingaleDiff}).
In the case $\gamma=1/6$ and $\alpha>2$, we have $\zeta=2$ and $d\sub (x)=-\frac{\ell^2}{2}h\sub x$. Hence, the expression (\ref{Eq:GeneralMartingaleDiff}) becomes
\begin{align}
M_k^N&=N^{-1/6}\left(\tilde{\beta}^{N}\frac{1}{h\sub}d\sub (x_k^N)\right) + N^{(1-\alpha)/6}\left(\ell^{\alpha}\tilde{\beta}^{N} \tilde{S} x_k^N\right)+
\left(\ell\tilde{\beta}^{N} \cC^{1/2}z_{k+1}^N\right)- N^{-1/6}d\sub^{N}(x_k^N)\nonumber\\
&=\left(\ell\tilde{\beta}^{N} \cC^{1/2}z_{k+1}^N\right)+ N^{-(\alpha-1)/6}\left(\ell^{\alpha}\tilde{\beta}^{N} \tilde{S} x_k^N\right) -  N^{-1/6}\frac{1}{h\sub}\left[h\sub d\sub^{N}(x_k^N)-\tilde{\beta}^{N}d\sub (x_k^N)\right]\nonumber
\end{align}

By Lemma \ref{lemma1}, we have that $\Epin  \| d\sub^N(x)-d\sub (x) \|^{2}  \longrightarrow 0$ as $N\rightarrow\infty$. This implies that
\begin{align*}
 N^{-1/3}\Epin \| h\sub d\sub^{N}(x)-\tilde{\beta}^{N}d\sub (x)\|^{2}&\rightarrow 0.
\end{align*}

At the same time, we also notice that
\begin{align*}
N^{-(\alpha-1)/3}\ell^{2\alpha}\Epin \|\tilde{\beta}^{N} \tilde{S} x\|^2
& \less  N^{-(\alpha-1)/3}  \ell^{2\alpha} \Epin \|  x\|^2 \longrightarrow 0  \quad \mbox{if } \alpha>1.
\end{align*}

Hence, if we define $\mathcal{M}_{N}(x)=\EE_x\left[M_k^N \otimes M_k^N|x_k^N=x\right]$, we obtain that up to a constant
\begin{align*}
 \Epin \left|\text{Tr}(\mathcal{M}_{N}(x))-\EE_x\left[\|\ell\tilde{\beta}^{N} \cC^{1/2}z\|^2\right]\right|&\leq N^{-1/3}
\end{align*}

Then, as in Lemma 4.8 of \cite{PST} we obtain that
\begin{align*}
 \Epin \left|\ell^{2}h\sub\text{Tr}(\cC)-\EE_x\left[\|\ell\tilde{\beta}^{N} \cC^{1/2}z\|^2\right]\right|&\rightarrow 0, \text{ as } N\rightarrow\infty
\end{align*}
which then immediately implies that
\begin{align*}
 \Epin \left|\ell^{2}h\sub\text{Tr}(\cC)-\text{Tr}(\mathcal{M}_{N}(x))\right|&\rightarrow 0, \text{ as } N\rightarrow\infty.
\end{align*}
The latter result implies that the invariance principle of Proposition 5.1 of \cite{Berger} holds, which then imply the statement of the lemma.
\end{proof}
\section{Auxiliary estimates}\label{AppendixB}

We first decompose $Q^N$ 
as follows: let
\begin{align}\label{tildeQN}
{Z}^N := & - \frac{\ell^6}{32}-a - \frac{\ell^3}{4 {N^{3\gamma}}}
 \lanc{x}{\cC^{1/2} z^N}\nonumber \\
& -2 \frac{\ell^{\alpha-1}}{N^{(\alpha-1)\gamma}} \langle (\cC^N)^{1/2}z^N, S^N x^N \rangle
- \frac{\ell^{(2\alpha-1)}}{N^{(2\alpha-1)\gamma}} \lancn{\tS^N (\cC^N)^{1/2} z^N}{\tS^N x^N} \nonumber\\
& - \frac{\ell^{(3\alpha-1)}}{N^{(3\alpha-1)\gamma}} \lancn{\tS^N \cCn^{1/2} z^N}{(\tS^N)^2 x^N} \,.
\end{align}
Then
\be\label{q=Q^N+estar}
Q^N = {Z}^N + e_{\star}^N,
\ee
where
\begin{align}\label{estar}
e_{\star}^N  & :=   \frac{\ell^6}{32}- \frac{\ell^6}{32 \,N^{6\gamma}} \norcn{x^N}^2 \nonumber \\
 &+ a -2 \frac{\ell^{2(\alpha-1)}}{N^{(\alpha-1) 2\gamma}}\norcn{\tS^N x^N}^2
- \frac{\ell^{2(2\alpha-1)}}{2 N^{2 \gamma (2\alpha-1)}}\norcn{\tSn^2 x^N}^2 \nonumber \\
& + i^N(x,z)+ e^N(x,z) ,
\end{align}
with
\begin{equation*}
i^N(x,z):= i_1^N(x,z) + i_2^N(x,z),
\end{equation*}
having defined
\begin{align}
& i_1^N(x,z): =\frac{\ell^4}{8 N^{4\gamma}} \left( \norcn{x^N}^2- \| z^N\|^2\right) \label{defi1}\\
& i_2^N(x,z): =\frac{\ell^{2\alpha}}{2 N^{2 \alpha \gamma}} \left( \norcn{\tS^N x^N}^2-
\norcn{\tS^N \cCn^{1/2}z^N}^2\right), \label{defi2}
\end{align}
and
\begin{align}\label{defeNN}
e^N &:=  \frac{\ell^5}{8 N^{5\gamma}} \lancn{x^N}{\cCn^{1/2}z^N}-
\frac{\ell^{2(\alpha+1)}}{4 N^{2(1+\alpha)\gamma}}
 \norc{\tS^N x^N}^2\nonumber \\
& - \frac{\ell^{3+\alpha}}{4 N^{(3+\alpha)\gamma}} \langle \cCn^{1/2}z^N, S^N x^N \rangle
+ \frac{\ell^{(2\alpha+1)}}{2 N^{(2\alpha+1)\gamma}} \lancn{\tS \cCn^{1/2} z^N}{\tS^N x^N}.
\end{align}
Finally, we set
\be\label{tildeeN}
\tilde{e}^N(x,z):= e^N+\frac{\ell^{2(\alpha+1)}}{4 N^{2(1+\alpha)\gamma}}
 \norcn{\tS^N x^N}^2.
\ee
That is, $\tilde{e}^N$ contains only the addends of $e^N$ that depend on the noise $z$.

Furthermore, we split $Q^N(x,z)$ into the terms that contain $z^{j,N}$ and the terms that don't, $Q^N_j$ and $Q^N_{j, \perp}$, respectively; that is
$$
Q^N= Q^{N}_j+ Q^N_{j, \perp},
$$
where
\be\label{defQj}
Q_j^N:= \tilde{e}^N+ (i_1^N)_j+ (i_2^N)_j+ {Z}^N_j,
\ee
having denoted by $(i_1^N)_j, \, (i_2^N)_j$ and  ${Z}^N_j$, the part of $i_1^N, i_2^N$ and
${Z}^N$, respectively, that depend on $z^{j,N}$.

\begin{lemma}\label{variousestimates} Let Assumption \ref{AssCLT}, \,Assumption \ref{Ass2}\, and Condition \ref{Ass1} hold;  then,
\begin{align}
\textrm{i)}& \quad  \Epin \lv \tilde{e}^N \rv^2 \less \frac{1}{N^{2/3}}
+ \frac{1}{N^{4 \gamma}}, \quad
 \mbox{ for all } \alpha\geq 1, \, \gamma \geq 1/6 \label{lem911}\\
\textrm{ii)}& \quad  N^{1/3} \Epin \sum_{j=1}^N
\lambda_j^2 \lv (i_1^N)_j\rv^2  \longrightarrow 0, \quad   \mbox{as }\, N\ra \infty\nonumber
\\
\textrm{iii)}& \quad  \Epin \lv i_2^N\rv^2 \less \frac{1}{N^{4 \gamma}}, \mbox{ for all } \alpha\geq 1, \gamma\geq 1/6\nonumber
\\
\textrm{iv)}& \quad  N^{1/3} \Epin \sum_{j=1}^N
\lambda_j^2 \lv {Z}^N_j\rv^2  \longrightarrow 0, \quad \mbox{as }\,N \ra \infty,  \mbox{ for all } \alpha>2 , \gamma=1/6
\label{lem914}\\
\textrm{v)}& \quad \Epin \lv e^N \rv^2 \less \frac{1}{N^{2/3}}
+ \frac{1}{N^{4 \gamma}}, \quad
\mbox{ for all } \alpha\geq 1, \, \gamma \geq 1/6
\\
\textrm{vi)}& \quad \Epin \lv (\mathrm{Var}_x {Z}^N)^{1/2} - (\mathrm{Var} Q)^{1/2} \rv^2 \longrightarrow 0, \quad  \mbox{as }\, N \ra \infty, \mbox{ for all } \alpha\geq 1, \, \gamma \geq 1/6
\,.\label{ve6}
\end{align}
\end{lemma}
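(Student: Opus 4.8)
\textbf{Proof proposal for Lemma \ref{variousestimates}.} The plan is to treat the six estimates as a sequence of increasingly delicate second-moment computations, all carried out under $\pi^N$ (using that $x^N\sim\pi^N$ can be written as $x^N=\sum_i\lambda_i\rho_i\varphi_i$ with $\rho_i$ i.i.d. $\cN(0,1)$, and $z^N\sim\cN(0,\mathrm{Id}_N)$ independent) and exploiting the moment bound \eqref{finitemoments}. The common mechanism is: each quantity is a polynomial expression in $x^N$ and $z^N$ whose $\pi^N$-expectation can be bounded by a product of traces of powers of $\cC^N$ together with the $\V$-type quantities introduced in \eqref{asvar}; powers of $\sigma_N=\ell/N^\gamma$ then supply the decay. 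Concretely, for (i), (iii), (v) I would expand the squares, take expectations term by term, and bound each term: terms like $\sigma_N^{2k}\Epin\norcn{x^N}^4$ are $\less\sigma_N^{2k}N^2$ by the law of large numbers (so $\less N^{2-2k\gamma}$), cross terms involving $z^N$ vanish in expectation or are controlled by Cauchy--Schwarz, and terms involving $\tS^N$ are controlled by Assumption \ref{Ass2} (via \eqref{f1}--\eqref{f3}, which give $\Epin\norcn{\tS^Nx^N}^2\less N^{2(\alpha-1)\gamma}$, etc., so that the relevant power of $\sigma_N$ renders them $O(1)$, and the leftover $\sigma_N$-power gives the stated decay). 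Plugging $\gamma\geq 1/6$ collapses the bounds to $N^{-2/3}+N^{-4\gamma}$ (for (i), (v)) and $N^{-4\gamma}$ (for (iii)).

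For (ii) and (iv) the point is more subtle because one must see $N^{1/3}$-decay after summing the $j$-dependent pieces; here $(i_1^N)_j=\frac{\ell^4}{8N^{4\gamma}}(\lambda_j^{-2}|x^{j,N}|^2-|z^{j,N}|^2)$ (only the $j$-th coordinate of the two norms), and $\sum_j\lambda_j^2|(i_1^N)_j|^2$ is $\less\sigma_N^8\sum_j\lambda_j^2(\lambda_j^{-4}|x^{j,N}|^4+|z^{j,N}|^4)$, whose expectation is $\less\sigma_N^8\sum_j\lambda_j^{-2}\lambda_j^4+\sigma_N^8\sum_j\lambda_j^2\less\sigma_N^8=N^{-8\gamma}$; multiplying by $N^{1/3}$ and using $\gamma\geq 1/6$ gives $N^{1/3-8\gamma}\to 0$. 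For (iv) the same scheme applies to $\sum_j\lambda_j^2|\ZN_j|^2$: $\ZN_j$ collects the $z^{j,N}$-linear parts of the three scalar products in \eqref{tildeQN}, so $\sum_j\lambda_j^2|\ZN_j|^2$ is a sum of terms like $\sigma_N^{2(\alpha-1)}\sum_j\lambda_j^2|(S^Nx^N)^j|^2=\sigma_N^{2(\alpha-1)}\norcn{\tS^Nx^N}^2$ (after using that $\cC^N$ is diagonal), whose expectation is $\less\sigma_N^{2(\alpha-1)}N^{2(\alpha-1)\gamma}=O(1)$, multiplied by additional positive powers of $\sigma_N$; with $\alpha>2$ and $\gamma=1/6$ each surviving term carries a genuine negative power of $N$ that beats $N^{1/3}$ (this is exactly where $\alpha>2$ rather than $\alpha\geq 2$ is used, since at $\alpha=2$ the leading $z$-linear irreversible term no longer decays). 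One also needs the cross terms to be handled, which is immediate since distinct scalar products involve $z^{j,N}$ with coefficients of different $\sigma_N$-order and the worst case is controlled by Cauchy--Schwarz.

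Estimate (vi) is the main obstacle and should be handled last. The idea is: $\mathrm{Var}_x \ZN$ is a quadratic form in the (Gaussian, conditionally on $x$) vector $z^N$, namely $\mathrm{Var}_x\ZN=\frac{\ell^6}{16}\cdot\frac{\norcn{x^N}^2}{N^{6\gamma}\cdot(\text{const})}+4\sigma_N^{2(\alpha-1)}\|(\cC^N)^{1/2}S^Nx^N\|^2/(\cdots)+\cdots$, i.e. an explicit expression whose terms are exactly the quantities appearing in \eqref{f1}--\eqref{f3} and in the CLT normalizations of Assumption \ref{AssCLT}. By the $L^2$-convergence statements in Assumption \ref{Ass2} (items i)--iii)) together with $\norcn{x^N}^2/N\to 1$ in $L^2$ (standard LLN with fourth-moment bound), each such term converges in $L^2(\pi^N)$ to the corresponding constant, so $\mathrm{Var}_x\ZN\to \mathrm{Var}\,Q=\ell^6/16+b$ in $L^2$; then one passes to the square root using the elementary inequality $|\sqrt{u}-\sqrt{v}|^2\le|u-v|$ for $u,v\ge 0$, which immediately upgrades the $L^1$ or $L^2$ convergence of the variances to \eqref{ve6}. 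The delicate points are (a) verifying that the conditional variance really is the sum of precisely these pieces with no leftover cross terms that fail to converge — this uses the skew-symmetry identities already exploited in Section \ref{sec:5} (e.g. $\lancn{\tS^Nx^N}{(\tS^N)^2x^N}=0$) and independence of the coordinate blocks granted by Condition \ref{Ass1}, which is what forces $c_i=d_i$; and (b) controlling the contribution of $\tilde e^N$ and the other remainder terms to the variance, which is where items (i), (iii), (v) feed back in. I would therefore organize the write-up so that (vi) invokes (i), (iii), (v) as already-proved inputs.
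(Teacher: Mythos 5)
Your strategy for items i), ii), iii) and v) — work in stationarity, expand and bound second moments term by term, and use Assumption \ref{Ass2} for the $\tS^N$-terms — is essentially the paper's argument (the paper outsources ii) to \cite{PST}; your direct computation of it is fine), and your plan for vi) (compute the conditional Gaussian variance of ${Z}^N$, control the cross terms via antisymmetry, pass to square roots with $|\sqrt u-\sqrt v|^2\le|u-v|$) is also the paper's route. The genuine gap is in iv). The $z^{j,N}$-dependent part of the dominant term $-2\sigma_N^{\alpha-1}\langle(\cC^N)^{1/2}z^N,S^Nx^N\rangle$ of ${Z}^N$ is $-2\sigma_N^{\alpha-1}\lambda_j z^{j,N}(S^Nx^N)^j$, so its contribution to $\sum_j\lambda_j^2\EE_x|{Z}^N_j|^2$ is
\begin{equation*}
4\,\sigma_N^{2(\alpha-1)}\sum_{j=1}^N\lambda_j^4\,|(S^Nx^N)^j|^2
=4\,\sigma_N^{2(\alpha-1)}\,\|\tS^Nx^N\|^2,
\end{equation*}
with the plain $\cH$-norm; in your bookkeeping the outer weight $\lambda_j^2$ has been dropped, turning this into $\sigma_N^{2(\alpha-1)}\norcn{\tS^Nx^N}^2$. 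This is not cosmetic: with your identification the $\pi^N$-expectation of the term is $\sigma_N^{2(\alpha-1)}N^{2(\alpha-1)\gamma}=\ell^{2(\alpha-1)}=O(1)$ for \emph{every} $\alpha$, there are no ``additional positive powers of $\sigma_N$'' attached to it, and $N^{1/3}\cdot O(1)$ does not vanish — so \eqref{lem914} is not established, and your own parenthetical (decay for $\alpha>2$) is contradicted by your $\alpha$-independent bound. The paper's proof keeps the extra $\lambda_j^2$ and then uses the operator bound \eqref{contCS} together with \eqref{finitemoments} to get $\Epin\|\tS^Nx^N\|^2\less\Epin\|x^N\|^2\less 1$ uniformly in $N$ (and likewise for the $(\tS^N)^2$ and $(\tS^N)^3$ terms), so the full factor $\sigma_N^{2(\alpha-1)}=N^{-(\alpha-1)/3}$ survives and $N^{1/3}N^{-(\alpha-1)/3}=N^{-(\alpha-2)/3}\ra 0$ exactly when $\alpha>2$; this is the mechanism your write-up is missing.

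Two smaller points on vi): $\mathrm{Var}_x{Z}^N$ involves only ${Z}^N$ of \eqref{tildeQN}, so the remainders $\tilde e^N$, $i^N$ play no role and your step (b) (feeding i), iii), v) into vi)) is unnecessary; and the limit is $\mathrm{Var}\,Q=\ell^6/16+b$ only when $\gamma=1/6$, while for $\gamma>1/6$ it is $b$ (the term $\frac{\ell^6}{16N^{6\gamma}}\norcn{x^N}^2$ then disappears). Also, the cross terms do not all vanish: the one pairing the $\sigma_N^{\alpha-1}$ and $\sigma_N^{3\alpha-1}$ pieces survives and contributes $4\ell^{2(2\alpha-1)}c_2$ (this is why $b$ carries the coefficient $5$ on $c_2$), while the one proportional to $\norcn{\tS^Nx^N}^2$ is $O(N^{-4\gamma})$ by Assumption \ref{Ass2}; a complete write-up should verify these two explicitly rather than assert that antisymmetry removes everything.
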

\begin{proof}[Proof of Lemma \ref{variousestimates}]Recall that, under $\pi^N$, $x^{i, N} \sim \lambda_i \rho^i$, where $\{\rho^i\}_{i \in \N}$ are i.i.d standard Gaussians.  We now consecutively prove all  the  statements of the lemma. \\
{\bf Proof of i).}  Notice that
$$
\Epin \lv \lancn{x^N}{\cCn^{1/2}z^N} \rv^2 = \EE \lv \sum_{i=1}^N \rho^i z^{i,N}\rv^2 =N \,.
$$
Therefore, since $\gamma \geq 1/6$,
\be\label{lve1}
\Epin \lv \frac{\ell^5}{8 N^{5\gamma}} \lancn{x^N}{\cCn^{1/2}z^N}\rv^2 \less N^{-2/3}.
\ee
Furthermore, using \eqref{f1} (which follows from point i) of Assumption \ref{Ass2}) and \eqref{constc1=},  we have
\be\label{lve2}
 \frac{1}{N^{(3+\alpha)2 \gamma}} \Epin \lv \langle \cCn^{1/2}z^N, S^Nx^N \rangle  \rv^2
\less N^{- 8 \gamma};
\ee
similarly,   using \eqref{f2} (which follows from point ii) of Assumption \ref{Ass2}) and \eqref{constc2=},
\be\label{lve3}
  \frac{1}{ N^{(2\alpha+1)2 \gamma}}
 \Epin \lv  \lanc{\tS \cCn^{1/2} z^N}{\tS^N x^N}\rv^2 \less N^{-4 \gamma} \,.
\ee
Now the first statement of the lemma is a consequence of \eqref{defeNN}-\eqref{tildeeN} and the above \eqref{lve1}, \eqref{lve2} and \eqref{lve3}. \\
{\bf Proof of ii).} This is proved in \cite{PST}, see calculations after \cite[equation (4.18)]{PST}, so we omit it. \\
{\bf Proof of iii).} This estimate follows again from Assumption \ref{Ass2}, once we observe that if $x \sim \pi^N$ then
$ \norcn{ \tS^N x^N}^2$ and  $\norcn{\tS^N \cCn^{1/2} z^N}^2$ are two independent random variables with the same distribution.
With this observation in place, we have
\begin{align*}
\Epin \lv \frac{ \norc{ \tS x^N}^2 - \norc{\tS \cC_N^{1/2} z^N}^2}{N^{2 \alpha \gamma}} \rv^2 & =
\Epin \lv \frac{ \norc{ \tS x^N}^2 }{N^{2\alpha \gamma}} - \frac{c_1}{N^{2 \gamma}}
+ \frac{c_1}{N^{2 \gamma}}
-   \frac{\norc{\tS \cC_N^{1/2} z^N}^2}{N^{2\alpha \gamma}}   \rv^2 \\
& \less \Epin \lv \frac{ \norc{ \tS x^N}^2 }{N^{2\alpha \gamma}} - \frac{c_1}{N^{2 \gamma}} \rv^2 = N^{-4\gamma}  \Epin \lv \frac{ \norc{ \tS x^N}^2 }{N^{2(\alpha-1)\gamma}} - c_1 \rv^2,
\end{align*}
which gives the claim. \\

{\bf Proof of iv).} We recall that ${Z}^N_j$ has been introduced in \eqref{defQj}. Using the antisymmetry of $S^N$ and the definition of $\tS^N$, we have
$$
-\lanc{\tS^N \cCn^{1/2}z^N}{\tS^N x^N}= \langle  \cCn^{1/2}z^N,  S^N \tS^N x^N\rangle
$$
and
$$
-\lancn{\tS^N \cCn^{1/2}z^N}{\tSn^2 x}=  \langle  \cCn^{1/2}z^N,  S^N \tSn^2 x^N\rangle .
$$
We can therefore write an explicit expression for ${Z}^N_j$:
\begin{align}
{Z}^N_j & = -\frac{\ell^3}{4 {N^{3\gamma}}} \frac{x^{j,N} z^{j,N}}{\lambda_j }
-2 \frac{\ell^{\alpha-1}}{N^{(\alpha-1)\gamma}} \lambda_j z^{j,N} (S^Nx^N)^j \nonumber \\
& + \frac{\ell^{2\alpha-1}}{N^{(2\alpha-1)\gamma}}  \lambda_j z^{j,N} (S^N\tS^N x^N)^j
+ \frac{\ell^{3\alpha-1}}{N^{(3\alpha-1)\gamma}}  \lambda_j z^{j,N} (S^N\tSn^2 x^N)^j.\label{tildeQNjcomp}
\end{align}
For the sake of clarity we stress again that in the above $(S^Nx^N)$ is an $N$-dimensional vector and
$(S^Nx^N)^j$ is the $j$-th component of such a vector. Therefore, recalling \eqref{contCS}, \eqref{for1}, \eqref{finitemoments} and setting $\gamma=1/6$, we have
\begin{align}
\sum_{j=1}^N \lambda_j^2 \Epin \lv {Z}^N_j\rv^2 & \less
\frac{1}{N} \sum_{j=1}^N \lambda_j^2 \EE\lv \rho^j z^{j,N}\rv^2
+ \frac{\Epin}{N^{(\alpha-1)/3}} \sum_{j=1}^N \lambda_j^4 \lv (S^Nx^N)^j\rv^2 \nonumber\\
& \quad+ \frac{\Epin}{N^{(2\alpha-1)/3}} \sum_{j=1}^N \lambda_j^4 \lv (S^N\tS^N x^N)^j\rv^2
+ \frac{\Epin}{N^{(3\alpha-1)/3}} \sum_{j=1}^N \lambda_j^4 \lv (S^N\tSn^2 x^N)^j\rv^2  \nonumber \\
& \less\frac{1}{N}+ \frac{1}{N^{(\alpha-1)/3}}\Epin \| \tilde{S}^N x^N\|^2
+ \frac{1}{N^{(2\alpha-1)/3}}\Epin \| \tSn^2 x^N\|^2 \nonumber\\
&\quad
+ \frac{1}{N^{(3\alpha-1)/3}}\Epin \| \tSn^3 x^N\|^2  \nonumber\\
& \less \frac{1}{N}+ \frac{1}{N^{(\alpha-1)/3}}\Epin \|x^N\|^2 \,. \nonumber
\end{align}
Therefore,
\begin{align*}
N^{1/3}\sum_{j=1}^N \lambda_j^2 \Epin \lv {Z}^N_j\rv^2 \less \frac{1}{N^{2/3}}+
\frac{1}{N^{(\alpha-2)/3}}
 \longrightarrow 0,
\quad \mbox{when } \alpha>2.
\end{align*}

\noindent
{\bf Proof of v).} Follows from Assumption \ref{Ass2}, from statement i) of this lemma and from \eqref{tildeeN}.

\noindent
{\bf Proof of vi).} From \eqref{tildeQN},
\begin{align*}
\mathrm{Var}_x({Z}^N) = & \EE_x\left\vert- \frac{\ell^3}{4 {N^{3\gamma}}}
 \lanc{x}{\cC^{1/2} z^N}
 -2 \frac{\ell^{\alpha-1}}{N^{(\alpha-1)\gamma}} \langle \cC^{1/2}z^N, Sx \rangle \right.
\nonumber \\
& \left. - \frac{\ell^{(2\alpha-1)}}{N^{(2\alpha-1)\gamma}} \lanc{\tS \cC_N^{1/2} z^N}{\tS x}
 - \frac{\ell^{(3\alpha-1)}}{N^{(3\alpha-1)\gamma}} \lanc{\tS \cC_N^{1/2} z^N}{\tS^2 x}\right\vert^2 \,.
\end{align*}
Therefore,
\begin{align*}
\mathrm{Var}_x({Z}^N) = & \frac{\ell^6}{16 N^{6\gamma}}\EE_x \norc{x}^2
+ 4 \frac{\ell^{2(\alpha-1)}}{N^{2(\alpha-1)\gamma}} \EE_x \nor{\cC_N^{1/2}S x}^2  \\
& + \frac{\ell^{2(2\alpha-1)}}{N^{2(2\alpha-1)\gamma}}
\EE_x \nor{\cC_N^{1/2}S\tS  x}^2  +
\frac{\ell^{2(3\alpha-1)}}{N^{2(3\alpha-1)\gamma}}
\EE_x \nor{\cC_N^{1/2}S\tS^2  x}^2 + \EE_x r^N
\end{align*}
where $r^N$ contains all the cross-products in the expansion of the variance. By direct calculation and using the antisymmetry of $S$, one finds that most of such cross products vanish and we have
$$
\EE_x r^N:= \frac{\ell^{(2\alpha+2)}}{2 {N^{3\gamma}}N^{(2\alpha-1)\gamma}}
 \langle Sx, \tilde{S} x \rangle + 4
\frac{\ell^{(4\alpha-2)}}{N^{2(2\alpha-1)\gamma}}\EE_x \norc{\tS^2 x}^2.
$$
Observe that
$$
 \langle Sx, \tilde{S} x \rangle = \norc{\tS x}^2;
$$
using this fact,  Assumption \ref{Ass2} implies that the first addend in the above expression for $\EE_x r^N$ vanishes as $N \ra \infty$. The second addend contributes instead to the limiting variance. Now straightforward calculations give the result.

\end{proof}

We recall the definitions
\begin{equation}\label{epsx}
\epsilon\sub^N(x):= \EE_x \left[\left(1 \wedge e^{Q^N} \right) \cC_N^{1/2}z^N \right]
\end{equation}
and
\begin{equation*}
h^N\sub(x):= \EE_x \left(1 \wedge e^{Q^N} \right).
\end{equation*}
\begin{lemma}\label{lemmapreliminaries}
Suppose that  Assumption \ref{AssCLT}, Assumption \ref{Ass2} and Condition \ref{Ass1} hold. Then
\begin{enumerate}
\item If $\alpha > 2$ and $\gamma= 1/6$,
\be\label{est11}
N^{1/3}\Epin \|\epsilon\sub^N(x)\|^2 \stackrel{N\ra \infty}{\longrightarrow}  0\, ;
\ee

\item if $1 \leq \alpha \leq 2$ and $\gamma \geq  1/6$ then
\be\label{est11prime}
\Epin \|N^{\gamma(\alpha-1)}\epsilon\sub^N(x) + 2 \ell^{\alpha-1} \nu\sub\, \tilde{S} x\|^2  \stackrel{N\ra \infty}{\longrightarrow} 0 \,
\ee
where the constant  $\nu\sub$ has been defined in \eqref{bfh}.
\item if $\alpha \geq 1$,  $\gamma \geq  1/6$ and $S^N$ is such that  $(c_1, c_2, c_3) \neq (0,0,0)$, then
\be\label{est12}
\Epin \lv h\sub^N(x)-h\sub\rv^2 \stackrel{N\ra \infty}{\longrightarrow} 0 \,;
\ee
\item finally, if $ 1 < \alpha < 2 $,  $\gamma>1/6$ and $S^N$ is such that $c_1=c_2=c_3=0$, then
\begin{equation*}
\Epin \lv h\sub^N(x)- 1\rv^2 \stackrel{N\ra \infty}{\longrightarrow} 0 \,,
\end{equation*}
i.e.  the constant $h\sub$ in \eqref{est12} is equal to one. This means that, as $N \ra \infty$,  the acceptance probability tends to one.
\end{enumerate}
\end{lemma}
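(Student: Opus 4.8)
The plan is to prove part (4) by reducing it to an $L^{2}(\pi^{N})$ estimate on the log-acceptance ratio $Q^{N}$ and then exploiting $c_{1}=c_{2}=c_{3}=0$. For every $q\in\R$ one has $0\le 1-(1\wedge e^{q})\le|q|$ — the lower bound because $1\wedge e^{q}\le 1$, the upper because $e^{q}\ge 1+q$ — so applying this pointwise and then using Jensen's inequality and the tower property gives $0\le 1-h^{N}\sub(x)=\EE_{x}[1-(1\wedge e^{Q^{N}})]\le\EE_{x}|Q^{N}|$ and hence $\Epin|h^{N}\sub(x)-1|^{2}\le\Epin(\EE_{x}|Q^{N}|)^{2}\le\Epin|Q^{N}|^{2}$. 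Thus it suffices to show $\Epin|Q^{N}|^{2}\to 0$ as $N\to\infty$ when $1<\alpha<2$, $\gamma>1/6$ and $c_{1}=c_{2}=c_{3}=0$. I would split $Q^{N}=\bar{Q}^{N}+Q^{N}_{\alpha}$ as in \eqref{QNdecomp} and bound the two pieces separately.

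For the reversible part, in stationarity $x^{i,N}\sim\lambda_{i}\rho^{i}$ with the $\rho^{i}$ i.i.d.\ standard Gaussian, so $\norcn{x^{N}}^{2}=\sum_{i}|\rho^{i}|^{2}$ and $\langle x^{N},(\cC^{N})^{1/2}z^{N}\rangle_{\cC^{N}}=\sum_{i}\rho^{i}z^{i,N}$, giving $\Epin|\langle x^{N},(\cC^{N})^{1/2}z^{N}\rangle_{\cC^{N}}|^{2}=N$, $\Epin(\norcn{x^{N}}^{2}-\|z^{N}\|^{2})^{2}\less N$ and $\Epin(\norcn{x^{N}}^{2})^{2}\less N^{2}$. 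Substituting these into the explicit expression for $\bar{Q}^{N}$ with $\sigma_{N}=\ell/N^{\gamma}$ yields $\Epin|\bar{Q}^{N}|^{2}\less N^{1-6\gamma}\to 0$ since $\gamma>1/6$.

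For the irreversible part, first recall from Remark~\ref{remonassumptions} that $c_{1}=c_{2}=c_{3}=0$ forces $d_{1}=d_{2}=d_{3}=0$. I would bound $\Epin|Q^{N}_{\alpha}|^{2}$ by the sum of the squared $L^{2}(\pi^{N})$-norms of the addends of $Q^{N}_{\alpha}$ in \eqref{4.14}--\eqref{4.17} and show each one tends to $0$. The addends proportional to $\lancn{\tS^{N}x^{N}}{(\tS^{N})^{2}x^{N}}$ (cf.\ \eqref{zerot}) vanish identically by the skew-symmetry of $S^{N}$. Every other addend is a power of $\sigma_{N}=\ell N^{-\gamma}$ times either a ``quadratic in $x$'' factor ($\norcn{\tS^{N}x^{N}}^{2}$, $\norcn{(\tS^{N})^{2}x^{N}}^{2}$, or $\norcn{(\tS^{N})^{T}(\tS^{N})^{2}x^{N}}$), whose appropriately normalized $L^{2}(\pi^{N})$-moment converges to $c_{1}$, $c_{2}$ or $c_{3}$ by \eqref{f1cond} and parts (ii)--(iii) of Assumption~\ref{Ass2}; or a ``bilinear in $(x,z)$'' factor ($\langle(\cC^{N})^{1/2}z^{N},S^{N}x^{N}\rangle$, $\lancn{\tS^{N}(\cC^{N})^{1/2}z^{N}}{\tS^{N}x^{N}}$, or $\lancn{\tS^{N}(\cC^{N})^{1/2}z^{N}}{(\tS^{N})^{2}x^{N}}$) whose second moment under $\pi^{N}$ coincides, by \eqref{constc1=}--\eqref{constc2=} and \eqref{asvar}, with the second moment of the corresponding quadratic factor, hence has the same normalized limit $c_{1}$, $c_{2}$ or $c_{3}$ by \eqref{f1}--\eqref{f3}. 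Since all these limiting constants vanish and the multiplying powers of $\sigma_{N}$ stay bounded (here $1<\alpha<2$, $\gamma>1/6$), every addend's $L^{2}(\pi^{N})$-norm goes to $0$ — in particular the addends dropped in the heuristic passage $Q^{N}_{\alpha}\simeq R^{N}_{\alpha}$ of \eqref{qaapprra}, which carry strictly higher $\sigma_{N}$-powers than retained terms that already vanish. Hence $\Epin|Q^{N}_{\alpha}|^{2}\to 0$, and combining, $\Epin|Q^{N}|^{2}\le 2\Epin|\bar{Q}^{N}|^{2}+2\Epin|Q^{N}_{\alpha}|^{2}\to 0$, which proves part (4).

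The main obstacle I anticipate is the bookkeeping of this last step: checking that, once $c_{i}=d_{i}=0$ is invoked, each of the many addends of \eqref{4.14}--\eqref{4.17} genuinely decays in $L^{2}(\pi^{N})$ (rather than merely being lower order than a retained $O(1)$ term that, here, also vanishes), and treating cleanly the addend involving $(\tS^{N})^{T}(\tS^{N})^{2}$, whose relevant moment is controlled only through part (iii) of Assumption~\ref{Ass2} and \eqref{f3} with $c_{3}=0$.
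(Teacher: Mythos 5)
Your argument addresses only part (4) of the lemma, and for that part it is correct and essentially identical to the paper's own proof: the paper also writes $\Epin \lv \EE_x (1\wedge e^{Q^N}) - (1\wedge e^0)\rv^2 \leq \Epin \lv Q^N\rv^2$ (Lipschitzness of $u \mapsto 1\wedge e^u$, which is your inequality $0\le 1-(1\wedge e^{q})\le |q|$ plus Jensen) and then notes that $\Epin\lv Q^N\rv^2 \to 0$ under $\gamma>1/6$ and $c_1=c_2=c_3=0$; your term-by-term bookkeeping of $\bar{Q}^N$ and of the addends of $Q^N_{\alpha}$ is a correct (and somewhat more explicit) justification of that last step, using Assumption \ref{Ass2} with vanishing constants and the identities \eqref{constc1=}--\eqref{constc2=}.

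The genuine gap is that the statement has four claims and you prove only the easiest one. Parts (1), (2) and (3) — namely \eqref{est11}, \eqref{est11prime} and \eqref{est12} — are the substantive content of the lemma and are not touched by your plan, nor would your method reach them: they are not consequences of an $L^2$ smallness of $Q^N$ (indeed in those regimes $Q^N$ is $O(1)$ and does not tend to zero). Part (1) requires the component-wise splitting $Q^N=Q^N_j+Q^N_{j,\perp}$ and the estimate $\|\epsilon\sub^N(x)\|^2 \less \sum_j \lambda_j^2\,\EE_x\lv Q_j^N\rv^2$, controlled via Lemma \ref{variousestimates}. Part (2) is the heart of the proof: one decomposes $Q^N=R^N+e^N+i_2^N$, isolates the $z^{j,N}$-dependence, integrates out $z^{j,N}$ exactly via the Gaussian identity of Lemma \ref{lemgaussmin3} to produce the limiting term $-2\ell^{\alpha-1}\nu\sub\tilde{S}x$, and then must show that the six remainder terms $T_0,\dots,T_5$ vanish in $L^2(\pi^N)$ — this is where Lemma \ref{lemmaTis} and the extra integrability hypotheses of Assumption \ref{extrassT2} (exponential moments and the $L^r$ bound) are genuinely needed. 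Part (3) requires identifying the limiting law of $Q^N$ (via \eqref{q=Q^N+estar}, the variance convergence \eqref{ve6} and the argument of \cite{PST}), not merely bounding it. Without these pieces the lemma as stated is not proved.
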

\begin{proof}[Proof of Lemma  \ref{lemmapreliminaries}]
$\bullet \,$ {\bf Proof of (i).} Acting as in \cite[page 2349]{PST}, we obtain
$$
\lv \langle \epsilon^N\sub (x), \varphi_j\rangle \rv^2 \less \lambda_j^2 \EE_x \lv Q_j^N\rv^2 \,.
$$
Taking the sum over $j$ on both sides of the above then gives
$$
\| \epsilon^N\sub (x)\|^2 \less  \sum_{j=1}^N\lambda_j^2 \EE_x \lv Q_j^N\rv^2 \,.
$$
Therefore, if we show
$$
N^{1/3} \sum_{j=1}^N\lambda_j^2 \EE_x \lv Q_j^N\rv^2 \stackrel{N \ra \infty}
{\longrightarrow} 0,
$$
\eqref{est11} follows. From  \eqref{defQj}, it is clear that the above is a consequence of
Lemma \ref{variousestimates} (in particular, it follows from \eqref{lem911}-\eqref{lem914}).

$\bullet \,$ {\bf Proof of (ii).}  Let us split $Q^N$ as follows:
$$
Q^N= R^N+ e^N+i_2^N,
$$
where $ e^N$ and $i_2^N$ are defined in \eqref{defeNN} and \eqref{defi2}, respectively, while
\be\label{defRNfora=2}
R^N:= I^N+i_1+B^N+H^N,
\ee
having set
\begin{align}
I^N  & :=   - \frac{\ell^6}{32 \,N^{6\gamma}} \norc{x^N}^2  -2 \frac{\ell^{2 (\alpha-1)}}{N^{2\gamma (\alpha-1)}}\norc{\tS x^N}^2
- \frac{\ell^{2 (2\alpha -1)}}{2 N^{2\gamma (2\alpha -1)}}\norc{\tS^2 x^N}^2   \label{capitalIN}\\
B^N  &:=      -2 \frac{\ell^{\alpha-1}}{N^{\gamma(\alpha-1)}} \langle \cC^{1/2}z^N, Sx^N \rangle
\label{defBN}\\
H^N &:=     - \frac{\ell^3}{4 {N^{3\gamma}}}
 \lanc{x}{\cC^{1/2} z^N}
- \frac{\ell^{(2\alpha-1)}}{ N^{\gamma(2\alpha-1)}} \lanc{\tS \cC_N^{1/2} z^N}{\tS x^N} \nonumber\\
& - \frac{\ell^{(3\alpha-1)}}{N^{\gamma(3\alpha-1)}} \lanc{\tS \cC_N^{1/2} z^N}{\tS^2 x^N} \,,
\label{defHN}
\end{align}
and $i_1^N$ is defind in \eqref{defi1}. The $j$-th component of $N^{\gamma(\alpha-1)} \epsilon\sub^N$ can be therefore expressed as follows:
\begin{align}\label{remT0}
N^{\gamma(\alpha-1)} \epsilon\sub^{j,N}=N^{\gamma(\alpha-1)}   \EE_x\left[ (1 \wedge e^{Q^N}) \lambda_j z^{j,N}\right] =
N^{\gamma(\alpha-1)}   \EE_x\left[ (1 \wedge e^{R^N}) \lambda_j z^{j,N}\right] + T_0^j
\end{align}
where $T_0^j:=\langle T_0, \varphi_j \rangle $ and
\be\label{defT0}
T_0:= N^{\gamma(\alpha-1)}   \EE_x\left[ \left((1 \wedge e^{Q^N}) - (1 \wedge e^{R^N})\right) \cC_N^{1/2} z^N  \right].
\ee
We now decompose $R^N$ into a component which depends on $z^{j,N}$, $R^N_j$, and a component that does not depend on $z^{j,N}$, $R^N_{j, \perp}$:
$$
R^N=R^N_j+ R^N_{j, \perp},
$$
with
$$
R^N_j:=(i_1)_j+(B^N)_j+(H^N)_j,
$$
having denoted by $ (i_1^N)_j, (B^N)_j$ and $(H^N)_j$  the part of
$i_1, B^N$ and $H^N$, respectively, that depend on $z^{j,N}$. That is,
$$
(i_1)_j=- \frac{\ell^4}{8N^{4 \gamma}} \lv z^{j,N}\rv^2;
$$
as for  $(H^N)_j$, it suffices to notice that
$$
(B^N)_j+(H^N)_j = {Z}^N_j,
$$
and the expression for ${Z}^N_j$ is detailed in \eqref{tildeQNjcomp} (just set $\alpha=2$ in \eqref{tildeQNjcomp}).
With this notation, from \eqref{remT0}, we further write
\be\label{step1}
N^{\gamma(\alpha-1)}   \EE_x\left[ (1 \wedge e^{Q^N}) \lambda_j z^j\right] =
N^{\gamma(\alpha-1)}   \EE_x\left[ (1 \wedge e^{[R^N -(i_1^N)_j)-H_j^N]} \lambda_j z^j\right] + T_0^j+ T_1^j
\ee
where, like before, $T_1^j:=\langle T_1, \varphi_j \rangle $ and
$$
T_1:= N^{\gamma(\alpha-1)} \ell  \EE_x\left[ \left((1 \wedge e^{R^N}) - \left(1 \wedge e^{R^N -(i_1^N)_j-H_j^N}\right)\right) \cC_N^{1/2} z^N  \right].
$$
We recall that the notation $\EE_x$ denotes expected value given $x$, where the expectation is taken over all the sources of noise contained in the integrand. In order to further evaluate the RHS of \eqref{step1} we calculate the expected value of the integrand with respect to the law of $z^j$ (we denote such expected value by $\EE^{z^j}$ and use $\EE^{z^j_-}$ to denote expectation with respect to $z^N\setminus z^{j,N}$); to this end, we  use the following lemma.
\begin{lemma}\label{lemgaussmin3}
If $G$ is a normally distributed random variable with $G \sim \mathcal{N}(0,1)$ then
$$
\EE \left[ G\left( 1 \wedge e^{\delta G + \mu}\right)\right]= \delta e^{\mu + \delta^2/2} \Phi\left( -\frac{\mu}{\lv \delta \rv}- \lv  \delta \rv \right),
$$
where $\Phi$ is the CDF of a standard Gaussian.
\end{lemma}
We apply the above lemma with $\mu= R^N_{j,\perp}$ and $\delta= \delta^B_j$, where
\begin{align*}
\delta^B_j &:= -2 \frac{\ell^{\alpha-1}}{N^{\gamma(\alpha-1)}} \lambda_j  (Sx^N)^j.
\end{align*}
We therefore obtain
\begin{align*}
N^{\gamma(\alpha-1)}  \lambda_j  \EE_x\left[ (1 \wedge e^{[R^N -(i_1^N)_j-H_j^N)]} z^j\right] & =
N^{\gamma(\alpha-1)}  \lambda_j  \EE_{x}^{z^j_-}\delta_j^B e^{R_{j,\perp}^N +(\delta_j^B)^2/2}
\Phi \left( - \frac{R_{j,\perp}^N}{\lv \delta_j^B\rv} - \lv \delta_j^B \rv \right) \nonumber\\
& =-2 \ell^{\alpha-1}  \lambda_j^2 (S^Nx^N)^j \EE_x^{z}
e^{R_{j,\perp}^N +(\delta_j^B)^2/2}
\Phi \left( - \frac{R_{j,\perp}^N}{\lv \delta_j^B\rv} - \lv \delta_j^B \rv \right) \nonumber\\
 &= -2 \ell^{\alpha-1} \lambda_j^2  \EE_{x}(Sx)^j  e^{R_{j,\perp}^N +(\delta_j^B)^2/2}\mathbf{1}_{\{R_{j,\perp}^N <0\}}+ T_2^j+T_3^j\\
&=-2 \ell^{\alpha-1} \lambda_j^2 \EE_{x} (Sx)^j  e^{\tilde{Q}^N }\mathbf{1}_{\{\tilde{Q}<0\}}+ T_2^j+T_3^j +T_4^j\\
&= -2 \ell^{\alpha-1} \lambda_j^2  \EE_{x}(Sx)^j  e^{{Q}^N }\mathbf{1}_{\{{Q}<0\}}+ T_2^j+T_3^j +T_4^j
+ T_5^j,
\end{align*}

\begin{align*}
T_2^j & :=   -2 \ell^{\alpha-1} \lambda_j^2  \EE_{x}(Sx)^j  e^{R_{j,\perp}^N +(\delta_j^B)^2/2} \left[\Phi \left( - \frac{R_{j,\perp}^N}{\lv \delta_j^B\rv} - \lv \delta_j^B \rv \right) - \Phi \left( - \frac{R_{j,\perp}^N}{\lv \delta_j^B\rv}  \right)  \right]
\end{align*}
and
\begin{align}
T_3^j & := -2 \ell^{\alpha-1} \lambda_j^2 \EE_{x} (Sx)^j  e^{R_{j,\perp}^N+(\delta_j^B)^2/2 }  \left[
 \Phi \left( - \frac{R_{j,\perp}^N}{\lv \delta_j^B\rv} \right) -  \mathbf{1}_{\{R_{j,\perp}^N <0\}}   \right]\nonumber
\\
T_4^j & := -2 \ell^{\alpha-1} \lambda_j^2 \EE_{x} (Sx)^j  \left[
e^{R_{j,\perp}^N +(\delta_j^B)^2/2 }  \mathbf{1}_{\{R_{j,\perp}^N <0\}} -  e^{{Q}^N }
\mathbf{1}_{\{{Q}^N<0\}}\right]\nonumber\\
T_5^j & := -2 \ell^{\alpha-1} \lambda_j^2 \EE_{x} (Sx)^j  \left[  e^{{Q}^N }\mathbf{1}_{\{{Q}^N<0\}}
- e^{{Q} }\mathbf{1}_{\{{Q}<0\}}\right]. \label{defT5}
\end{align}

To prove the statement it suffices to show that
$$
\Epin \sum_{n=0}^5\|T_n\|^2\ra 0 \quad \mbox{as } N \ra \infty
$$
These calculations are a a bit lengthy, so we gather the proof of the above in  Lemma \ref{lemmaTis} below. Assuming for the moment that the above is true, the proof is concluded after recognising that
$$-2 \ell^2 \lambda_j^2  (Sx)^j \EE  e^{{Q} }\mathbf{1}_{\{{Q}<0\}}= -2 \ell^2   (\tS^Nx^N)^j \nu\sub.
$$

$\bullet \,$ {\bf Proof of (iii).} By acting as in  the proof of \cite[Lemma 4.5 and Corollary 4.6]{PST} we see that \eqref{est12} is a consequence of \eqref{q=Q^N+estar},  \eqref{ve6} and of the following limit:
$$
\Epin \lv e^N_{\star}\rv^2 \longrightarrow 0  \quad \mbox{as } N \ra \infty.
$$
The above follows from the definition \eqref{estar},  Lemma \ref{variousestimates}, Assumption  \ref{Ass2} and
\cite[equation (4.7)]{PST}.

$\bullet \,$ {\bf Proof of (iv).} One could show this with the same procedure as in (iii). However, in this case things are easier, indeed we can write
\begin{align*}
\Epin \lv \EE_x (1\wedge e^{Q^N}) - (1\wedge e^0)\rv^2
\leq \Epin \lv Q^N\rv^2 \ra 0\,.
\end{align*}
The above limit follows simply by the assumption that $\gamma>1/6$ and $c_1=c_2=c_3=0$.
\end{proof}

\begin{lemma}\label{lemmaTis}
 If Assumption \ref{AssCLT}, Assumption \ref{Ass2},  Assumption \ref{extrassT2} and Condition \ref{Ass1} hold,  then
\begin{equation*}
\Epin \|T_n\|^2= \Epin \sum_{i=1}^N \lv T_n^j \rv^2 \stackrel{N \ra \infty}{\longrightarrow} 0, \quad
\mbox{for all } n \in \{0,1 \dd 5\},
\end{equation*}
where $T_n=\sum_{i=1}^N \langle T_n, \varphi_i\rangle \varphi_i$ and  the terms $T_0^j \dd T_5^j$ have been introduced in  \eqref{defT0}- \eqref{defT5}.
\end{lemma}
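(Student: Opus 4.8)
The plan is to bound $\Epin\sum_{j=1}^N|T_n^j|^2$ for each $n\in\{0,\dots,5\}$ separately, reducing every one of them to the moment and concentration estimates already established in Lemma \ref{variousestimates}, Assumption \ref{Ass2}, Assumption \ref{extrassT2} and the central limit statement of Assumption \ref{AssCLT}. The elementary facts used throughout are: (a) $g(a):=1\wedge e^a$ is bounded by $1$ and globally $1$-Lipschitz, with the sharper bound $|g(a)-g(b)|\le e^{a\vee b}|a-b|$; (b) $\Phi$ is bounded by $1$ and $(2\pi)^{-1/2}$-Lipschitz, with $|\Phi(-R/|\delta|)-\mathbf{1}_{\{R<0\}}|$ small once $|R|\gg|\delta|$; and (c) for $Z\sim\mathcal{N}(0,1)$, $\EE[Zh(Z)]=\EE[h'(Z)]$, so integration against $z^{j,N}$ annihilates any summand that does not depend on $z^{j,N}$. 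In each case $T_n^j$ is, after applying these facts, a product of a \emph{weight} --- which is $\lambda_j^2(S^Nx^N)^j=(\tS^Nx^N)^j$, and in $T_0,T_1,T_4$ also a factor $\lambda_j z^{j,N}$ --- times a \emph{small factor}; one applies Cauchy--Schwarz (or Hölder with the exponent $r>1$ of Assumption \ref{extrassT2}) to split the two, sums over $j$ using $\sum_j\lambda_j^2<\infty$ and $\sum_j|(\tS^Nx^N)^j|^2=\|\tS^Nx^N\|^2$, and then takes $\Epin$, controlling $\Epin\|\tS^Nx^N\|^2$ and its higher moments through \eqref{f1} and the Gaussianity of $x^N$ under $\pi^N$ (cf. \eqref{finitemoments}).

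I would treat $T_0$, $T_1$ and $T_4$ first, since these only involve dropping (resp. reinserting) the ``small'' pieces of $Q^N$: $T_0$ replaces $Q^N$ by $R^N$, i.e. removes $e^N+i_2^N$; $T_1$ removes the mode-$j$ contributions $(i_1^N)_j+H_j^N$; and $T_4$ puts back $(i_1^N)_j+H_j^N$ together with the correction $(\delta^B_j)^2/2$. Using the Lipschitz bound for $g$ followed by a Gaussian integration by parts in $z^{j,N}$ (fact (c)), these reduce to the $L^2$-estimates $\Epin|e^N|^2,\Epin|i_2^N|^2,\Epin|\tilde e^N|^2\to0$ and the weighted bounds $N^{1/3}\Epin\sum_j\lambda_j^2|(i_1^N)_j|^2\to0$, $N^{1/3}\Epin\sum_j\lambda_j^2|{Z}^N_j|^2\to0$ of Lemma \ref{variousestimates} (and their analogues valid for $1\le\alpha\le2$); the integration by parts is what produces the extra decay needed to beat the prefactor $N^{\gamma(\alpha-1)}$ when $\gamma$ is large. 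For $T_2$, the smallness $|\delta^B_j|\less N^{-\gamma(\alpha-1)}\lambda_j|(S^Nx^N)^j|$ and Lipschitzness of $\Phi$ give $|\Phi(-R^N_{j,\perp}/|\delta^B_j|-|\delta^B_j|)-\Phi(-R^N_{j,\perp}/|\delta^B_j|)|\le|\delta^B_j|/\sqrt{2\pi}$, with $e^{R^N_{j,\perp}+(\delta^B_j)^2/2}$ absorbed by the exponential moment in Assumption \ref{extrassT2} and the leftover factor $|(S^Nx^N)^j|$ handled by Hölder using Assumption \ref{extrassT2}(2). The term $T_3$, which replaces $\Phi(-R^N_{j,\perp}/|\delta^B_j|)$ by $\mathbf{1}_{\{R^N_{j,\perp}<0\}}$, is the delicate member of this group: I would split the expectation on $\{|R^N_{j,\perp}|\le N^{-\eta}\}$ and its complement for a small $\eta>0$, estimating the first piece by an anti-concentration (small-ball) bound for $R^N_{j,\perp}$ --- which follows from the non-degeneracy of the limiting variance in \eqref{ve6} and the CLT of Assumption \ref{AssCLT} --- and the second by a Gaussian tail bound together with $|\delta^B_j|\to0$.

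The remaining term $T_5$ is the genuine central-limit step and is where I expect the main difficulty to lie: it passes from the local quantity $\EE_x[e^{Q^N}\mathbf{1}_{\{Q^N<0\}}]$ to the deterministic constant $\nu\sub=\EE[e^Q\mathbf{1}_{\{Q<0\}}]$, so one must show
\[
\Epin\sum_{j}|T_5^j|^2 = 4\ell^{2(\alpha-1)}\,\Epin\!\left[\|\tS^N x^N\|^2\left(\EE_x\!\left[e^{Q^N}\mathbf{1}_{\{Q^N<0\}}\right]-\nu\sub\right)^2\right]\longrightarrow 0 .
\]
The obstacle is that the weight $\|\tS^Nx^N\|^2$ may itself grow with $N$, so it is not enough to know $\EE_x[e^{Q^N}\mathbf{1}_{\{Q^N<0\}}]\to\nu\sub$ in probability. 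My plan is to use the decomposition \eqref{q=Q^N+estar}, writing $Q^N={Z}^N+e_\star^N$ with $\Epin|e_\star^N|^2\to0$; approximate $\EE_x[e^{Q^N}\mathbf{1}_{\{Q^N<0\}}]$ by a smooth function of the conditional mean and variance of ${Z}^N$ given $x$ through Lemma \ref{lemgaussmin1}; control the non-Gaussianity of ${Z}^N$ by a quantitative (Berry--Esseen type) estimate, available because under Condition \ref{Ass1} the relevant scalar products are sums of independent summands with moments bounded via Assumption \ref{extrassT2}; and finally quantify the fluctuations of the conditional mean and variance around their limits using the $L^2$ law-of-large-numbers bounds of Assumption \ref{Ass2}, with Hölder and the moment bounds of Assumption \ref{extrassT2}(1)--(2) used to absorb the growing weight (the same mechanism also dispatches the slowly-decaying parts of $T_1$). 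Combining the six estimates yields $\Epin\|T_n\|^2\to0$ for all $n$, which is the assertion of the lemma.
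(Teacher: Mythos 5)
Your handling of $T_0$, $T_1$ and $T_2$ is essentially the paper's: Lipschitzness of $u\mapsto 1\wedge e^u$ together with statements iii) and v) of Lemma \ref{variousestimates} disposes of $T_0,T_1$, and for $T_2$ the Lipschitzness of $\Phi$, the bound $\lv\delta^B_j\rv\lesssim N^{-\gamma(\alpha-1)}\lambda_j\lv (Sx)^j\rv$, the exponential moment of Assumption \ref{extrassT2}(1) and H\"older with the exponent $r$ of Assumption \ref{extrassT2}(2) reproduce the paper's estimate. The remaining three terms are where your plan has genuine gaps. First, $T_4$ and $T_5$ involve the map $u\mapsto e^{u}\mathbf{1}_{\{u<0\}}$, which has a jump at $u=0$, so your proposal to treat $T_4$ ``like $T_0,T_1$'' by the Lipschitz bound for $1\wedge e^u$ plus Gaussian integration by parts does not go through: replacing $R^N_{j,\perp}+(\delta^B_j)^2/2$ by $Q^N$ inside the indicator cannot be controlled by an $L^2$ bound on the difference of the exponents alone. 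This is precisely why the paper defers $T_4$ and $T_5$ to the computations of \cite{KOS} (the terms $e^{i,N}_{3,k}$, $e^{i,N}_{5,k}$), which exploit the fact that, conditionally on $x$, the variable ${Z}^N$ of \eqref{tildeQN} is \emph{exactly} Gaussian (it is linear in $z^N$), so the indicator can be compared through the smoothness of the Gaussian density. For the same reason your Berry--Esseen step for $T_5$ is both unnecessary and unsupported: no quantitative CLT is available from Assumption \ref{AssCLT}, which is purely qualitative, and none is needed --- the correct route is Lemma \ref{lemgaussmin1} applied to the exact conditional law of ${Z}^N$, combined with statement vi) of Lemma \ref{variousestimates}, the bound on $e^N_\star$, and H\"older/moment bounds to absorb the weight $\|\tS^N x^N\|^2$, which is what the cited KOS argument implements.

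Second, your treatment of $T_3$ rests on an anti-concentration (small-ball) bound for $R^N_{j,\perp}$ that you claim follows from \eqref{ve6} and Assumption \ref{AssCLT}. But \eqref{ve6} only gives $L^2$-convergence of the conditional standard deviation to $(\mathrm{Var}\,Q)^{1/2}$, and in part of the regime covered by the lemma (namely $\gamma>1/6$ with $c_1=c_2=c_3=0$, which Lemma \ref{lemmapreliminaries} explicitly allows) one has $\mathrm{Var}\,Q=b=0$, so $R^N_{j,\perp}$ concentrates near $0$ and the event $\{\lv R^N_{j,\perp}\rv\le N^{-\eta}\}$ is not of small probability; the split you propose then fails exactly where the bracket must supply decay beating the weight $\Epin\|\tS^N x^N\|_{\cC^N}^2\sim c_1 N^{2\gamma(\alpha-1)}$. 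The paper avoids any small-ball input by bounding $\lv\Phi(-R^N_{j,\perp}/\lv\delta^B_j\rv)-\mathbf{1}_{\{R^N_{j,\perp}<0\}}\rv$ by a factor of the form $(\lambda_j\lv (Sx)^j\rv+1)/(\lv R^N\rv N^{\gamma(\alpha-1)}+1)$ as in \cite[Lemma 5.8]{MattinglyPillaiStuart2011}, and then trading this against the moment hypotheses of Assumption \ref{extrassT2} via H\"older (the estimate \eqref{tt33} and the subsequent terms $(II)_{1b}$--$(II)_{4b}$). So while your architecture for $T_0$--$T_2$ matches the paper, the arguments you sketch for $T_3$, $T_4$ and $T_5$ would need to be replaced by (or reduced to) the conditional-Gaussianity and $\Phi$-tail mechanisms the paper borrows from \cite{MattinglyPillaiStuart2011} and \cite{KOS}.
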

\begin{proof}[Proof of Lemma \ref{lemmaTis}] We consecutively prove the above limit for the terms $T_0^j \dd T_5^j$.

$\bullet$ Using the Lipshitzianity of the function $u \ra 1 \wedge e^u$, the result for $T_0$ follows from  the definition of $R^N$, equation \eqref{defRNfora=2}, and Lemma \ref{variousestimates}, statements iii) and v).  The result for $T_1$ can be obtained similarly.

$\bullet$ Term $T_2$: we using the the lipshitzianity of the function $\Phi$ and observe that the following holds
$$
\Epin e^{c (R^N_{j,\perp}+ \delta^2/2)} \less \Epin e^{c R^N}\less 1 \quad \mbox{for all } c>0.
$$
The above can be obtained with a reasoning similar to the one detailed in \cite[page 916 and (5.20)]{{MattinglyPillaiStuart2011}}, using (i) of Assumption \ref{extrassT2} . Using the above observations and applying the Hoelder inequality with the exponent $r$ appearing in (ii) of Assumption \ref{extrassT2}, one then gets,
$$
\Epin \|T_2\|^2 \less \left(\Epin \left( \sum_{j=1}^N \frac{\lambda_j^6 \lv (Sx)^j \rv^4}{N^{2\gamma(\alpha-1)}} \right)^r \right)^{1/r} \,.
$$

Therefore the term $T_2$ goes to zero by Assumption \ref{extrassT2}.

$\bullet$ The term $T_3$ can be treated with calculations completely analogous to those in \cite[Lemma 5.8]{MattinglyPillaiStuart2011}. As a result of such calculations,  using the fact that the noise $z^{j,N}$ is always independent on the current position $x$, and recalling equation \eqref{tildeQNjcomp},   we obtain that for any $r,q>1$ (to be later appropriately chosen), the following bound holds:
\begin{align} \label{tt33}
\Epin \| T_3\|^2 & \less \left\{ \Epin \left[  \sum_{j=1}^N \lambda_j^4 \lv (Sx)^j\rv^2
\left[ \EE_x \frac{(\lambda_j \lv (Sx)^j\rv +1 )}{\lv R^N \rv N^{\gamma(\alpha-1)}+1}
\cdot \left( 1+ \frac{\lv z^{j,N}\rv^2}{N^{4\gamma}}+ Z_j^N\right)
\right]^{2/q}
 \right]^r \right\}^{1/r}
\end{align}
Now set
$$
D_N:=  \left( \EE_x \frac{1}{(\lv R^N \rv N^{\gamma(\alpha-1)}+1)^2}
 \right)^{r/q},
$$
so that
\begin{align*}
 \Epin \| T_3\|^2 & \less \left\{ \Epin D_N  \left\{\sum_{j=1}^N \lambda_j^4 \lv (Sx)^j\rv^2\left[   \EE_x \left( \lambda_j \lv (Sx)^j\rv  +1 \right)^{2}
 \right]^{1/q}
  \right\}^r\right\}^{1/r} \quad\quad\quad\quad\hfill{(I)}\\
& + \left\{ \Epin D_N
\left\{\sum_{j=1}^N \lambda_j^4 \lv (Sx)^j\rv^2\left[   \EE_x \left( \lambda_j \lv (Sx)^j\rv  +1 \right)^{2} \lv Z^N_j\rv^2
 \right]^{1/q} \right\}^r
\right\}^{1/r} \,.  \quad\hfill{(II)}
\end{align*}
Notice that by the bounded convergence theorem, we have
\be\label{dnt0}
\Epin D_N \less \left( \frac{1}{N^{\gamma(\alpha-1)}}\right)^{r/q}.
\ee
With this observation it is easy to show that the term ($I$) tends to zero. It is less easy to show that ($II$) tends to zero, so for this term we detail calculations a bit more.
\begin{align*}
(II) & \less (\Epin D_N^2)^{1/(2r)}
\left( \Epin \left\{ \sum_{j=1}^N \lambda_j^4 \lv (Sx)^j\rv^2
\left[  \EE_x  \left[ (\lambda_j \lv (Sx)^j\rv  +1 ) \frac{\lv x^{j,N} z^{j,N}\rv}{N^{3\gamma}\lambda_j} \right]^2 \right]^{1/q}
    \right\}^{2r}   \right)^{1/(2r)}\\
& + (\Epin D_N^2)^{1/(2r)}
\left( \Epin \left\{ \sum_{j=1}^N \lambda_j^4 \lv (Sx)^j\rv^2
\left[  \EE_x  \left[ (\lambda_j \lv (Sx)^j\rv  +1 ) \frac{\lambda_j \lv (Sx)^j\rv \lv z^{j,N} \rv}{N^{\gamma (\alpha-1)}} \right]^2 \right]^{1/q}
    \right\}^{2r}   \right)^{1/(2r)}\\
&+ (\Epin D_N^2)^{1/(2r)}
\left( \Epin \left\{ \sum_{j=1}^N \lambda_j^4 \lv (Sx)^j\rv^2
\left[  \EE_x  \left[ (\lambda_j \lv (Sx)^j\rv  +1 ) \frac{\lambda_j \lv (S\tS x)^j\rv \lv z^{j,N} \rv}{N^{\gamma (2\alpha-1)}} \right]^2 \right]^{1/q}
    \right\}^{2r}   \right)^{1/(2r)}\\
& + (\Epin D_N^2)^{1/(2r)}
\left( \Epin \left\{ \sum_{j=1}^N \lambda_j^4 \lv (Sx)^j\rv^2
\left[  \EE_x  \left[ (\lambda_j \lv (Sx)^j\rv  +1 ) \frac{\lambda_j \lv (S \tS^2 x)^j\rv \lv z^{j,N} \rv}{N^{\gamma (3\alpha-1)}} \right]^2 \right]^{1/q}
    \right\}^{2r}   \right)^{1/(2r)}\,.
\end{align*}
We denote by ($II$)$_1$ to ($II$)$_4$ the terms in line 1 to 4 of the above array of equations and the scond factor in  line $i$ we denote by ($II$)$_{ib}$, so e.g.
$$
(II)_1= (\Epin D_N^2)^{1/(2r)} ((II)_{1b})^{1/(2r)},
$$
where
$$
(II)_{1b}:=  \Epin \left\{ \sum_{j=1}^N \lambda_j^4 \lv (Sx)^j\rv^2
\left[  \EE_x  \left[ (\lambda_j \lv (Sx)^j\rv  +1 ) \frac{\lv x^{j,N} z^{j,N}\rv}{N^{3\gamma}\lambda_j} \right]^2 \right]^{1/q}
    \right\}^{2r} .
$$
To streamline the presentation we have written the calculations leading to the above four addends in a way that it looks like the choice of $q$ should be the same for the four terms above. However, acting appropriately in the computations that give \eqref{tt33}, one can see that the $q$ does not need to be the same for each one of the above addends.
We show how to study ($II$)$_1$ and ($II$)$_3$, the other terms can be done with similar tricks. Starting from ($II$)$_1$, because of \eqref{dnt0}, we just need to prove that ($II$)$_{1b}$ is bounded. We will do slightly better in what follows. Recall that by assumption
\be\label{allmoments}
\Epin \sum_{j=1}^N \frac{\lambda_j^{2p} \lv (Sx)^j\rv^{2p}}{N^{2p\gamma(\alpha-1)}}\leq \Epin  \left(
\sum_{j=1}^N \frac{\lambda_j^{2} \lv (Sx)^j\rv^{2}}{N^{2\gamma(\alpha-1)}}  \right)^p < \infty.
\ee
Choosing $q=2$ in the definition of  ($II$)$_{1b}$ and recalling $x^{j,N} \sim \lambda_j \rho_j$,  we get
$$
(II)_{1b} = \Epin  \left(
\sum_{j=1}^N \frac{\lambda_j^{2} \lambda_j^3  \lv (Sx)^j\rv^{3}}{N^{3\gamma}}  \right)^{2r} \less \Epin \sum_{j=1}^N \frac{\lambda_j^{2} \lambda_j^{6r}  \lv (Sx)^j\rv^{6r}}{N^{6r\gamma}}   \longrightarrow 0,
$$
where in the last inequality we have used the weighted Jentsen's inequality (relying on the fact that $\{\lambda_j^2\}_j$ is summable) and the convergence of the RHS to zero follows from \eqref{allmoments}. The term ($II$)$_{2b}$ can be dealt with analogously, choosing $q=4$ (this time when applying the weighted Jentsen's inequality one should rely on the fact that the sequence $\{\lambda_j^4 \lv (Sx)^j\rv^2  \}_j $ is summable for every $x \in \cH$). Finally, to deal with ($II$)$_{3b}$, we use the fact that the sequence $\{(\tS^2 x)^j\}_j$ is, by assumpion, bounded for every $x \in \cH$. Therefore, choosing $q=2$ we have:
\begin{align*}
(II)_{3b} &  = \Epin \left( \sum_{j=1}^N
\lambda_j^4  \lv (Sx)^j\rv^{2}  \lambda_j  \lv (Sx)^j\rv \frac{\lambda_j  \lv (S \tS x)^j\rv}{N^{\gamma(2\alpha -1)}}
\right)^{2r}\\
&= \Epin \left( \sum_{j=1}^N
\frac{\lambda_j^4  \lv (Sx)^j\rv^3 }{N^{\gamma (2\alpha-1)}}  \lv (\tS^2 x)^j\rv
\right)^{2r}\\
& \leq  \Epin \left( \sum_{j=1}^N
\frac{\lambda_j^2 \lv (Sx)^j\rv^2}{N^{\gamma (2\alpha-1)}} \lambda_j^2\lv (Sx)^j\rv
\right)^{2r} \,.
\end{align*}
Because  $2\alpha \gamma -\gamma > 2\gamma (\alpha-1)$, the RHS of the above tends to zero by using \eqref{allmoments}. The term ($II$)$_{4b}$ can be dealt with in a completely analogous manner.

$\bullet$ The terms $T_4$ and $T_5$ can be studied similarly to what has been done in \cite{KOS}, see calculations from equation (8.31), in particular the terms $e^{i,N}_{3,k},e^{i,N}_{5,k}$.
\end{proof}


\bibliographystyle{plain}

\end{document}